\documentclass{article}

\usepackage{microtype}
\usepackage{graphicx}
\usepackage{subcaption}
\usepackage{booktabs} 

\usepackage[nodayofweek]{datetime}
\usepackage{mathtools}
\usepackage{calrsfs}
\DeclareMathAlphabet{\pazocal}{OMS}{zplm}{m}{n}
\usepackage{enumitem}

\makeatletter
\newcommand{\appendixtoc}{%
  \section*{Appendix contents}%
  \begingroup
    \parindent=0pt
    \parskip=2pt
    \IfFileExists{\jobname.atoc}{\input{\jobname.atoc}}{}
  \endgroup
}

\usepackage{microtype}

\newcommand{\addtoappendixtoc}[2]{%
  \begingroup
  \edef\atocindent{\ifcase#1\or0\or1.2\or2.4\fi}%
  \protected@write\@auxout{}{%
    \string\@writefile{atoc}{%
      \string\noexpand\atocline{\atocindent}{#2}{\thepage}%
    }%
  }%
  \endgroup
}

\newcommand{\atocline}[3]{%
  \hspace*{#1em}#2\dotfill #3\par
}
\makeatother

\newdateformat{mydate}{\twodigit{\THEDAY}{ }\shortmonthname[\THEMONTH], \THEYEAR}

\usepackage{hyperref}
\usepackage[capitalize,noabbrev]{cleveref}
\usepackage{nicefrac}

\usepackage{tikz}
\usetikzlibrary{calc}

\usepackage[accepted]{icml2026}

\usepackage{my_icml}

\usepackage{amsmath}
\usepackage{amssymb}
\usepackage{mathtools}
\usepackage{amsthm}

\usepackage{coffee4}

\theoremstyle{plain}
\newtheorem{theorem}{Theorem}[section]
\newtheorem{proposition}[theorem]{Proposition}
\newtheorem{lemma}[theorem]{Lemma}
\newtheorem{corollary}[theorem]{Corollary}
\theoremstyle{definition}
\newtheorem{definition}[theorem]{Definition}
\newtheorem{assumption}[theorem]{Assumption}
\theoremstyle{remark}
\newtheorem{remark}[theorem]{Remark}

\numberwithin{equation}{section}

\DeclareMathOperator{\Exp}{Exp}
\newcommand{\Expmap}[2]{\Exp_{#1}\!\left(#2\right)}

\usepackage[textsize=tiny]{todonotes}

\icmltitlerunning{  Hyperbolic Neural Population Geometry Benefits Computation}

\begin{document}

\twocolumn[
  \icmltitle{
  Hyperbolic Neural Population Geometry Benefits Computation
  }



  \icmlsetsymbol{equal}{*}

  \begin{icmlauthorlist}
    \icmlauthor{Dennis Wu}{cs,cmfg}
    \icmlauthor{Yi-Chun Hung}{cs}
    \icmlauthor{Braden Yuille}{isp}
    \icmlauthor{James E. Fitzgerald}{equal,nb,nitmb}
    \icmlauthor{Han Liu}{equal,cs,cmfg,stats}
  \end{icmlauthorlist}

  \icmlaffiliation{cs}{Department of Computer Science, Northwestern University}
  \icmlaffiliation{isp}{Integrated Science Program, Northwestern University}
  \icmlaffiliation{nb}{Departments of Neurobiology, Physics and Astronomy, and Engineering Sciences and Applied Mathematics, Northwestern University}
  \icmlaffiliation{nitmb}{NSF-Simons National Institute for Theory and Mathematics in Biology}
  \icmlaffiliation{stats}{Department of Statistics and Data Science, Northwestern University}
  \icmlaffiliation{cmfg}{Center for Foundation Models and Generative AI, Northwestern University}

  \icmlcorrespondingauthor{Dennis Wu}{hibb@u.northwestern.edu}

  \icmlkeywords{Machine Learning, ICML}

  \vskip 0.3in
]




\printAffiliationsAndNotice{\icmlEqualContribution}

\begin{abstract}

\textls[-10]
{Neural population geometry shapes downstream computation. 
Recent empirical findings in neurobiology suggest that a hyperbolic structure underlies population activity in the hippocampus.
Here we provide a theoretical framework for this phenomenon. 
First, we propose a plausible construction of hippocampal tuning curves that statistically induces hyperbolic geometry. 
Next, we establish a connection between neural decoding and associative memory by demonstrating that the Modern Hopfield Network update rule computes the minimum mean-squared-error (MMSE) estimator.
Finally, we introduce a novel associative memory model defined in hyperbolic space that yields significantly larger capacity than leading models. 
Our results suggest that animals encode spatial information as a latent hyperbolic cognitive map, improving both memory capacity and decoding accuracy.
}


\end{abstract}

\section{Introduction}


A central goal in understanding the brain is to see how neural activity patterns relate to animal behavior \cite{kriegeskorte2021neural}.
In recent years, advances in large-scale neural recordings have shifted the focus from individual neurons to the collective representations formed by large neural populations.
Consequently, there is growing interest in characterizing the neural population geometry induced by these activities \cite{de2019computational, kriegeskorte2021neural}.
This geometric perspective may reveal how animals store and process information \cite{gallego2017neural, kriegeskorte2019peeling, chung2021neural, mathis2024decoding}.

\looseness=-1
Similarly, in machine learning (ML), by understanding the latent representation of artificial neural networks (ANNs), we gain insight into how these models learn and store information \cite{bengio2013representation}.
Recent studies demonstrate how one can borrow insights from population geometry in neural systems to improve ML models \cite{chung2021neural, chou2025feature}.
A growing number of recent studies suggest that hyperbolic geometry emerges in a range of biological systems \cite{zhang2022hyperbolic, zhou2018hyperbolic, lee2024navigating, ghaninia2022hyperbolic}. However, these studies remain largely empirical. 
There has yet to be a theoretical framework that: \textbf{(i)} explains how hyperbolic geometry is induced by neural populations, \textbf{(ii)} characterizes its impact on downstream decoding, and \textbf{(iii)} provides design principles for machine learning models.

\looseness=-10
For \textbf{(i)}, we develop a plausible population coding model of the hippocampus that induces hyperbolic geometry.
Specifically, we show that when the widths of Gaussian place fields follow an exponential distribution, the induced semi-metric space is tree-like and statistically hyperbolic in Gromov's sense \cite{gromov1987hyperbolic}. 
Interestingly, this exponential distribution of place field widths matches the experimental observations in \citet{zhang2023hippocampal, rich2014large}.

For \textbf{(ii)}, we establish a formal link between neural decoding and associative memory. We show that the recall dynamics of modern Hopfield networks (MHNs) approximate the optimal Minimum Mean Square Error (MMSE) estimator.
Building on this insight, together with the hyperbolic geometry induced by the neural encoder, we propose a new associative memory model that operates directly in hyperbolic space.
Theoretically, we show that this model achieves substantial capacity improvements over previous models \cite{ramsauer2020hopfield, krotov2020large}.

Regarding \textbf{(iii)}, we demonstrate the utility of this framework for both pattern completion and general machine learning tasks. We show that our hyperbolic associative memory model outperforms existing memory models with superior accuracy on pattern completion.
Inspired by the connection between MHNs \cite{ramsauer2020hopfield} and the attention mechanism \cite{vaswani2017attention}, we introduce a hyperbolic memory module that integrates seamlessly into ML architectures.
Our simulations demonstrate that this model provides substantial performance gains in ML tasks.
Notably, these improvements are most pronounced when the hidden dimensionality is constrained, suggesting that hyperbolic geometry offers a more efficient representation space for information storage in low dimensions.

\annote{
\paragraph{Organization.}
The remainder of this paper is organized as follows. 
\cref{sec:neural-computing} introduces the encoding model and derives the Bayes-optimal decoder, which motivates our later connection between neural decoding and associative memory. 
\cref{sec:prelim} collects the geometric tools we need.
The heart of the paper is \cref{sec:hyperbolic-population}, where we show that exponentially-distributed place-field widths induce a statistically hyperbolic semi-metric (\cref{thm:hyperbolicity}).
Then we build a hyperbolic associative memory model with double-exponential capacity (\cref{thm:capacity}). 
\cref{sec:sim} validates these theoretical results through simulations on pattern completion and downstream ML tasks.
\cref{sec:disc} summarizes our findings and implications for computational neuroscience and machine learning.
Our limitations can be found in \cref{sec:lim}.
The overall logic is visualized in \cref{fig:1}.
The table of notations is in \cref{tab:nomenclature}.
}\label{sec:intro}
\begin{figure*}[t]
     \centering
    \includegraphics[width=\linewidth]{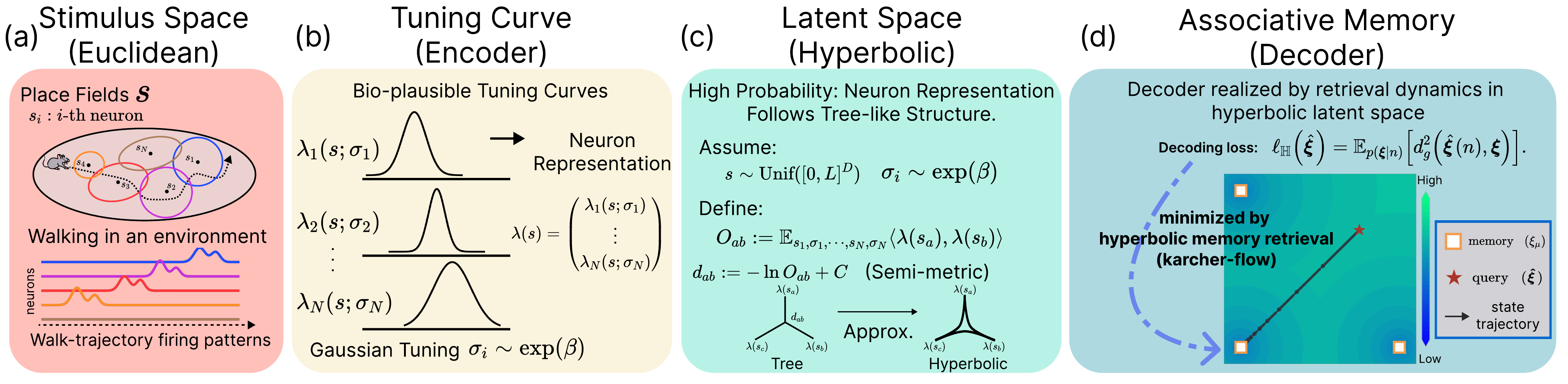}
    \caption{
    \textbf{(a) }
    Illustration of the stimulus space $\pS$ and place cell firing patterns.
    \textbf{(b) }
    Illustration of how neurons encode stimulus $s$ through tuning curves $\lambda$ and output $n(s)$.
    \textbf{(c) }
    Our constructed tuning curve model that induces hyperbolic geometry.\
    \textbf{(d) }
Illustration of how the MMSE estimator (decoder) is realized by the memory retrieval dynamics in latent hyperbolic space.
}
    \label{fig:1}
    \vspace{-1em}
\end{figure*}

\section{Neural Computing}\label{sec:neural-computing}

This section introduces a neural encoding model based on tuning curves and Poisson spiking, and derives the corresponding Bayes-optimal decoder.
We model spatial coding in the hippocampus and denote the space of possible locations by $\pS$.
A single location is denoted by $s \in \pS$.
Neural population activity is represented by a vector $n \in \R^N$.
The index $i \in [N]$ refers to neurons.
We use $\mu \in [M]$ to index latent states or discretized stimulus locations.

\subsection{Tuning Curve}

Neural population codes are commonly modeled using tuning curves \cite{dayan2005theoretical}.
Tuning curves are functions describing the firing rate for each neuron when given a stimulus $s$.
Each neuron responds preferentially to a subset of the stimulus space.
Let $s \in \pS$ denote the physical location of the animal.
We model the firing rate of a neuron $i$ in the hippocampus by a tuning curve
\begin{equation}
\label{eqn:tc}
    \lambda_i(s)
    =
    \sum_{k=1}^K
    \lambda_{ik}
    \cdot 
    \text{exp}
    \left(
    -
    \frac{\norm{s - s_{i,k}}^2_2}{2 \sigma_{ik}^2}
    \right),
\end{equation}
where $s_{ik}$ denotes the location of the $k$-th place field of neuron $i$, $\sigma_{ik}$ is its width, and $\lambda_{ik}$ is its scale.
Neural spiking is modeled as a Poisson process.
Given a time window of duration $T$, the spike count $n_i$ for neuron $i$ is modeled as:
\begin{equation}
    n_i \mid s
    \sim \text{Poisson}( \lambda_i(s)T).
\end{equation}
The population response is denoted by $n = (n_1, \cdots, n_N)$. 

We will henceforth assume that $K=1$ and $\lambda_i = \lambda_{\max}$, in which case Equation (\ref{eqn:tc}) corresponds to the case of uniform spatial tiling through single place fields, as is typical when animals explore a simple environment \cite{fenton2008unmasking}. Future work is needed to explore the multi-field case typical of large environments \cite{rich2014large}.
Statistically, one can consider this model as an encoder (generative model) that encodes the input stimulus $s$ into a high-dimensional noisy vector $n$, or into $\lambda(s) = (\lambda_1(s), \cdots, \lambda_N(s))$ without noise.
In the next subsection, we review standard statistical approaches to decoding under the Poisson tuning-curve model above, including MLE, MAP, and Bayes-optimal estimators such as the MMSE rule. 
We then relate these estimators to the update rule of modern Hopfield networks.



\subsection{From Bayes-Optimal Decoding to Recall}\label{subsec:bayes-optimal-decoding}

We first explain the standard decoding method for neural tuning curves.
Next, we establish an equivalence between the recall dynamics of associative memory models and decoding. 
Specifically, the intractability of Bayes-optimal decoding naturally motivates a tractable approximation in the form of memory recall.
This connection allows us to interpret memory retrieval as a decoding process.

\paragraph{Statistical Decoders.} 
From a statistical point of view, neurons encode the stimulus $s$ into a noisy high-dimensional code $n$.
Decoding can be formulated as inferring a latent stimulus $s$ from neuronal population activity $n$ under a prescribed loss function.
In computational neuroscience \cite{dayan2005theoretical, pillow2005prediction}, this process is typically formulated as maximum likelihood estimation (MLE) or
maximum a posteriori (MAP) estimation,
\begin{equation} \label{eq:MLE}
s^*_{\mathrm{MLE}}(n)
=
\argmax_{s}
\log p(n \mid s).
\end{equation}
\begin{equation} \label{eq:MAP}
s^*_{\mathrm{MAP}}(n)
=
\argmax_{s}
\log p(s \mid n).
\end{equation}
Another common choice is to minimize the expected squared loss, $\ell(\hat{s}) = \mathbb{E}_{p(s|n)}\left[ \lVert s - \hat{s}(n) \rVert_2^2\right]$. 
Under this loss, the Bayes-optimal estimator is given by the posterior mean \cite{hastie2009elements},
\begin{equation}
s^*_{\mathrm{MMSE}}(n)
=
\int_{\pS} p(s \mid n)\, s \, \mathrm{d}s,
\end{equation}
commonly referred to as the Minimum Mean Squared Error (MMSE) estimator.

Although $s^*_{\mathrm{MAP}}(n)$ and $s^*_{\mathrm{MMSE}}(n)$ are not equal in general, the Bernstein--von Mises theorem \cite{ghosh2003bayesian} shows that, as the population size $N$ becomes large, the MAP estimate concentrates around the MLE, which is asymptotically normally distributed. Consequently, under regularity conditions, the posterior mean $s^*_{\mathrm{MMSE}}(n)$ converges to $s^*_{\mathrm{MAP}}(n)$. This asymptotic behavior motivates us to link the update rule in an associative memory model to the posterior mean.


To compute $s_{\text{MMSE}}^*$, we discretize the stimulus space into $M$ grid points $\{ s_\mu \}_{\mu=1}^M$ and assume a uniform prior, i.e., $p(\mu) = \tfrac{1}{M}$.
Assuming conditional independence of spike counts across neurons given $s$, the posterior takes the form
\[
p( s_\mu \mid n)
=
\tfrac{e^{h_\mu (n)}}{\sum_{\nu} e^{h_\nu (n)}}
\coloneqq
\text{softmax}_\mu(
h_1(n), \cdots, h_M(n))
,
\]
where
$
h_\mu (n)
=
\sum_{i=1}^N
n_i \log \lambda_i(s_\mu)
$
are log-likelihood scores for each stimulus.
The Bayes-optimal decoder (fully derived in \cref{der:softmax}) is therefore
\begin{equation}\label{eqn:Bayes-optimal-discrete}
s^*(n)
=
\sum_{\mu=1}^M
\text{softmax}_\mu (h (n)) s_\mu.
\end{equation}
\paragraph{Modern Hopfield Networks. }
Modern Hopfield networks \cite{ramsauer2020hopfield, krotov2020large} (MHNs) are associative memory models defined over continuous state spaces.
Let $\psi^E : \R^N \rightarrow \R^d$ be an embedding map.
We define the network state as $v = \psi^E(\lambda(s))$ and stored memory patterns $\xi_\mu = \psi^E(\lambda(s_\mu))$.
Given a state $v$, the MHN update is
\begin{equation}\label{eqn:MHN-update}
\begin{aligned}
\mathrm{MHN}(v)
&\coloneqq 
\sum_{\mu=1}^M
\text{softmax}_\mu 
\bigl( h_1^{\text{MHN}}(v), \dots, h_M^{\text{MHN}}(v) \bigr) 
\xi_\mu .
\end{aligned}
\tag{MHN}
\end{equation}
where $h^{\text{MHN}}_\mu (v)=v^\top \xi_\mu = \braket{v, \xi_\mu}$. The shared form of \eqref{eqn:Bayes-optimal-discrete} and \eqref{eqn:MHN-update} implies that MHN is a probabilistic inference machine. As this relation plays a central role in our paper, we formalize it in Section~\ref{sec:hyperbolic-population}.
\begin{remark}
Here we introduce a novel notation $\psi^E$.
It denotes the non-linear map connecting the tuning curve encoder to MHN, which takes $\lambda(s)$ as input. 
This decoupling avoids imposing a strict biological constraint between encoder and decoder, while ensuring compatibility with the Boltzmann form assumed in Proposition~\ref{prop:MHN-Bayes-optimal}.
This mapping is not seen in the typical MHN literature, where existing works do not consider the case where inputs are generated by some encoder (tuning curves).
\end{remark}


One can now see that \eqref{eqn:MHN-update} and \eqref{eqn:Bayes-optimal-discrete} are structurally similar.
To understand the condition of having MHN to compute the Bayes-optimal estimator, we present the following proposition.
\begin{proposition}\label{prop:MHN-Bayes-optimal}
If the posterior takes the Boltzmann form
\begin{equation}\label{eqn:assumption-MHN}
p( \mu \mid v)
=
\frac{e^{\braket{v, \xi_\mu}}}{\sum_\nu e^{\braket{v, \xi_\nu}}},
\end{equation}
then the MHN update computes the Bayes-optimal estimator of the following problem
\begin{equation}\label{eqn:assumption-boltzmann}
    \mathrm{MHN}(v)
    =
    \argmin_{z \in \pV}
    \EE_{p(\mu \mid v)}
    \left[
    \norm{\xi_\mu
    - z}^2_2
    \right].
\end{equation}
\end{proposition}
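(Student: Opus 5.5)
The plan is to use the elementary fact that, under squared loss, the Bayes-optimal point estimate is the posterior mean. We then check that under the Boltzmann assumption \eqref{eqn:assumption-MHN} this posterior mean is exactly the right-hand side of \eqref{eqn:MHN-update}.

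\textbf{Step 1: expand the objective.} Fix $v$ and write $w_\mu \coloneqq p(\mu\mid v)$. These weights are nonnegative and satisfy $\sum_\mu w_\mu = 1$. Set
\[
\bar\xi \coloneqq \EE_{p(\mu\mid v)}[\xi_\mu] = \sum_{\mu=1}^M w_\mu \xi_\mu .
\]
For any $z \in \pV$, add and subtract $\bar\xi$ inside the norm:
\[
\EE_{p(\mu\mid v)}\bigl[\norm{\xi_\mu - z}_2^2\bigr]
= \EE_{p(\mu\mid v)}\bigl[\norm{\xi_\mu - \bar\xi}_2^2\bigr] + \norm{\bar\xi - z}_2^2 .
\]
The cross term vanishes because $\EE_{p(\mu\mid v)}[\xi_\mu - \bar\xi] = 0$.

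\textbf{Step 2: identify the minimizer.} The first term in the decomposition does not depend on $z$. The second term is a strictly convex function of $z$ and equals zero only at $z = \bar\xi$. Equivalently, setting the gradient $2(z - \bar\xi)$ to zero gives the same point. Hence $\bar\xi$ is the unique minimizer, provided $\bar\xi \in \pV$.

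\textbf{Step 3: match with the MHN update.} Substituting the Boltzmann form \eqref{eqn:assumption-MHN} gives $w_\mu = \mathrm{softmax}_\mu\bigl(h^{\mathrm{MHN}}_1(v),\dots,h^{\mathrm{MHN}}_M(v)\bigr)$, since $h^{\mathrm{MHN}}_\mu(v) = \braket{v,\xi_\mu}$. Therefore $\bar\xi = \mathrm{MHN}(v)$ by definition \eqref{eqn:MHN-update}, which is the claimed identity \eqref{eqn:assumption-boltzmann}. This mirrors the discrete Bayes-optimal decoder \eqref{eqn:Bayes-optimal-discrete}.

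The only real obstacle is the domain $\pV$, which the statement does not specify. The argument needs $\pV$ to contain the convex hull of the stored patterns $\{\xi_\mu\}$, for instance $\pV = \R^d$ or any convex set containing every $\xi_\mu$. I will state this assumption explicitly. If $\pV$ were nonconvex, the argmin would instead be the projection of $\bar\xi$ onto $\pV$, and the statement could fail. Under the natural reading $\pV = \R^d$, the proof is immediate.
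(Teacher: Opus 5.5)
Your proposal is correct and follows essentially the same route as the paper, which differentiates the convex objective, sets $-2\sum_\mu p(\mu\mid v)(\xi_\mu - z)=0$ to obtain the posterior mean, and then substitutes the Boltzmann weights to recover $\mathrm{MHN}(v)$. Your add-and-subtract decomposition also gives uniqueness and global optimality directly, and your caveat that $\pV$ must contain $\bar\xi$ (e.g.\ $\pV=\R^d$) states explicitly an assumption the paper leaves implicit.
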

The proof is in \cref{proof:prop:MHN-Bayes-optimal}.
Proposition~\ref{prop:MHN-Bayes-optimal} establishes a connection between Bayes-optimal decoding and MHN dynamics. 
Specifically, it demonstrates that under the assumption in \eqref{eqn:assumption-MHN} and the MMSE loss, a single MHN update effectively computes the posterior mean estimator.
Proposition~\ref{prop:MHN-Bayes-optimal} has a second consequence: associative memory can be viewed as an MMSE estimator whose loss respects the underlying geometry of $\lambda(s)$.
This view allows us to derive, in Section~\ref{sec:hyperbolic-population}, a non-Euclidean associative memory model.


\section{Preliminary on Hyperbolic Geometry}
\label{sec:prelim}

\cref{thm:hyperbolicity} and \cref{thm:capacity}, our main geometric results, require three notions from hyperbolic geometry: $\delta$-hyperbolic metric spaces, the hyperboloid model, and the Fréchet mean. 
Readers familiar with these can skim to \cref{sec:hyperbolic-population}; we collect them here for self-containedness.
For a deeper treatment we refer readers to \citet{jost2005riemannian, bridson2013metric}.

\paragraph{$\delta$-hyperbolic Metric Space.} The following notion measures the extent to which a given metric space resembles a hyperbolic space in terms of thin triangles.
See visualization in the appendix Figure~\ref{fig:delta-slim-triangle}.

\begin{definition}[$\delta$-thin triangles \cite{gromov1987hyperbolic}]\label{def:delta-thin}
    Let $(\pX, d)$ be a geodesic metric space.\footnote{A metric space $(\pX, d)$ is geodesic if there exists a geodesic between any two points $x, y \in \pX$.}
    A geodesic triangle $\Delta(x, y, z)$ is a triangle with sides 
    \[
    [x,y], \; [y,z], \; [z, x],
    \]
    where $[x, y]$ is the geodesic between $x$ and $y$.
    We say $\Delta(x,y,z)$ is $\delta$-thin if 
    \[
    \forall p \in [x, y],\quad 
    \min 
    \left\{
    d(p, [y, z]), \;
    d(p, [z, x])
    \right\}
    \leq \delta,
    \]
    and cyclic permutations.
\end{definition}

\begin{definition}[$\delta$-hyperbolic \cite{gromov1987hyperbolic}]\label{def:delta-hyp}
    A geodesic metric space $(\pX, d)$ is $\delta$-hyperbolic if every geodesic triangle in $\pX$ is $\delta$-thin.
    Equivalently for all $x,y,z,w \in \pX$,
    \begin{align}\label{eqn:4-point}
    d(x, z) &+ d(y, w) 
    \leq 
    \\
    &\max 
    \left\{ 
    d(x, y) + d(z, w), d(y, z) + d(x, w)
    \right\}
    + 
    2\delta,    \notag
    \end{align}
    which is called the 4-point condition.
\end{definition}
In general, low $\delta$ means the geodesic metric space is more tree-like, as tree triangles \cite{chiswell2001introduction} are $0$-hyperbolic.
\annote{
For a geodesic metric space to behave like a negatively curved space in Gromov's sense \cite{gromov1987hyperbolic}, it requires $\delta$ to be uniform, meaning every geodesic triangle is $\delta$-thin for the same $\delta$, independent of the size of the triangle.
Note that any geodesic metric space with finite domain is $\delta$-hyperbolic for some constant $\delta$. Thus, only an infinite domain metric space can fail to be $\delta$-hyperbolic.}
An indicator that the geodesic metric space is tree-like is that $\delta$ remains $\pO(1)$ as the diameter of the domain $L \rightarrow \infty$.
Next, we introduce a relaxed condition to Definition~\ref{def:delta-hyp}.

\looseness=-3
\paragraph{Hyperboloid Model.}
A $d$-dimensional hyperboloid (Lorentz) model is a Riemannian manifold $(\pM^d, g^d)$ equipped with the Riemannian metric tensor $g^d = \mathrm{diag}(-1, 1, \cdots, 1)$ and defined by constant negative curvature $\kappa < 0$, denoted as $\mathbb H^d_{\kappa}$.\footnote{This work considers simply connected Hadamard manifolds.} We will drop the superscript indices on $\mathbb H, g, \pM$ whenever $d$ is clear from the context or irrelevant.

Each point $\bx \in \mathbb H^d_{\kappa}$ has the parameterized form $[\bx_t, \bx_s]$, where $\bx_t \in \R$ and $\bx_s \in \R^d$.
$\mathbb H^d_{\kappa}$ is equipped with the Lorentz (Minkowski) inner product.
For points $\bx, \by \in \mathbb H^d_{\kappa}$, their inner product $\braket{\bx, \by}_L$ is given by 
\begin{equation*}
    \braket{\bx, \by}_L
    =
    -\bx_t \by_t
    +
    \bx_s^\top \by_s
    =
    \bx^\top g_{\kappa}^d \by,
\end{equation*}
with $\norm{\bx}_L \coloneqq \sqrt{ \vert \braket{\bx, \bx}_L \vert }$ being the Lorentzian norm.
Formally, $\pM^d$ is the set 
\begin{equation*}
    \pM^d
    =
    \left\{
    \bx \in \R^{d+1}
    :
    \braket{\bx, \bx}_L
    =
    \frac{1}{\kappa},
    \bx_t > 0
    \right\}.
\end{equation*}
The origin $\bo \in \mathbb H^d_{\kappa}$ is the point $[ \sqrt{-1/\kappa}, 0,\cdots, 0]^\top$.

We denote the distance of any two points $\bx, \by \in \pM$ as $d_g(\bx, \by)$.
Notably, in the hyperboloid model, hyperbolic distance can be expressed in terms of the Lorentz inner product as
\(
\cosh\big(\vert\kappa\vert^{\nicefrac{1}{2}} d_{g}(\bx,\by)\big)
=
-\kappa\,\braket{\bx,\by}_L .
\)

\paragraph{Tangent Space.}
The tangent space at a point $\bx \in \mathbb H^d_{\kappa}$ is the set of points orthogonal to $\bx$, defined as 
\begin{equation}
    T_{\bx} \mathbb H^d_{\kappa} 
    =
    \left\{
    \by \in \R^{d+1}
    :
    \braket{\bx, \by}_L = 0
    \right\},
\end{equation}
where the tangent space is isometric to the Euclidean space \cite{lee2018introduction}. If a manifold is smooth, it admits a tangent space at every point.
For any $\bx, \by \in \mathbb H^d_{\kappa}$ and $\mathtt v \in T_{\bx} \mathbb H^d_{\kappa}$, the exponential map $\Expmap{\bx}{\cdot}: T_{\bx} \mathbb H^d_{\kappa} \mapsto \mathbb H^d_{\kappa}$, and the inverse exponential map $\Exp^{-1}_{\bx}: \mathbb H^d_{\kappa} \mapsto T_{\bx} \mathbb H^d_{\kappa}$ are given by
\begin{equation*}
\begin{cases}
\Expmap{\bx}{\mathtt v}
&=
\cosh(\sqrt{-\kappa} \Vert \mathtt  v\Vert_L)\,\bx
+
\frac{\sinh(\sqrt{-\kappa} \Vert \mathtt  v\Vert_L)}{\sqrt{-\kappa} \Vert \mathtt  v\Vert_L} \mathtt v,
\\
\Exp^{-1}_{\bx}(\by)
&=
\frac{\cosh^{-1}(-\kappa\langle \bx,\by\rangle_L)}
{\sqrt{(-\kappa\langle \bx,\by\rangle_L)^2 - 1}}
\big(
\by + 
\kappa\langle \bx,\by\rangle_L \bx
\big).
\end{cases}
\end{equation*}
\annote{For a smooth manifold $\pM$ and a point $p \in \pM$, the tangent space $T_{p} \pM$ is the vector space that represents the first-order approximation of $\pM$ at $p$, which then describes the neighborhood of $p$ to first order.
$T_{p} \pM$ consists of all possible velocities of smooth curves through $p$.
To relate this linear structure back to the manifold, the exponential map sends each tangent vector $\mathtt{v}$ to the endpoint of the geodesic starting at $p$ with initial velocity $v$, evaluated at unit time. 
This map allows operations defined linearly on $T_p \pM$ to be transported to $\pM$ itself.
}
\paragraph{Fr\' echet mean.} Next, we define the Fr\' echet mean, which computes the geometric mean under general metrics.
\annote{
\begin{definition}[Fr\' echet mean]
    Given a set of points $ \pB = \{ \bx_{\mu} \}_{\mu=1}^M$ on a Riemannian manifold $(\pM, g)$.
    The Fr\' echet mean of them, denoted as $\mathrm{FM}(\pB)$, is defined as the solution and optimal values of the following optimization problem \citep{bacak2012computing}
    \begin{align*}
        \mathrm{FM}(\pB)
        \coloneqq 
        \argmin_{\bz \in \pM}
        \frac{1}{M}
        \sum_{\mu=1}^M
        d_g^2( \bx_{\mu}, \bz).
    \end{align*}
    Furthermore, given a set of weights $ \pW = \{w_1, \cdots, w_M\} \in \R^+$, the weighted Fr\' echet mean $\mathrm{WFM}(\pB, \pW)$ over $\pB$ is the solution to the following optimization problem
    \begin{align*}
        \mathrm{WFM}(\pB, \pW)
        \coloneqq 
        \argmin_{\bz \in \pM}
        \sum_{\mu=1}^M
        w_{\mu}
        \cdot 
        d_g^2( \bx_{\mu}, \bz).
    \end{align*}
\end{definition}
}\label{sec:2}
\section{Hyperbolic Neural Population Geometry}
\label{sec:hyperbolic-population}

In this section, we introduce a plausible tuning curve model that approximates hyperbolic geometry via neural population codes.
Our setup is inspired by the representational similarity analysis (RSA) \cite{kriegeskorte2008representational}.
We look at the pair-wise distance of neural activity patterns between different stimuli, and then study the semi-metric space\footnote{The space is semi-metric because the triangle inequality is not satisfied.} that underlies it.
Next, we discuss the corresponding decoder in hyperbolic geometry and propose a plausible hyperbolic associative memory model with large capacity.

\subsection{Geometry of Neural Population Activities}

Recent findings in neurobiology \cite{zhang2023hippocampal} show that in the CA1 region of the hippocampus, the place field sizes of the tuning curves $\sigma_{i\mu}$ are well approximated by an exponential distribution.
\citet{zhang2023hippocampal} discover that this distribution is in turn well approximated by uniform sampling in a hyperbolic ball, 
\begin{equation}\label{eqn:emp-find}
    p(\sigma) \approx \zeta  \cdot e^{-\zeta \sigma} \approx 
    \zeta \frac{\sinh(\zeta(\sigma_{\max}-\sigma) )}{\cosh(\zeta \sigma_{\max})-1},
\end{equation}
where $\sigma$ equals the distance from the origin $\bo$, $\sigma_{\max}\geq\sigma > 0$, and $\zeta=(N-1)\sqrt{|\kappa|}$, with $\kappa < 0$ and $N> 1$ as the curvature and dimensionality of the space.

Biologically, this corresponds to:
(a) most hippocampal place cells have small, localized place fields; and 
(b) a smaller number of neurons have progressively larger fields.
This implies that population activity is sparse and localized, as most neurons respond only in a small region around their preferred location.
In our next analysis, we will show that not only is the representation sparse, but the induced stimulus distance is also tree-like.

\paragraph{Stimulus Distance in Population Codes.}
Here we study the geometry induced by the exponential place-field size distribution under \eqref{eqn:tc}.
Define the population response vector given stimulus $s$ (assume noiseless), 
\[
\lambda(s) = ( \lambda_1(s), \cdots, \lambda_N(s)) \in \R^N.
\]
Let the stimulus space be $\pS = [0, L]^D$, equipped with the Euclidean metric.
Assuming a single place field with fixed amplitude, the Gaussian tuning curve of neuron $i$ is 
\[
\lambda_i(s) = \lambda_{\max}
\exp
\left(
-
\frac{ \norm{s - s_i}^2_2 }{2 \sigma_i^2}
\right),
\sigma_i \sim \Exp(\beta),
\]
where centers $\{ s_i\}_{i=1}^N$ are distributed according to a Poisson point process with density $\rho$ \cite{rich2014large}, and $\beta$ is the mean of the exponentially distributed place field sizes.
We measure distances between population responses $\lambda(s) \in \R^N$.
Assuming $\min_i \sigma_i > 0$, we define a semi-metric between stimuli as
\[
d_{ab} =  - \phi(  \braket{\lambda(s_a), \lambda(s_b)}
)
+
C
,
\quad 
s_a, s_b \in \pS,
\]
where $\phi$ is any increasing and continuous scalar function, and $C$ is an universal constant depending only on $\lambda_{\max}$, $N$ and the Lipschitz constant of $\phi$. \annote{We define stimulus distances, $d_{ab}$, to be functions of firing-rate inner products to match the standard RSA assumption~\cite{kriegeskorte2008representational}.
This makes our definition consistent with many other neuroscience studies.}
We mainly study the case where $\phi = \ln$.
\annote{This is for the simplicity of the proof; one can obtain a similar result using the Lipschitz constant of $\phi$.}

\annote{
Because the widths $\sigma_i$ are random variables, $d_{ab}$ is a \emph{semi-metric} rather than a metric. 
Concretely, for any three stimuli $s_a, s_b, s_c$, there exist realizations of $\{\sigma_i\}_{i=1}^N$ under which $d_{ab} + d_{bc} < d_{ac}$, violating the triangle inequality.
In order to gain insight from the hyperbolic structure of the neural population code, we introduce a definition that allows us to study the hyperbolicity of $d_{ab}$.}
\annote{
\begin{definition}[Statistically $\delta$-hyperbolic]\label{def:stat-hyp}
Let $\pS$ be a space equipped with a distance function $d$. 
We say $(\pS, d)$ is \emph{statistically $\delta$-hyperbolic} at confidence $1-\eta$ if there exists a constant $\delta > 0$ such that
$$
\Pr\!\Big[\,\Delta(s_x, s_y, s_z, s_w) \;>\; 2\delta\,\Big] \;<\; \eta,
$$
where $s_x, s_y, s_z, s_w$ are sampled independently and uniformly from $\pS$, and $\Delta$ denotes the empirical Gromov four-point excess in \eqref{eqn:4-point}.
\end{definition}
Unlike Definition~\ref{def:delta-hyp}, we relax the universal 4-point condition to a probabilistic one. 
Rather than requiring the 4-point condition to hold for \emph{every} choice of the quadruples, we only require it to hold for a $(1-\eta)$-fraction of quadruples drawn uniformly at random. 
This relaxation is motivated by the case where one needs to study neural population geometry under some observed overall statistics of the neurons.
}


For any quadruple $(s_a,s_b,s_c,s_d)$ uniformly i.i.d. sampled from $\pS$, let $L_{(1)} \geq L_{(2)} \geq L_{(3)}$ be the sorted values of the pair-sum set $\pP \coloneqq \{ d_{ab} + d_{cd}, d_{ac} + d_{bd}, d_{ad} + d_{bc} \}$. For convenience, we’ll write $\max_{(k)}A$ for the $k$-th largest member of $A$:
\[
L_{(1)} = \max\nolimits_{(1)} \pP,
\;\; 
L_{(2)} = \max\nolimits_{(2)} \pP,
\;\; 
L_{(3)} = \max\nolimits_{(3)} \pP.
\]
Following Definition~\ref{def:stat-hyp}, we define $\Delta$ as the difference between the two largest sums of opposite distances $\Delta(a,b,c,d) \coloneqq L_{(1)} - L_{(2)}$.
Recall that from Definition~\ref{def:stat-hyp}, a semi-metric space is said to be statistically $\delta$-hyperbolic if under uniform sampling in $\pS$, $\Delta \leq 2\delta$ with high probability.
Now we are ready to present our result on the induced inter-stimulus geometry.

\begin{theorem}\label{thm:hyperbolicity}
Let $s_a,s_b,s_c,s_d \in \pS$ be any quadruple each independently sampled uniformly from $\pS$.
For any $\eta > 0$ and $N=\pO((L/\beta)^D)$, there exists a constant $\delta(\beta, \rho)$ such that 
\[
\text{Pr}
[ \Delta(s_a,s_b,s_c,s_d) > 2 \delta ]
<
\eta.
\]
Furthermore, $\delta$ is non-trivial since $\lim_{L\rightarrow \infty} \frac{\delta}{L} = 0$. 
\end{theorem}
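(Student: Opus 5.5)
We work with $\phi=\ln$ and begin by computing the inner product $\braket{\lambda(s_a),\lambda(s_b)}$ in a tractable form. Completing the square in each Gaussian gives
\[
\lambda_i(s_a)\lambda_i(s_b)=\lambda_{\max}^2\exp\!\Big(-\tfrac{\norm{s_a-s_b}^2}{4\sigma_i^2}\Big)\exp\!\Big(-\tfrac{\norm{s_i-m_{ab}}^2}{\sigma_i^2}\Big),
\qquad m_{ab}=\tfrac{s_a+s_b}{2}.
\]
We then average over the Poisson centers with density $\rho$ and over $\sigma_i\sim\Exp(\beta)$. Up to boundary effects of $[0,L]^D$, which we control by noting that a uniform quadruple has its midpoints at distance $\Omega(L)$ from the boundary with probability $1-\eta/4$, this yields
\[
\EE\braket{\lambda(s_a),\lambda(s_b)}\approx \rho\lambda_{\max}^2\pi^{D/2}\int_0^\infty \sigma^D e^{-r^2/(4\sigma^2)}\tfrac{1}{\beta}e^{-\sigma/\beta}\,d\sigma=:F(r),\qquad r=\norm{s_a-s_b}.
\]
A Laplace-method evaluation of this integral (saddle at $\sigma^*\asymp (\beta r^2)^{1/3}$) gives
\[
-\ln F(r)=c_1\,(r^2/\beta^2)^{1/3}+\pO(\ln(r/\beta))+c_0,
\]
so the mean distance is $\bar d(r)=g(r)+\pO(\ln r)$ with $g(r)\propto (r/\beta)^{2/3}$ concave and sublinear. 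The constant $C$ absorbs $c_0$, and the $\ln\lambda_{\max}^2$ and $\ln N$ terms, so that $d\ge 0$.

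With the mean profile in hand, we bound the four-point excess of $\bar d$ deterministically. For a quadruple in $\R^D$, the three pair sums are $g(r_{ab})+g(r_{cd})$, etc. Since $g$ is increasing and concave with $g(r)=o(r)$, we bound $L_{(1)}-L_{(2)}$. Take the two largest sums, say those of the pairings $\{ac,bd\}$ and $\{ad,bc\}$. The Euclidean triangle inequality gives $r_{ac}\le r_{ad}+r_{dc}$ and similar relations. Combining these with subadditivity of concave $g$ (with $g(0)\ge0$), the difference of the two largest sums is bounded by $\pO(g(\min\text{-side}))$ in the worst case. For uniform quadruples this is at most $g(\pO(L))$, which gives a deterministic $\delta_0(L)=\pO((L/\beta)^{2/3}+\ln L)$. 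Hence $\delta_0/L\to0$, which is the non-triviality claim. To make $\delta$ depend only on $(\beta,\rho)$ in the stated probabilistic sense, we instead take $\delta=\delta_0$ evaluated at the $1-\eta/4$ quantile of the relevant side lengths. We accept that $\delta$ may grow sublinearly in $L$; the theorem only requires $\delta/L\to 0$.

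The third step is concentration of the random inner product around $F(r)$. The sum $\sum_i\lambda_i(s_a)\lambda_i(s_b)$ over a Poisson process is a Poisson functional. Its variance is $\rho\int(\cdot)^2\lesssim F(r)\lambda_{\max}^2$, so Chebyshev or a Poisson-functional Bernstein bound gives
\[
\Pr\big[|\braket{\lambda(s_a),\lambda(s_b)}/F(r)-1|>1/2\big]\le \eta/24
\]
for each of the six pairs whenever $\rho F(r)$, and hence the effective number of contributing neurons, is large. The condition $N=\pO((L/\beta)^D)$, i.e.\ the density is comparable to one neuron per $\beta$-cell, is what we use here to ensure this. On that event each $d_{ab}$ lies within $\ln 2$ of $\bar d(r_{ab})$, so $\Delta\le 2\delta_0+4\ln 2$. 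A union bound over the six pairs plus the boundary event then gives probability at least $1-\eta$, and we set $\delta=\delta_0+2\ln2$.

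We expect the main obstacle to be the concentration step for \emph{far} pairs. When $r\gg\beta$, $F(r)$ is dominated by rare wide fields in the exponential tail, so the relative fluctuations are not small. If the variance bound fails there, we fall back on a one-sided argument: only a lower bound on the inner product, i.e.\ an upper bound on $d$, is needed for the largest sum, and an upper bound via Markov's inequality for the second sum. This loses an additive $\ln(1/\eta)$, which can be absorbed into $\delta$.
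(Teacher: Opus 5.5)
There is a genuine gap in your third step, and it is structural rather than technical: $-\ln F(r)$ is an \emph{annealed} quantity, not the typical value of $d_{ab}$. Under $N=\mathcal{O}((L/\beta)^D)$, the widest realized field is $\max_i\sigma_i=\beta\ln N+\mathcal{O}(\beta)$ with high probability. Every summand obeys $\lambda_i(s_a)\lambda_i(s_b)\le\lambda_{\max}^2e^{-r^2/(4\sigma_i^2)}$, so
\[
d_{ab}\;\ge\;\frac{r^2}{4\max_i\sigma_i^2}-\ln N-2\ln\lambda_{\max}+C .
\]
For a typical pair ($r=\Theta(L)$) this is at least of order $L^2/(\beta\ln N)^2$, far above $\bar d(r)\asymp(L/\beta)^{2/3}$. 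Your Laplace saddle $\sigma^*\asymp(\beta r^2)^{1/3}$ exceeds every realized width except on an event of probability at most $Ne^{-\sigma^*/\beta}\to0$. So $F(r)$ is carried by a vanishing event, and no variance or Poisson--Bernstein bound can place $d_{ab}$ within $\ln 2$ of $\bar d$.

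Your fallback uses the wrong tail. Bounding the largest pair sum from above needs a \emph{lower} bound on the inner products, whereas Markov only bounds a nonnegative variable from above. Here the required lower-tail statement is in fact false.

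Independently, even with concentration, reducing $d$ to a radial profile cannot give the $L$-independent $\delta(\beta,\rho)$ the statement asks for. With $g\propto(r/\beta)^{2/3}$, a square of side $\ell$ has excess $2(g(\sqrt2\,\ell)-g(\ell))\asymp(\ell/\beta)^{2/3}$. By scaling, a constant fraction of uniform quadruples then have excess of order $(L/\beta)^{2/3}$. Taking quantiles of side lengths does not help, since those quantiles are $\Theta(L)$.

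The missing idea is the paper's choice of reference object. Instead of a radial mean, the paper compares $d$ with the single-neuron term of the widest field, $d_i(s,s')=-\ln(\lambda_i(s)\lambda_i(s'))$. With $\lambda_{\max}=1$, this equals $f(s)+f(s')$ for $f(s)=\|s-s_i\|^2/(2\sigma_i^2)$, so all three pair sums coincide and $d_i$ is exactly $0$-hyperbolic. Writing $d=d_i-\varepsilon$ with $\varepsilon=\ln\bigl(1+R_i/(\lambda_i(s)\lambda_i(s'))\bigr)\ge0$ gives $\Delta\le2\varepsilon_{\max}$ deterministically.

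Because this error is one-signed, only the \emph{upper} tail of $R_i=\sum_{j\ne i}\lambda_j(s)\lambda_j(s')$ matters, which is exactly what Campbell's theorem plus Markov provide. On $\mathcal{A}=\{\sigma_i\ge cL\}$ one has $\lambda_i(s)\lambda_i(s')\ge e^{-D/c^2}$ and $\mathbb{E}R_i\le\pi^{D/2}N\,\mathbb{E}[\sigma^D]/L^D$. This yields $2\delta=6e^{D/c^2}\pi^{D/2}\mu/\eta$ with $\mu=N\mathbb{E}[\sigma^D]/L^D$, which is free of $L$.

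That route still pays $\Pr[\mathcal{A}^c]\le\exp(-Ne^{-cL/\beta})$, which is small only when $N\gtrsim e^{cL/\beta}$, not under $N=\mathcal{O}((L/\beta)^D)$. So the wide-field obstruction you flagged for far pairs is real; in the paper's argument it resurfaces as that $\Pr[\mathcal{A}^c]$ term.
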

\begin{proof}[Proof Sketch]
    We compute the 4-point excess in Definition~\ref{def:delta-hyp} to show that $d_{ab}$ is statistically hyperbolic. The proof is given in \cref{proof:hyperbolicity}, and the discussion is provided in \cref{disc:hyperbolicity-distributions}.
    Note that one can also obtain a similar bound by replacing $\ln$ in $d_{ab}$ with any other increasing and continuous function that is Lipschitz.
    We choose the specific function $\ln$ for simplicity of the proof.
\end{proof}
\begin{remark}
    This theorem implies with a sufficient number of neurons $N$, the distance induced by neural population activity $\lambda(s) \in \R^N$ is tree-like.
    This structure arises from incorporating an exponential place-field size distribution into Gaussian tuning.
    This corresponds to the hippocampus encoding stimulus distances using a hyperbolic semi-metric.
    This is analogous to neurons representing spatial information as latent hyperbolic embedding.
    Not only is this result inspired by experimental findings, but it also provides a realizable construction of tree-like geometry induced by neural populations.
    \annote{Numerical simulations validating \cref{thm:hyperbolicity} can be found in \cref{app:hyper-sim} and its \cref{fig:emp-hyp}.}
\end{remark}


\subsection{Hyperbolic Decoder}
\annote{
\cref{thm:hyperbolicity} establishes that population activity induces a hyperbolic semi-metric on the stimuli. To mirror the construction in \cref{subsec:bayes-optimal-decoding}, where we moved from the    MMSE estimator to an associative memory model, we commit to a Riemannian manifold that is intrinsically $\delta$-hyperbolic: the hyperboloid (Lorentz) model $\mathbb{H}^d_\kappa$. 
This model has two properties we rely on. 
First, it is intrinsically $\delta$-hyperbolic.
Second, its Lorentz inner product encodes geodesic distance through a single linear operation, a property we exploit heavily in \cref{sec:amhg}. 
Together, these properties let us define the memory model as the MMSE estimator under geodesic rather than Euclidean loss.
}

Given a set of preferred stimuli $\{s_{\mu}\}_{\mu=1}^M$, assume there exists a function $\psi^H: \R^N \to \mathbb{H}^d_{\kappa}$ such that $\boldsymbol\xi_{\mu} = \psi^H(\lambda(s_\mu))$
\footnote{Here, we again assume the existence of a mapping between the neural encoder and decoder, as in \cref{sec:neural-computing}.}.
Under this latent model, information about $s$ may be inferred by estimating its latent representation $\boldsymbol{\xi}$.
The observed input query is $\boldsymbol{\xi}(s) = \psi^H(\lambda(s))$ generated from the firing-rate vector.
Decoding is formulated as a squared-loss estimation problem in $\mathbb H$.
The expected squared geodesic loss and the optimal estimator under it are defined as following \cite{pennec2006intrinsic}:
\begin{align}\label{eq:hyperbolic-risk}
    &\ell_{\mathbb{H}}
    \left( 
    \hat{ \boldsymbol{\xi}}
    \right)
    =
    \EE_{p( \boldsymbol{\xi} \mid n)}
    \left[
        d_g^2 \left(
            \hat{ \boldsymbol{\xi}}(n),
            \boldsymbol{\xi}
        \right)
    \right].
\end{align}
\begin{align}
    &\boldsymbol{\xi}^*_{\text{MMSE}}(n)
    =
\argmin_{\bz}
\int_{\boldsymbol{\xi} \in \mathbb{H}^d_{\kappa}}
p( \boldsymbol{\xi} \mid n)
\cdot 
d^2_g( \bz, \boldsymbol{\xi})
d \boldsymbol{\xi},
\label{eq:frechet-bayes-operator}
\end{align}
where \eqref{eq:frechet-bayes-operator} is considered as the posterior Fr\'echet mean.


Similar to Section~2.2, If we again discretize the domain $\pS$ into $M$ points $\{ s_\mu \}_{\mu=1}^M$, the optimal decoder under the same loss has the form:
\begin{equation}\label{eqn:hyperbolic-bayes-optimal-discrete}
\boldsymbol{\xi}^*(n)
=
\text{WFM}
\left(
\{ 
\boldsymbol{\xi}_{\mu}
\},
\{ 
p( \mu \mid n) 
\}
\right),
\quad 
\mu\in[M],
\end{equation}
where the posterior over memory indices, $
p(\mu \mid n)$, is discussed and defined in hyperbolic space in \cref{sec:amhg}.

\paragraph{From Fr\'echet mean to Karcher flow.}
The above optimal estimator is generally not available in closed form.
A common way to compute \eqref{eqn:hyperbolic-bayes-optimal-discrete} is to apply the Karcher flow algorithm \cite{karcher1977riemannian}.
Given some iterate $\boldsymbol{\xi}^t$ at iteration $t$, it approximates \eqref{eqn:hyperbolic-bayes-optimal-discrete} by iterating
\begin{equation}
\boldsymbol{\xi}^{t+1}
=
\Expmap{\boldsymbol{\xi}^t}{\sum_{\mu=1}^M 
p(\mu \mid n) \cdot 
\,\Exp^{-1}_{\boldsymbol{\xi}^t}(\boldsymbol{\xi}_\mu)
}
.
\label{eq:karcher-disc}
\end{equation}
Here the weights are the posterior, and the average is performed in the tangent space $T_{\boldsymbol{\xi}^t} \mathbb H$.
Karcher flow not only makes the decoder tractable, but it also approximates $\boldsymbol{\xi}^*$ via iterative updates and converges to the weighted frechet mean \cite{karcher1977riemannian}.
As we show in the next subsection, this corresponds to the recurrent update in associative memory.

\subsection{Associative Memory in Hyperbolic Geometry}
\label{sec:amhg}
Inspired by the connection between MHN and neural tuning curves, we now define the following update rule for the hyperbolic memory model.
\begin{definition}[Karcher-flow model]\label{def:HAM}
Given memory patterns $\{\boldsymbol\xi_\mu\}_{\mu=1}^M\subseteq\mathbb H^d_\kappa$, and some input query $\boldsymbol{v} \in \mathbb H^d_\kappa$.
The Karcher-flow model update
$ H :\mathbb H^d_\kappa\to\mathbb H^d_\kappa$ is defined by
\begin{equation}\label{eqn:geo-mhn}
H(\bv)
\coloneqq
\Expmap{\bv}{
\sum_{\mu=1}^M
w_\mu (\bv)
\cdot
\Exp^{-1}_{\bv}(\boldsymbol\xi_\mu)
},
\tag{KFM}
\end{equation}
where
$
w_\mu (\bv)
=
\tfrac{e^{\braket{\bv, \boldsymbol\xi_\mu}_L}}{\sum_\nu e^{\braket{\bv, \boldsymbol\xi_\nu}_L}}
$.
\end{definition}
Definition~\ref{def:HAM} has two major distinctions from the MHN.
First, \eqref{eqn:geo-mhn} is defined in $\mathbb H^d_{\kappa}$.
Second, we use the Lorentz inner product instead of Euclidean inner product.
This provides an advantage to KFM: the Lorentz inner product naturally encodes hyperbolic distance.
In contrast, the Euclidean inner product in general does not represent Euclidean distance well.
This allows our model to better distinguish patterns with similar angular directions but different norms with minimal cost, as $\braket{\cdot,\cdot}_L$ has the same complexity as $\braket{\cdot,\cdot}$.
Furthermore, the form of $\braket{\cdot,\cdot}_L$ computationally has the same structure as the $d_{ab}$ used in \cref{thm:hyperbolicity}.

Next, we introduce a similar setup to that of Proposition~\ref{prop:MHN-Bayes-optimal}.
Let the latent index be distributed, $\mu \sim p(\mu)$, and denote the random query in hyperbolic space as $\bv$.
Now we are ready to present the hyperbolic correspondence to Proposition~\ref{prop:MHN-Bayes-optimal}.
\begin{proposition}
    If we model the posterior over indices by the Boltzmann distribution
    \begin{equation}
    p(\mu \mid \bv)
    =
        \frac{e^{\beta  
    \braket{\bv, \boldsymbol\xi_\mu}_L
     }}{\sum_{\nu=1}^M e^{\beta
    \braket{\bv, \boldsymbol\xi_\nu}_L
    } },
    \end{equation}
    and 
    \begin{equation}\label{eqn:post-phi}
    \braket{\bv, \boldsymbol\xi_\mu}_L
    =
    \log p(\bv \mid \mu) 
    +
    \log p(\mu)
    +
    C(\bv).
    \end{equation}
    Then \eqref{eqn:geo-mhn} approximates the following estimator via the Karcher flow
    \begin{equation}
    H(\bv)
    \approx
    \xi^*(\bv)
    =
    \argmin_{z\in\pM}
    \EE_{p(\mu\mid v)}
    \big[
    d_{\pM}^2(\boldsymbol\xi_\mu,z)
    \big].
    \label{eqn:geom-boltzmann-risk}
    \end{equation}
\end{proposition}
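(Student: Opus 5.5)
The plan has three parts: identify the posterior, reduce the problem to a weighted Fréchet mean, and read (KFM) as one Karcher step toward it.

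\textbf{Step 1: the weights are the posterior.} By \eqref{eqn:post-phi} and Bayes' rule,
\[
p(\mu\mid\bv)=\frac{p(\bv\mid\mu)p(\mu)}{\sum_\nu p(\bv\mid\nu)p(\nu)}=\frac{e^{\braket{\bv,\boldsymbol\xi_\mu}_L}}{\sum_\nu e^{\braket{\bv,\boldsymbol\xi_\nu}_L}},
\]
because the $\bv$-dependent constant $C(\bv)$ cancels between numerator and denominator. This is the Boltzmann form with $\beta=1$; a general $\beta$ is absorbed by rescaling the likelihood, i.e. multiplying \eqref{eqn:post-phi} by $\beta$. Hence the weights $w_\mu(\bv)$ in Definition~\ref{def:HAM} coincide with $p(\mu\mid\bv)$.

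\textbf{Step 2: the estimator is a weighted Fréchet mean.} Since $\mu$ is discrete,
\[
\EE_{p(\mu\mid\bv)}\bigl[d^2_{\pM}(\boldsymbol\xi_\mu,z)\bigr]=\sum_\mu p(\mu\mid\bv)\,d_g^2(\boldsymbol\xi_\mu,z).
\]
So $\xi^*(\bv)=\mathrm{WFM}(\{\boldsymbol\xi_\mu\},\{p(\mu\mid\bv)\})$, which is exactly the hyperbolic analogue of Proposition~\ref{prop:MHN-Bayes-optimal}.

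Existence and uniqueness of this minimizer follow from the geometry of $\mathbb H^d_\kappa$. It is a Hadamard manifold, so $z\mapsto d_g^2(\boldsymbol\xi_\mu,z)$ is strictly geodesically convex. A positive combination of such functions is therefore strictly convex and coercive, giving a unique minimizer.

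\textbf{Step 3: (KFM) is one Karcher step from the query.} Let $F(z)=\tfrac12\sum_\mu w_\mu d_g^2(\boldsymbol\xi_\mu,z)$. On a Hadamard manifold the Riemannian gradient is
\[
\nabla F(z)=-\sum_\mu w_\mu\,\Exp^{-1}_z(\boldsymbol\xi_\mu).
\]
The Karcher flow \eqref{eq:karcher-disc} is Riemannian gradient descent on $F$ with unit step, $z^{t+1}=\Exp_{z^t}(-\nabla F(z^t))$. Taking $z^0=\bv$ and holding the weights fixed at $w_\mu(\bv)$, the first iterate is exactly $H(\bv)$. The stationarity condition $\nabla F=0$ characterizes $\xi^*(\bv)$ by Step 2.

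\textbf{Step 4: the approximation and the main obstacle.} The hard part is making ``$\approx$'' precise. I would argue in two regimes.

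\emph{Concentrated posterior.} When the posterior concentrates (large $\beta$, or $\bv$ near some $\boldsymbol\xi_\mu$), $\bv$ already lies close to $\xi^*$. One step then inherits the local linear convergence of Karcher flow.

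\emph{General case.} I would use the known contraction of the unit-step Karcher iteration on Hadamard manifolds. Write $z^1-\xi^*$ in normal coordinates at $\xi^*$ and compare with the Euclidean mean, for which a single step is exact. The error is controlled by curvature terms of order $|\kappa|\,\mathrm{diam}^2$ of the support of the memories that carry substantial weight.

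The outcome would be a bound $d_g(H(\bv),\xi^*(\bv))=\pO(|\kappa|\,r^3)$, where $r$ is the weighted spread of the memories around $\bv$. Equality holds as $\kappa\to0$ or when the memories lie on a single geodesic. If that bound proves hard to obtain, I would fall back to the weaker statement: $H$ is the first step of a scheme whose iterates converge to $\xi^*(\bv)$ (Karcher), and it is exact at fixed points.
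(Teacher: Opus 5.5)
Your Steps 1--3 are the paper's own justification, stated more carefully. The paper gives no separate proof and never quantifies ``$\approx$''. It identifies the minimizer of the posterior-weighted squared-geodesic risk with the weighted Fr\'echet mean \eqref{eqn:hyperbolic-bayes-optimal-discrete}, and reads \eqref{eqn:geo-mhn} as one step of \eqref{eq:karcher-disc} from $\bv$. It also remarks that the weights must be frozen at $w_\mu(\bv^{(0)})$ for the iteration to target $\xi^*(\bv)$.

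The problems are in Step 4, where you go beyond the paper. In the concentrated regime, a large $\beta$ (or large Lorentz margins) concentrates $w(\bv)$ wherever $\bv$ is. So concentration does not put $\bv$ near $\xi^*(\bv)$, and local convergence of Karcher flow cannot be invoked. You also do not need proximity. Let $\epsilon=\sum_{\nu\neq\mu}w_\nu(\bv)$ and $R(\bv)=\max_\rho d_g(\bv,\boldsymbol\xi_\rho)$. The identity $\Exp_{\bv}(\Exp^{-1}_{\bv}(\boldsymbol\xi_\mu))=\boldsymbol\xi_\mu$ gives $d_g(H(\bv),\boldsymbol\xi_\mu)\le 2r_{\max}L_{\mathrm{Exp}}(R(\bv))\,\epsilon$; this is exactly the argument of Proposition~\ref{prop:decoder-bound}. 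On the other side, $\xi^*$ lies in the convex hull of the memories, so $d_g(\xi^*,\boldsymbol\xi_\nu)\le 2r_{\max}$. Combined with the first-order condition $\sum_\nu w_\nu(\bv)\Exp^{-1}_{\xi^*}(\boldsymbol\xi_\nu)=0$, this gives $d_g(\xi^*(\bv),\boldsymbol\xi_\mu)\le 2r_{\max}\epsilon/(1-\epsilon)$. The triangle inequality then bounds $d_g(H(\bv),\xi^*(\bv))$ by $O(r_{\max}L_{\mathrm{Exp}}(R(\bv))\,\epsilon)$. The bound $\epsilon\le(M-1)e^{-\Gamma}$ is precisely what the margin analysis behind Theorem~\ref{thm:capacity} controls.

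In the general case, the unit-step Karcher iteration on $\mathbb H^d_\kappa$ is not a contraction. Write $\alpha=\sqrt{|\kappa|}$. The Hessian of $\tfrac12 d_g^2(\boldsymbol\xi,\cdot)$ at distance $r$ has eigenvalue $1$ radially and $\alpha r\coth(\alpha r)$ transversally. Hence the linearization $I-\mathrm{Hess}F(\xi^*)$ of $z\mapsto\Exp_z(-\nabla F(z))$ has an eigenvalue below $-1$ once the weighted transverse term exceeds $2$.

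Concretely, take two equally weighted memories at distance $a$ on either side of $\xi^*$, and a point at distance $h$ on the perpendicular bisector. One step moves it by $D\tanh(\alpha h)/\tanh(\alpha D)$, where $\cosh(\alpha D)=\cosh(\alpha h)\cosh(\alpha a)$. This always exceeds $h$, because $x\coth x$ is increasing. For small $h$ the new distance to $\xi^*$ is about $(\alpha a\coth(\alpha a)-1)h$, which exceeds $h$ once $\alpha a\gtrsim 1.92$.

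This example has three consequences. Your $O(|\kappa|r^3)$ bound can only be a small-$\alpha r$ Taylor estimate, and it is vacuous in the regime $\alpha r_{\min}\gg 1$ where the paper uses the model. Your fallback that the iterates converge to $\xi^*(\bv)$ (which the paper also asserts) needs small spread or a step size below $2/\lambda_{\max}(\mathrm{Hess}F)$. Finally, exactness for memories on a single geodesic fails unless $\bv$ also lies on that geodesic.
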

\begin{remark}
    One can consider the above proposition as the non-Euclidean version of Proposition~\ref{prop:MHN-Bayes-optimal}.
    However, different from Proposition~\ref{prop:MHN-Bayes-optimal}, our tractable update rule \eqref{eqn:geo-mhn} is not the exact optimal estimator.
    Instead, to converge to the optimal Fréchet mean estimator, one should consider fixed weights throughout the iterative updates.
    Given some initial state $\bv^{(0)}$, we set 
    \(
    w_\mu(\bv) = w_\mu(\bv^{(0)}).
    \)
    This reduces the computational cost for $w_\mu(\bv)$.
    However, in our later analysis, both update rules have the same order of stable fixed points under pattern separation (\cref{thm:capacity}).
    In the context of modern Hopfield networks, one can think of \eqref{eqn:geom-boltzmann-risk} as approximate minimization of the loss of the system, $\ell_{\mathbb{H}}(\hat{ \boldsymbol{\xi}})$, via memory retrieval dynamics given by \eqref{eqn:geo-mhn} .
\end{remark}\label{sec:3}
\begin{figure*}[t]
    \centering
    \includegraphics[width=\linewidth]{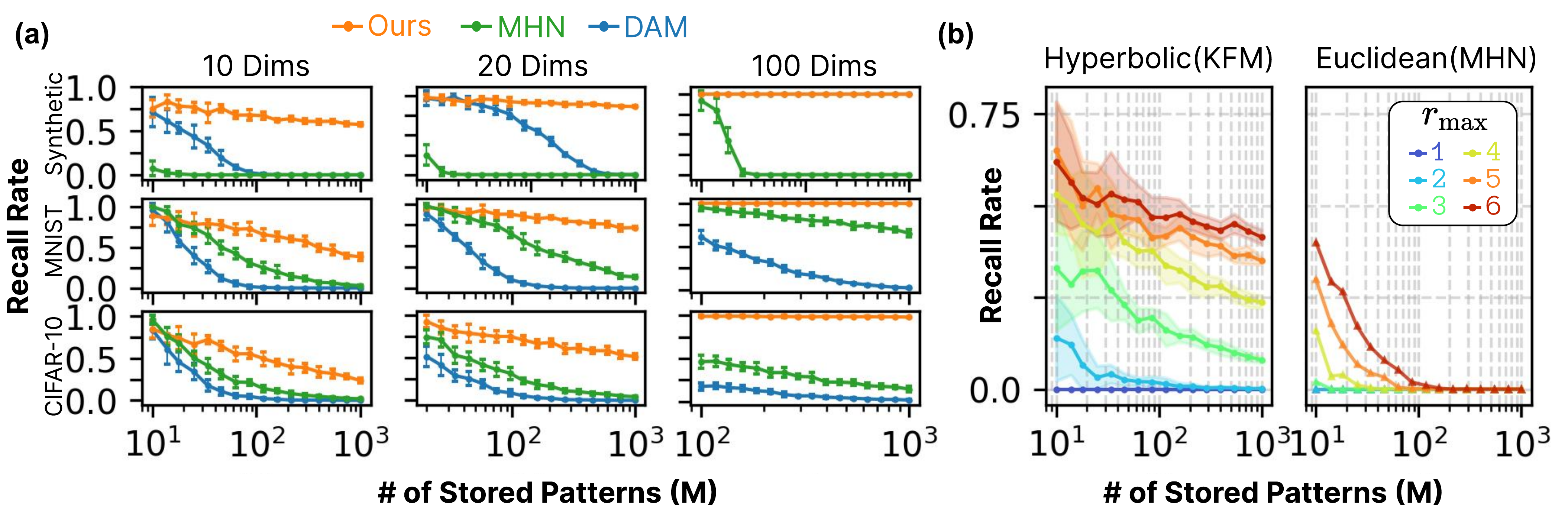}
    \vspace{-1em}
    \caption{
{    \textbf{(a) Left to right columns: }
    pattern dimension $d\in \{ 10, 20, 100\}$.
    \textbf{(a) Top to bottom rows:}
    Recall success rate of three models on \textit{synthetic, MNIST, CIFAR10} datasets.
    We observe that the Karcher-flow model outperforms other models on the synthetic, MNIST, and CIFAR10 datasets with superior scaling.
    \textbf{(b): Left to right: 
    }
    The recall rate of the Karcher-flow model and the MHN under different values of $r_{\max}$, when $d=3$, respectively.
    The stored patterns are sampled from a ball of radius $r_\mathrm{max}$.
    MHN shows marginal improvement on recall rate while our model benefits from larger $r_{\max}$.
    }
    }
    \label{fig:placeholder}
    \vspace{-1.2em}
\end{figure*}

\subsection{Pattern Completion}
\annote{
Here we study the problem of pattern completion under $\mathbb H^d_{\kappa}$.
Informally, we show that the Karcher-flow model has capacity that scales 
exponentially in $d$, the dimensionality of the hyperbolic embedding, and doubly exponential in $r_{\max}$, the largest norm among the memory patterns.
This capacity improves the capacity of the MHN by a rate of double exponential in $r_{\max}$.
We identify the Euclidean space $\R^d$ with the tangent space at the origin, $T_{\bo} \mathbb H^d_{\kappa}$.}

Let $\{x_\mu\}_{\mu=1}^M \subseteq T_{\bo} \mathbb H^d_{\kappa}$ be the stored patterns.
We assume that the pattern embeddings satisfy $\boldsymbol\xi_\mu = \Expmap{\bo}{x_\mu}$, for $\mu \in [M]$.
A query is generated by first sampling $\mu \sim \mathrm{Unif}([M])$.
Conditioned on $\mu$, we sample the query $\bv$ via
\[
\bv=\Exp_{ \boldsymbol{\xi}_{\mu} }(\mathtt v),
\qquad
\mathtt v=x_\mu+\boldsymbol{\sigma} z,
\qquad
z\sim\pN(0,\bI_d).
\]
Given a decoder $H : \mathbb H^d_{\kappa} \to \mathbb H^d_{\kappa}$ and a target tolerance $\varepsilon > 0$, we aim to control the recall success probability
\[
P_{\text{rec}}(\varepsilon)
\coloneqq 
\Pr\!\left[
d_{\mathbb H}^2\bigl(H(\bv), \boldsymbol\xi_\mu\bigr) \le \varepsilon
\right],
\]
where the probability is taken over $\mu$ and the query noise.
Note that the above condition matches the calculations used to define the memory capacity in classic work \cite{hopfield1982neural, hopfield1984neurons}. 

\begin{assumption}[Chernoff condition]\label{a:chernoff-euclid}
Fix distinct $\mu\neq\nu$. Under the conditional law $\mathtt v\,|\,\mu$, define
\[
\Delta^{\mathrm E}_{\mu\nu}(\mathtt v)
\coloneqq
\norm{ \mathtt v - s_\mu}^2_2
-
\norm{\mathtt v - s_\nu}_2^2.
\]
Assume there exist constants $\gamma_{\mathrm E}>0$, $K_{\mathrm E}<\infty$, and $d_0\in\mathbb N$ such that for all $d\ge d_0$:
\begin{enumerate}[label=(A\arabic*)]
\item $\ \mathbb E[\Delta^{\mathrm E}_{\mu\nu}(\mathtt v)\mid\mu]\ge \gamma_{\mathrm E} d$.
\item Conditioned on $\mu$,
$\Delta^{\mathrm E}_{\mu\nu}(\mathtt v)-\mathbb E[\Delta^{\mathrm E}_{\mu\nu}(\mathtt v)\mid\mu]$
is sub-Gaussian with proxy variance at most $K_{\mathrm E}\boldsymbol{\sigma}^2 d$.
\end{enumerate}
\end{assumption}


Note that both assumptions are naturally satisified when the memory patterns are uniformly distributed and noise is added as above.

\begin{theorem}\label{thm:capacity}
Let $\kappa<0$ and $\alpha=\sqrt{|\kappa|}$. 
Under Assumption~\ref{a:chernoff-euclid}, if both 
\[
    \boldsymbol{\sigma} \;=\; O\!\left(r_{\min}\cdot\min\!\left\{\frac{1}{\sqrt{d}},\;\sqrt{|\kappa|}\,e^{-\alpha r_{\min}}\right\}\right)
\]
and
\[
r_{\max}-r_{\min}=o\left(\frac{d}{\alpha r_{\min}^2}\right),
\quad 
\log M
=\Theta\!\left(\frac{d}{|\kappa|}\,\frac{e^{2\alpha r_{\min}}}{r_{\min}^2}\right)
\]
hold.
Then as $d \to \infty$,
    \[
    \lim_{d\rightarrow \infty}
    P_{\text{rec}}(\varepsilon)
    =1.
    \]
\end{theorem}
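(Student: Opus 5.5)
The plan is to reduce recall success to a separation event for the softmax weights $w_\mu(\bv)$ in \eqref{eqn:geo-mhn}. Then a union bound over the $M-1$ competitors, using the Chernoff condition of Assumption~\ref{a:chernoff-euclid}, gives the capacity.

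\textbf{Step 1 (error decomposition).} Write $H(\bv)=\Expmap{\bv}{\sum_\nu w_\nu \Exp^{-1}_{\bv}(\boldsymbol\xi_\nu)}$. If $w_\mu\ge 1-\epsilon'$, then the tangent vector is $\Exp^{-1}_{\bv}(\boldsymbol\xi_\mu)$ up to an error of size at most $\epsilon'\max_\nu d_g(\bv,\boldsymbol\xi_\nu)\le \epsilon'\cdot O(r_{\max}+\|\boldsymbol\sigma z\|)$. Since $\Exp_{\bv}$ is $1$-Lipschitz along radial directions and the curvature correction is controlled on the relevant ball, this gives $d_g(H(\bv),\boldsymbol\xi_\mu)\lesssim \epsilon' r_{\max}\cosh(\alpha r_{\max})$. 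It therefore suffices to show $\sum_{\nu\ne\mu} w_\nu\le\epsilon'$ with probability $\to1$, for $\epsilon'$ polynomially small in this factor. The hypothesis $r_{\max}-r_{\min}=o(d/(\alpha r_{\min}^2))$ is what keeps this factor negligible relative to the exponential concentration obtained below.

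\textbf{Step 2 (Lorentz score gap).} We have $\sum_{\nu\ne\mu} w_\nu/w_\mu=\sum_{\nu\neq\mu}\exp(\langle\bv,\boldsymbol\xi_\nu\rangle_L-\langle\bv,\boldsymbol\xi_\mu\rangle_L)$. Using $\langle\bx,\by\rangle_L=-\cosh(\alpha d_g)/|\kappa|$, each exponent equals
\[
\frac{1}{|\kappa|}\bigl(\cosh(\alpha d_g(\bv,\boldsymbol\xi_\mu))-\cosh(\alpha d_g(\bv,\boldsymbol\xi_\nu))\bigr).
\]
I will linearize this around the Euclidean picture in $T_{\bo}\mathbb H$. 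By the mean value theorem the difference of $\cosh$'s equals $\alpha\sinh(\alpha\bar d)\,(d_\mu-d_\nu)$. The distances are near $2r_{\min}$-scale, so $\sinh(\alpha \bar d)\gtrsim e^{\alpha r_{\min}}$. I will then compare $d_g$ with the tangent-space Euclidean distance through the comparison (CAT($\kappa$)) inequality $d_g\ge\|\cdot\|$, together with a $\cosh$ upper bound. This turns the gap into a positive multiple, roughly $e^{2\alpha r_{\min}}/(|\kappa| r_{\min}^2)$ after normalizing by the squared distance, of $\Delta^{\mathrm E}_{\mu\nu}$ plus an error. The noise condition $\boldsymbol\sigma=O(r_{\min}\sqrt{|\kappa|}e^{-\alpha r_{\min}})$ is exactly what makes the error from transporting the noise through $\Exp_{\boldsymbol\xi_\mu}$ lower order. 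The condition $\boldsymbol\sigma=O(r_{\min}/\sqrt d)$ keeps $\|\boldsymbol\sigma z\|=O(r_{\min})$ with high probability.

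\textbf{Step 3 (union bound).} By (A1)--(A2), $\Pr[\Delta^{\mathrm E}_{\mu\nu}<\gamma_{\mathrm E}d/2]\le \exp(-\gamma_{\mathrm E}^2 d/(8K_{\mathrm E}\boldsymbol\sigma^2))$. On the complement of these events, each term in the ratio sum is at most $\exp(-c\,d\,e^{2\alpha r_{\min}}/(|\kappa| r_{\min}^2))$. Summing over $M-1$ competitors with $\log M=\Theta(\cdot)$ of that same order, with a constant chosen below $c$, gives $\sum_{\nu\neq\mu}w_\nu\to0$ exponentially in $d$. The failure probability is at most $M e^{-\Omega(d r_{\min}^{-2}\boldsymbol\sigma^{-2}\cdot\ldots)}\to0$, again using the bound on $\boldsymbol\sigma$. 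Combined with Step 1, this yields $P_{\text{rec}}(\varepsilon)\to1$.

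\textbf{Main obstacle.} The hardest part is Step 2: obtaining a lower bound on the hyperbolic score gap, uniform over $\nu$, in terms of the Euclidean gap $\Delta^{\mathrm E}_{\mu\nu}$ with the $e^{2\alpha r_{\min}}$ amplification. The hyperbolic distortion is anisotropic. My first attempt will use the hyperbolic law of cosines at $\bo$, which gives $\cosh(\alpha d_g)$ explicitly in terms of the radii and the angle. I will then restrict to the radius band $[r_{\min},r_{\max}]$, where the radial terms nearly cancel by the $r_{\max}-r_{\min}$ hypothesis, so that only the angular term, with coefficient $\sinh^2(\alpha r)$, survives.
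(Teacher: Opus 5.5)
Your plan matches the paper's proof essentially step for step. The shared steps are:
\begin{itemize}
\item softmax concentration on a Lorentz-margin event, plus a Lipschitz bound for $\Exp_{\mathbf v}$ (Proposition~\ref{prop:decoder-bound});
\item the exact law-of-cosines (closed-form $\Exp_{\mathbf o}$) decomposition giving $\Delta^{\mathrm H}_{\mu\nu}\ge f_\kappa(r_{\min})\,\Delta^{\mathrm E}_{\mu\nu}-\widetilde{\mathsf{Pen}}_\kappa$, with the width hypothesis controlling the radial penalty (Lemmas~\ref{lem:lorentz-ip-diff-general} and~\ref{lem:margin-bridge});
\item a sub-Gaussian tail and a union bound over $\nu\neq\mu$, where $\boldsymbol\sigma^2 f_\kappa(r_{\min})=O(1)$ keeps the amplified variance proxy from swamping the amplified mean. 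Your thresholding of $\Delta^{\mathrm E}_{\mu\nu}$ at $\gamma_{\mathrm E}d/2$ is an equivalent way to run Proposition~\ref{prop:hyperbolic-chernoff}.
\end{itemize}

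You should drop the mean-value/CAT($\kappa$) heuristic at the start of your Step~2. It fails for two reasons:
\begin{itemize}
\item $\bar d$ need not be of order $r_{\min}$ uniformly in $\nu$;
\item a first-order distance comparison bounds $d_g(\mathbf v,\boldsymbol\xi_\nu)-d_g(\mathbf v,\boldsymbol\xi_\mu)$, not the squared gap $\Delta^{\mathrm E}_{\mu\nu}$ that Assumption~\ref{a:chernoff-euclid} controls.
\end{itemize}
The law-of-cosines identity in your ``main obstacle'' paragraph is what actually carries Step~2, as in the paper.

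Also note that the proof uses the query $\mathbf v=\Exp_{\mathbf o}(x_\mu+\boldsymbol\sigma z)$. So there is no noise-transport error, and the SNR condition enters only in Step~3, exactly where you already invoke it.
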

The proof can be found in Section~\ref{proof:capacity}. Intuitively, this says that memory retrieval is successful with high probability within a nonzero radius basin of attraction as long as the memory patterns have sufficient separation in their amplitudes.  
The capacity for different regimes of $\Delta_r$ is summarized in \cref{tab:capacity-regimes}.
\begin{remark}
    A notable difference between our assumptions and prior work \cite{ramsauer2020hopfield,santos2024sparse} is that we do not require memory patterns to be normalized.
    This relaxation is possible because the Lorentz inner product encodes $d_g(\cdot,\cdot)$ in $\mathbb H^d$, rather than angular similarity.
    This provides significant benefits for storing memories in hyperbolic space: the Lorentz inner product has the same computational complexity as Euclidean inner products, yet captures the nonlinear quantity $e^{d_g(\ba,\bb)}$.
    Furthermore, distinguishing patterns via the Lorentz inner product rather than the Euclidean inner product allows models to separate patterns with similar directions but different magnitudes.
\end{remark}

\label{sec:4}


\vspace{-1em}
\section{Simulations}\label{sec:sim}
We conduct simulations on pattern completion, image classification, and multiple instance learning. We compare our model with Dense Associative Memory (DAM) \cite{krotov2020large} and Modern Hopfield Networks (MHN) \cite{ramsauer2020hopfield}, two well-studied memory models for the continuous domain. 
For ML tasks, we compare against the Hopfield-based layers proposed in \cite{ramsauer2020hopfield}.
Source code is available \href{https://github.com/Fitzgerald-Lab-NU/hyperbolic-tuning-curve}{on GitHub}.

\subsection{Pattern Completion}
The task involves recalling a memory pattern from a stored memory set.
We aim to reconstruct memories based on a query that is a corrupted/noisy version of the target memory.
We assume that all memory patterns lie in $T_{\bo} \mathbb H$.
We also conduct analysis where we fix $d=3$, where the state pattern is represented by only $3$ neurons.
We observe how different values of $r_{\max}$ affect memory capacity.
\vspace{-0.5em}

\begin{table*}[t]
    \centering
    \caption{Performance comparison across MNIST and MIL benchmarks. MNIST results use \texttt{KFAttention} and \texttt{Hopfield} attention mechanisms evaluated across hidden dimensions $d$, while MIL results use the corresponding pooling variants (\texttt{KFPooling} and \texttt{HopfieldPooling}) on three animal datasets. Values are reported as mean $\pm$ standard deviation.}
    \label{table:mnist-dim-mil}
    \small
    \begin{tabular}{lcccccc}
        \toprule
        & \multicolumn{3}{c}{MNIST (\% accuracy)} & \multicolumn{3}{c}{MIL (AUC)} \\
        \cmidrule(lr){2-4} \cmidrule(lr){5-7}
        Model 
        & $d=4$ & $d=8$ & $d=32$
        & Tiger & Fox & Elephant \\
        \midrule
        \texttt{KarcherFlow} 
        & \textbf{85.52 $\pm$ 2.50} 
        & \textbf{92.42 $\pm$ 0.64} 
        & \textbf{96.89 $\pm$ 0.13}
        & 87.34 $\pm$ 1.40
        & \textbf{66.00 $\pm$ 2.32}
        & 91.20 $\pm$ 0.82 \\
        \texttt{Hopfield} 
        & 83.70 $\pm$ 2.12 
        & 92.29 $\pm$ 0.63 
        & 96.71 $\pm$ 0.21
        & 83.52 $\pm$ 1.50
        & 60.54 $\pm$ 2.58
        & 91.65 $\pm$ 0.42 \\
\cite{gulcehre2018hyperbolic} 
        & 84.85 $\pm$ 1.85
        & 91.71 $\pm$ 1.07
        & 96.80 $\pm$ 0.18
        & \textbf{89.20 $\pm$ 0.94}
        & 62.92 $\pm$ 2.41
        & \textbf{93.04 $\pm$ 0.42} \\
\cite{shimizu2021hyperbolic} 
        & 67.35 $\pm$ 5.76
        & 84.17 $\pm$ 7.21
        & 84.17 $\pm$ 6.12
        & 80.32 $\pm$ 5.11
        & 57.76 $\pm$ 4.84
        & 85.32 $\pm$ 3.43
        \\
        \bottomrule
    \end{tabular}
    \vspace{-1em}
\end{table*}

\looseness=-2
\paragraph{Setup and Evaluation Metrics.}
For memory patterns, we utilize: 
(1) synthetic points uniformly sampled within a tangent ball, 
(2) MNIST \cite{lecun2002gradient}, 
and (3) CIFAR10 \cite{Krizhevsky2014CIFAR}.
For the Karcher-flow model, we follow the query generation process described in Section~\ref{sec:hyperbolic-population} to generate the corresponding queries in $\mathbb H$.
For each trial, we sweep through different values of $M$ (the memory set size) and perform pattern completion, treating each pattern in the set as a target in turn.
A successful recall means the retrieval error is less than a tolerance $0.01$ under either hyperbolic or Euclidean metric.
The recall success rate is then calculated for each value of $M$.

\vspace{-0.5em}
\paragraph{Results.}
The simulation results are presented in \cref{fig:placeholder}. In \cref{fig:placeholder}(a), we report results when keeping $d \in \{10, 20, 100\}$ PCA dimensions with $r_{\max} = 3$. We repeated the task across 10 different random seeds and report the mean and standard deviation. For \cref{fig:placeholder}(b), we vary $r_{\max}$ from $1$ to $6$. 
We observe that our Karcher-flow model demonstrates a high recall success rate, whereas the other two baselines fail to store even a small number of patterns in low-dimensional settings. As shown in the left panel of \cref{fig:placeholder}(b), our model exhibits significant capacity improvement as $r_\mathrm{max}$ increases. 
In contrast, the results in the right panel of \cref{fig:placeholder}(b) show that MHN do not benefit much from this rescaling operation. 
Additional results are provided in \cref{appendix:add-pc}.

\subsection{Classification}\label{simulations:classification}

Inspired by \citet{ramsauer2020hopfield}, who parameterized MHNs and proposed various machine learning layers, we describe our proposed layers in \cref{appendix:ml-layers}. 
Interestingly, we can construct these layers without any hyperbolic parameters. 
This allows our model to benefit from superior capacity scaling while still utilizing Euclidean optimizers.
We also include two existing hyperbolic attention modules: 
(1) hyperbolic attention networks
\cite{gulcehre2018hyperbolic}, and (2) hyperbolic neural networks ++ \cite{shimizu2021hyperbolic}.
A major difference between \cite{shimizu2021hyperbolic} and other models is that it requires a Riemannian-based optimizer as the parameters are defined in the hyperbolic space.
\paragraph{Setup and Evaluation Metrics.}
We evaluate the layers across both image classification and multiple instance learning (MIL) benchmarks.
For image classification, we use MNIST \cite{lecun2002gradient}. 
For MIL we use the Elephant, Fox, and Tiger datasets.
We construct our network with one embedding layer, followed by the layer of interest, and a readout layer. The pooling layers have a single learnable static query.
We use the AdamW optimizer \cite{loshchilov2017decoupled} 
for training, and hyperparameters are provided in \cref{table:ml-params}.
We report mean accuracy with standard deviation over 5 trials for classification, and mean ROC AUC across trials under 10-fold cross-validation for MIL.

\paragraph{Results.}
On MNIST, Karcher flow–based models achieve performance comparable to Hopfield models under high hidden dimensions. However, Karcher flow models consistently yield higher accuracy in low-dimensional representation spaces, as shown in \cref{table:mnist-dim-mil}. Similar trends have been reported in other implementations of hyperbolic attention networks when compared against their Euclidean counterparts \cite{gulcehre2018hyperbolic, zhang2022hyperbolic}.
On the Elephant, Fox, and Tiger MIL benchmarks, \texttt{KFPooling} or previous hyperbolic attention networks consistently outperformed Euclidean \texttt{HopfieldPooling} models in terms of ROC AUC, but other hyperbolic networks sometimes outperformed \texttt{KFPooling}.
See MIL results in \cref{table:mnist-dim-mil}. 
\label{sec:5}

\section{Discussion}\label{sec:disc}

\annote{
Here we discuss three questions: 
(i) how hyperbolic geometry arises from neural population activity, 
(ii) what this geometry implies for downstream decoding, 
and (iii) how to transfer this insight to machine learning. 
Our theoretical results and the simulations of the previous sections answer these in turn, and we now discuss what each one says.
}

\looseness=-1
\paragraph{Hyperbolic Structure in Neural Population Codes.}
First, our definition of $d_{ab} = -\phi(\braket{\lambda(s_a), \lambda(s_b)})+C$ follows the standard RSA convention in computational neuroscience \cite{kriegeskorte2008representational}.
We theoretically show that the semi-metric between stimuli is statistically hyperbolic.
In other words, with the number of active neurons scales linearly with the size of the environment, the semi-metric is $0$-hyperbolic, which corresponds to a tree metric.
This result is informative for two reasons. 
(1) By \citet{kriegeskorte2021neural}, tuning curves define the geometry, which informs how the downstream decoding process can exploit hierarchical relationships.
(2) Furthermore, this result provides a theoretical foundation for the findings in \cite{zhang2022hyperbolic}, suggesting that the cognitive map may have a tree-like organization.

\paragraph{A Hyperbolic Associative Memory Model.}
In Proposition~\ref{prop:MHN-Bayes-optimal}, we link memory recall and decoding by showing that the Modern Hopfield Network (MHN) computes the optimal Minimum Mean Square Error (MMSE) estimator for decoding.
Combining this connection with \cref{thm:hyperbolicity} allows us to study the corresponding decoder, viewed as an associative memory model under hyperbolic geometry.
Theoretically, \cref{thm:capacity} shows that our Karcher-flow model exhibits an extra \emph{double-exponential scaling} term in $r_{\max}$ (the maximum norm among memory patterns). 
This scaling behavior is not present in existing Euclidean-based models \cite{ramsauer2020hopfield, krotov2020large, krotov2023new}.
Specifically, the double-exponential term arises for two reasons:
(1) hyperbolic geometry exhibits exponential volume growth with respect to its radius, and 
(2) the Lorentz inner product naturally encodes a term approximately proportional to $e^{d_g(\ba, \bb)}$.
Interestingly, the Lorentz inner product is a linear operation and requires the same computational complexity as the Euclidean inner product.

\paragraph{Simulation Results.}
We validate \cref{thm:capacity} via pattern completion in Figure~\ref{fig:placeholder}.
For our case study in Figure~\ref{fig:placeholder}(b), rescaling the memory patterns corresponds to rescaling the synaptic connection strength between neurons.
Figure~\ref{fig:placeholder}(b) shows that both DAM and MHN, do not benefit much from this modification.
The traditional approach in AM to increase model capacity is to increase network width or depth.
Notably, our model offers an additional architectural perspective, obtaining performance gains without any modification to width or depth.
Furthermore, our model offers an additional perspective on how systems with a small number of neurons can store many states.

\paragraph{Future Directions.}
\looseness=-5
This work suggests several avenues for future research. 
In computational neuroscience, an interesting direction is to understand how synaptic plasticity rules in the hippocampus either promote or constrain the emergence of this geometry.
Additionally, our theory could be generalized to the multi-field case to determine how complex tuning curves induce hyperbolic structure. 
From a machine learning perspective, our proof-of-concept simulations demonstrate that these bio-inspired layers outperform their Euclidean counterparts in classification and Multiple Instance Learning (MIL) tasks, particularly in low-dimensional regimes. 
A promising future direction involves scaling these hyperbolic architectures to larger scales, and discovering suitable application scenarios.
\label{sec:6}

\section*{Acknowledgements}

Dennis Wu is supported by the Northwestern Cognitive Science Graduate Fellowships for Interdisciplinary Research Projects.
Yi-Chun Hung is supported by the Taiwan-Northwestern Doctoral Scholarship.
James Fitzgerald is supported by the National Institute for Theory and Mathematics in Biology through the National Science Foundation (grant number DMS-2235451) and the Simons Foundation (grant number MPTMPS-00005320).
Han Liu is partially supported by NIH R01LM1372201, NSF AST-2421845, Simons Foundation MPS-AI-00010513, AbbVie, Dolby and Chan Zuckerberg Biohub Chicago Spoke Award. 
This research was supported in part through the computational resources and staff contributions provided for the Quest high performance computing facility at Northwestern University which is jointly supported by the Office of the Provost, the Office for Research, and Northwestern University Information Technology.

\section*{Impact Statement}

From the perspective of \cite{bender2025ai}, the reliance on massive computational resources concentrates power within a few large companies (e.g., Google). 
Our method introduces an additional dimension—hyperbolic geometry—while keeping the overall model size unchanged. 
Despite this added complexity, the required computational resources remain comparable, yet the model demonstrates superior memory capacity. 
This improvement in efficiency could lower the barrier to entry, enabling smaller companies and research labs to develop competitive models, thereby helping to reduce power concentration in AI research.

From the perspective of \cite{crawford2016there}, current LLMs largely depend on scaling laws to increase intelligence, which leads to significant carbon emissions. 
While our model does not fully resolve this issue, it provides a promising, under-explored approach for energy-efficient AI. 
As shown in Theorem~\ref{thm:capacity}, our model achieves double-exponential scaling in memory capacity, allowing efficient performance even in strictly low-dimensional settings. 
This offers a theoretical foundation for designing more energy-conscious architectures.
However, we acknowledge potential negative consequences. Efficiency gains in memory and representation may enable deployment of ML models in resource-constrained devices, which could facilitate surveillance or autonomous weapons systems. Furthermore, consistent with Jevons Paradox, improvements in computational efficiency may paradoxically increase overall energy consumption by lowering the barrier to large-scale deployment.





\bibliography{paper}
\bibliographystyle{icml2026}

\newpage
\appendix
\onecolumn

\newcommand{\AppendixTOCline}[2]{%
  \par\noindent
  \hyperref[#1]{\ref*{#1}.\,#2}\nobreak\leaders\hbox to .33em{\hss.\hss}\hfill\nobreak\hyperref[#1]{\pageref*{#1}}\par
}
\newcommand{\AppendixTOCsub}[2]{%
  \par\noindent
  \hspace*{1.5em}%
  \hyperref[#1]{\ref*{#1}.\,#2}\nobreak\leaders\hbox to .33em{\hss.\hss}\hfill\nobreak\hyperref[#1]{\pageref*{#1}}\par
}

\begin{center}
{\Large \textbf{Supplementary Materials}}
\end{center}
\vspace{0.35em}
\begingroup
\setlength{\parskip}{2pt}
\setlength{\parindent}{0pt}
\AppendixTOCline{sec:lim}{Limitations}
\AppendixTOCline{sec:tab_notation}{Table of Notations}
\AppendixTOCline{sec:related}{Related Works}
\AppendixTOCline{appendix:add-back}{Additional Background}
\AppendixTOCsub{appendix:vis-triangle}{Visualization of $\delta$-thin Triangles}
\AppendixTOCsub{appendix:hyp-properties}{Properties of Hyperbolic Geometry}
\AppendixTOCline{appendix:proofs}{Proof of Main Text Results}
\AppendixTOCsub{proof:prop:MHN-Bayes-optimal}{Proof of Proposition~\ref{prop:MHN-Bayes-optimal}}
\AppendixTOCsub{proof:hyperbolicity}{Proof of \cref{thm:hyperbolicity}}
\AppendixTOCsub{proof:capacity}{Proof of \cref{thm:capacity}}
\AppendixTOCline{appendix:sim-details}{Simulation Details}
\AppendixTOCsub{appendix:detail-pc}{Details of Pattern Completion}
\AppendixTOCsub{app:mil-details}{Multiple Instance Learning}
\AppendixTOCsub{appendix:ml-layers}{Machine Learning Layers}
\AppendixTOCline{appendix:add-sim}{Additional Simulations}
\AppendixTOCsub{appendix:add-pc}{Additional Simulation on Pattern Completion}
\AppendixTOCsub{app:hyper-sim}{Additional Simulation on Empirical Hyperbolicity}

\endgroup

\clearpage

\paragraph{LLM Usage Disclosure.}
We use AI tools such as LLMs to aid with polishing writing including improving clarity and grammar.
All the technical results are original contributions by the authors.

\section{Limitations}\label{sec:lim}

\looseness=-1
Our tuning-curve model assumes single-field neurons and uniform peak firing rates.
However, these properties can be violated in large environments~\cite{rich2014large}. In the neuroscience setting, it would also be interesting to go beyond the hyperboloid model to calculate memory capacities when the embedding dimension far exceeds the dimensionality of the hyperbolic manifold. 
It is unclear whether the hyperbolic structure we identify is a consequence of more general constraints in biology or a phenomenon specific to hippocampal place cells. Distinguishing these would clarify whether our framework generalizes to other brain regions.
For instance, hyperbolic geometry has also been reported in the olfactory and visual systems~\cite{zhou2018hyperbolic, lee2024navigating}.

\clearpage

\section{Table of Notations}
\label{sec:tab_notation}
To distinguish between hyperbolic and Euclidean points, we use regular font for Euclidean variables and bold font for hyperbolic variables.
\begin{table}[H]
    \caption{Mathematical Notations and Symbols}
    \centering
    \begin{tabular}{@{}ll@{}}
    \toprule
    Symbol & Description \\
    \midrule
    \multicolumn{2}{l}{\emph{Indices and dimensions}} \\
    $N$ & Number of neurons (index $i \in [N]$; \cref{sec:neural-computing}) \\
    $M$ & Number of stored patterns or grid points (index $\mu \in [M]$) \\
    $d$ & Latent hyperbolic embedding dimension ($\mathbb H^d_\kappa$) \\
    $D$ & Dimension of stimulus domain $\pS \subset \mathbb{R}^D$ \\
    \midrule
    \multicolumn{2}{l}{\emph{Stimulus space, spikes, and tuning}} \\
    $\pS$ & Stimulus space (e.g.\ $\pS=[0,L]^D$ in \cref{thm:hyperbolicity}) \\
    $s$ & Stimulus (location) \\
    $n = (n_1,\ldots,n_N)$ & Spike counts in a window (\cref{sec:neural-computing}) \\
    $\lambda_i(s)$ & Tuning curve of neuron $i$ (\cref{eqn:tc}) \\
    $\lambda(s)$ & Population rate vector $(\lambda_1(s),\ldots,\lambda_N(s))$ \\
    $s_{ik},\ \sigma_{ik},\ \lambda_{ik}$ & Center, width, and scale of the $k$-th field of neuron $i$ (\cref{eqn:tc}) \\
    $K$ & Number of place fields per neuron (often $K=1$; \cref{eqn:tc}) \\
    $\lambda_{\max}$ & Peak firing rate (uniform-field simplification; \cref{eqn:tc}) \\
    $\rho$ & Intensity of Poisson field centers (\cref{thm:hyperbolicity}) \\
    \midrule
    \multicolumn{2}{l}{\emph{Riemannian hyperboloid model}} \\
    $(\pM,g)$ & Riemannian manifold with metric $g$ (Hadamard in our setting) \\
    $T_x\pM$ & Tangent space at $x \in \pM$ \\
    $d_g(x,y)$ & Geodesic distance between $x,y \in \pM$ \\
    $\Expmap{x}{\cdot},\ \Exp^{-1}_x(\cdot)$ & Exponential map and inverse at $x$ \\
    $\mathbb H^d_\kappa$ & Hyperboloid (Lorentz) model, curvature $\kappa<0$ \\
    $\braket{\cdot,\cdot}_L$ & Lorentz (Minkowski) inner product \\
    $\boldsymbol o$ & Origin (reference point) on $\mathbb H^d_\kappa$ \\
    $\boldsymbol v$ & Query / state on $\mathbb H^d_\kappa$ \\
    $\mathtt v$ & Log-coordinates at $\bo$: $\mathtt v = \Exp^{-1}_{\bo}(\bv)$ \\
    $\boldsymbol{\sigma}$ & Noise variance applied on $\bv$ \\
    \midrule
    \multicolumn{2}{l}{\emph{Memory patterns and maps}} \\
    $\psi^H : \mathbb{R}^N \to \mathbb H^d_\kappa$ & Feature map from firing rate to hyperbolic space (\cref{sec:hyperbolic-population}) \\
    $\psi^E : \mathbb{R}^N \to \mathbb \mathbb{R}^d$ & Feature map from firing rate to Euclidean space (\cref{sec:hyperbolic-population}) \\
    $\xi_\mu$ & $\mu$-th stored memory pattern in $\mathbb{R}^d$ \\
    $\boldsymbol{\xi}_{\mu}$ & $\mu$-th stored memory pattern in $\mathbb{H}^d_{\kappa}$ \\
    $p(\mu \mid n)$ & Posterior over memory index given spikes (\cref{sec:hyperbolic-population}) \\
    $w_\mu(\boldsymbol v)$ & Softmax weights in the Karcher-flow update \eqref{eqn:geo-mhn} \\
    $H(\boldsymbol v)$ & Karcher-flow update rule \eqref{eqn:geo-mhn} \\
    \midrule
    \multicolumn{2}{l}{\emph{Hyperbolicity}} \\
    $d_{ab}$ & Semi-metric on stimuli from population inner products (\cref{sec:hyperbolic-population}) \\
    $\Delta(s_a,s_b,s_c,s_d)$ & Empirical Four-point hyperbolicity (\cref{thm:hyperbolicity}) \\
    $\delta$ & Hyperbolicity scale (\cref{def:delta-hyp}) \\
    $L$ & Stimulus space diameter $\pS=[0,L]^D$ (\cref{thm:hyperbolicity}) \\
    \bottomrule
    \end{tabular}
    \label{tab:nomenclature}
\end{table}

\clearpage

\section{Related Works}

\paragraph{Associative Memory Models.}
The foundations of associative memory models were established in the works of \cite{amari1972learning, nakano2007associatron, amari1988statistical, hopfield1982neural}.
The classic Hopfield network \cite{hopfield1982neural} is one of the most well-studied associative memory models due to its energy-based structure and theoretical accessibility.
Recent studies \cite{krotov2020large, ramsauer2020hopfield} substantially increased the storage capacity of Hopfield networks and generalized associative memory models to continuous domains.
More recently, modern Hopfield networks have been shown to be closely related to attention mechanisms in transformers \cite{ramsauer2020hopfield, vaswani2017attention}.
Variants of modern Hopfield networks were then proposed to both achieve higher capacity and recover different types of attention mechanisms \cite{hu2023sparse, santos2025hopfield, burns2023simplicial, krotov2023new, wu2024uniform}.

\paragraph{Neural Population Geometry.}
In computational neuroscience, the geometric aspects of neural population activities have gained growing attention in recent years \cite{kriegeskorte2021neural}.
Most works study the activity patterns of animals when they encounter different stimuli.
For example, \citet{chapman2024representational} studies population geometry of the visual system to analyze which visual features animals attend to.
\citet{panzeri2022structures} studies how neural populations represent and process information by analyzing the activity correlations between place cells in the hippocampus.
Besides studying the structure of neural representation spaces, researchers also try to understand the intrinsic geometry in those spaces.
Specifically, growing evidence shows that hyperbolic geometry naturally occurs in various biological systems \cite{kagel2025revealing, lecca2023graph, sharpee2019argument, lee2024navigating, allard2020navigable}.
\citet{zhang2023hippocampal} shows that the semi-metric between hippocampal place cells has the same topological structure as hyperbolic geometry, suggesting a hyperbolic cognitive map in animal brains for spatial information.
Next, they show that when animals gain experience, the radius of the latent cognitive map grows larger.
This finding provides a unique angle to connect learning to/with neural population geometry.
Similarly, \citet{zhou2018hyperbolic, ghaninia2022hyperbolic} find that besides spatial information, animals also encode odors into a latent hyperbolic map using a similar analytical tool used in \cite{zhang2023hippocampal}.
There are two major differences between our work and \cite{zhang2023hippocampal, zhou2018hyperbolic, ghaninia2022hyperbolic}.
First, our results are mainly theoretical, whereas previous works obtain their results empirically using topological data analysis.
Next, their studies focused on showing that neuron-to-neuron correlations are topologically hyperbolic, hereas we instead show that inter-stimuli distances are hyperbolic from a metric perspective.

\paragraph{Methods for Geometric Analysis of Neural Representations.}
Various algorithms have been proposed to (1) measure the underlying geometry of neural representations and (2) compare the similarities between two sets of neural responses.
For (1), a typical example is \citet{zhang2023hippocampal}, where they utilize a technique in topological data analysis (TDA) called the Betti curve.
In topology, the Betti numbers are used to distinguish topological spaces based on the number of $n$-dimensional holes in the space.
In \citet{zhang2023hippocampal}, the authors compute the pair-wise dissimilarity between neurons, and construct a graph with vertices being the neurons.
The vertices are connected if their pair-wise dissimilarity is below some threshold $\tau$.
The Betti number is computed for each $\tau$, which forms the Betti curve for each Betti number.
This method allows researchers to compare experimental data to well-defined topological spaces.
\label{sec:related}

\clearpage

\section{Additional Background}
\label{appendix:add-back}

\subsection{Visualization of $\delta$-thin Triangles}\label{appendix:vis-triangle}

\begin{figure*}[h]
\centering
\begin{tikzpicture}[
  scale=1,
  inner/.style={red!50!brown!50, line width=1.5pt},
  outer/.style={double, double distance=8mm, line cap=round , blue!50, opacity=.55, line width=1pt},
  arr/.style={|-|, cyan!90, dotted, line width=1.1 pt}, lab/.style={font=\small},
  ipoint/.style={circle, fill=black, inner sep=1.2pt},
  deltacircle/.style={dashed, thin, opacity=0.6},
]
\newcommand{\panel}[4]{%
  \begin{scope}[xshift=#1]
    \coordinate (A) at (0,0);
    \coordinate (B) at (4,0);
    \coordinate (C) at ({4*cos(60)}, {4*sin(60)});
    
    \node[lab] at (1.3,3.15) {#2};

    #3

    #4

  \end{scope}%
}
\panel{0.4cm}{}{
}{
  \draw[inner] (A) to[bend left=25]  coordinate[pos=.55] (A1) (B);
  \draw[inner] (A) to[bend right=25] coordinate[pos=.30] (C1) (C);
  \draw[inner] (B) to[bend left=25]  coordinate[pos=.40] (B1) (C);
  
  \node[ipoint] (PAB) at ( 2, 0.5) {};
  \node[ipoint] (PAC) at ( 1.4, 1.5) {};
  \node[ipoint] (PBC) at ( 2.6, 1.5) {};

    \node[circle, fill=red, inner sep=1.5pt]  (cent) at (1.99, 1.18) {};
  \node[above=12pt] at (cent) {\textcolor{cyan}{$\delta$}};

    \draw[arr] (PAB) -- (PAC);
    \draw[arr] (PBC) --(PAC);
    \draw[arr] (PBC) -- (PAB);

  \node[below left] at (A) {\(A\)};
  \node[below right]  at (B) {\(B\)};
  \node[above] at (C) {\(C\)};
  \node[above=12pt] at (C) {Negative curvature};
}

\panel{6.2cm}{}{
  \draw[inner] (A) to  coordinate[pos=.55] (A1) (B);
  \draw[inner] (A) to coordinate[pos=.30] (C1) (C);
  \draw[inner] (B) to  coordinate[pos=.40] (B1) (C);
  
\node[ipoint] (PAB) at (2, 0)            {};   
\node[ipoint] (PAC) at (1, {2*sin(60)})  {};   
\node[ipoint] (PBC) at (3, {2*sin(60)})  {};   

\node[circle, fill=red, inner sep=1.5pt] (cent) at (2, 1.155) {};
    \draw[arr] (PAB) -- (PAC);
    \draw[arr] (PBC) -- (PAC);
    \draw[arr] (PBC) -- (PAB);
  \node[above=20pt] at (cent) {\textcolor{cyan}{$\delta$}};

  \node[below left] at (A) {\(A\)};
  \node[below right]  at (B) {\(B\)};
  \node[above] at (C) {\(C\)};
  \node[above=12pt] at (C) {Zero curvature};
}

\panel{12cm}{}{
}{

  \draw[inner] (A) to[bend right=40]  coordinate[pos=.55] (A1) (B);
  \draw[inner] (A) to[bend left=40] coordinate[pos=.30] (C1) (C);
  \draw[inner] (B) to[bend right=40] coordinate[pos=.40] (B1) (C);
  
\node[ipoint] (PAB) at (2,     -0.728) {};
\node[ipoint] (PAC) at (0.320,  2.096) {};
\node[ipoint] (PBC) at (3.680,  2.096) {};

\node[circle, fill=red, inner sep=1.5pt] (cent) at (2, 1.155) {};
    \draw[arr] (PAB) -- (PAC);
    \draw[arr] (PBC) -- (PAC);
    \draw[arr] (PBC) -- (PAB);
  \node[above=32pt] at (cent) {\textcolor{cyan}{$\delta$}};

  \node[below left] at (A) {\(A\)};
  \node[below right]  at (B) {\(B\)};
  \node[above] at (C) {\(C\)};
  \node[above=12pt] at (C) {Positive curvature};

}
\end{tikzpicture}
\caption{
\textbf{\(\delta\)-thin triangles:} 
\annote{
the $\delta$-thin triangles refer to the geodesic triangles formed by points $A,B,C$ connected with orange geodesics.
For each triangle, the red dot denotes its center, black dots refer to the midpoints of its sides, cyan dot lines measures the length between any two midpoints.}
Hyperbolic spaces exhibit uniformly thin triangles (small \(\delta\)), Euclidean spaces are borderline, and spherical/positively curved spaces exhibit “fatter” triangles (larger \(\delta\)).
The arrows indicate representative distances from midpoints of each side to the center of the circle.
The diameter of this set of midpoints is bounded by $\delta$, and the circle's radius gives $\tfrac{\delta}{2}$.
on one side to the union of the other two sides, all of which are uniformly bounded by $\delta$.
The figure is schematic and not drawn to scale.
}
\label{fig:delta-slim-triangle}
\end{figure*}

\subsection{Properties of Hyperbolic Geometry}\label{appendix:hyp-properties}

\begin{lemma}
\label{lem:lorentz-ip-diff-general}
Let $\kappa<0$, $\alpha\coloneqq \sqrt{|\kappa|}$, and let
\[
\bx=\Exp_o(a),\qquad \by=\Exp_o(b),\qquad \bz=\Exp_o(c)
\]
be hyperboloid embeddings of $a,b,c\in T_o\pM\simeq\mathbb R^d$.
Define the squared Euclidean distance gap
\[
\Delta^{\mathrm E}(a;b,c)
\;\coloneqq\;
\|a-c\|_2^2-\|a-b\|_2^2,
\]
and let
\[
g(r)\coloneqq \frac{\sinh(\alpha r)}{r},\qquad r>0,
\]
with $g(0)=\alpha$.

Then the Lorentz inner-product difference admits the exact decomposition
\begin{equation}
\braket{\bx,\by}_L-\braket{\bx,\bz}_L
=
\frac{1}{|\kappa|}\,g(\|a\|_2)\,\frac{g(\|b\|_2)}{2}\,
\Delta^{\mathrm E}(a;b,c)
\;-\;
\mathsf{Pen}_\kappa(a;b,c),
\label{eq:lorentz-gap-decomposition}
\end{equation}
where the penalty term $\mathsf{Pen}_\kappa(a;b,c)$ is given by
\begin{align}
\mathsf{Pen}_\kappa(a;b,c)
&\coloneqq
\frac{1}{|\kappa|}\cosh(\alpha\|a\|_2)
\Big(\cosh(\alpha\|b\|_2)-\cosh(\alpha\|c\|_2)\Big)
\label{eq:penalty-1}\\
&\quad
+\frac{1}{|\kappa|}\,g(\|a\|_2)\left[
\frac{g(\|b\|_2)}{2}\big(\|c\|_2^2-\|b\|_2^2\big)
-\big(g(\|b\|_2)-g(\|c\|_2)\big)\braket{a,c}
\right].
\label{eq:penalty-2}
\end{align}

Moreover, if $\|a\|_2,\|b\|_2,\|c\|_2\in[r_{\min},r_{\max}]$ with
$0<r_{\min}\le r_{\max}<\infty$, then
\begin{equation}
\braket{\bx,\by}_L-\braket{\bx,\bz}_L
\;\ge\;
A_\kappa(r_{\min};a)\,\Delta^{\mathrm E}(a;b,c)
\;-\;
\mathsf{Pen}_\kappa(r_{\min},r_{\max}),
\label{eq:lorentz-gap-lower-bound-lemma}
\end{equation}
where
\[
A_\kappa(r_{\min};a)
\;\coloneqq\;
\frac{1}{|\kappa|}\,g(\|a\|_2)\,\frac{g(r_{\min})}{2},
\]
and the uniform penalty bound
\[
\mathsf{Pen}_\kappa(r_{\min},r_{\max})
\;\coloneqq\;
\sup_{\substack{\|a\|_2,\|b\|_2,\|c\|_2\in[r_{\min},r_{\max}]}}
\mathsf{Pen}_\kappa(a;b,c)
\]
is finite and satisfies the rate
\[
\mathsf{Pen}_\kappa(r_{\min},r_{\max})
=
\pO \left(
\frac{1}{|\kappa|}\,e^{2\alpha r_{\max}}\,r_{\max}^2
\right),
\qquad
\alpha=\sqrt{|\kappa|},
\]
with constants depending only on $r_{\min}>0$.
\end{lemma}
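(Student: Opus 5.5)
The plan is to compute both Lorentz inner products explicitly from the closed form of $\Exp_{\bo}$ and then regroup the terms; no deep geometry is needed.

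\textbf{Step 1: explicit embedding.} With $\bo=[1/\alpha,0,\dots,0]^\top$ and a tangent vector $a\in T_{\bo}\pM\simeq\R^d$ (purely spatial, so $\|a\|_L=\|a\|_2$), the exponential map of \cref{sec:prelim} gives
\[
\Exp_{\bo}(a)=\Big[\tfrac{\cosh(\alpha\|a\|_2)}{\alpha},\ \tfrac{\sinh(\alpha\|a\|_2)}{\alpha\|a\|_2}\,a\Big]=\Big[\tfrac{\cosh(\alpha\|a\|_2)}{\alpha},\ \tfrac{g(\|a\|_2)}{\alpha}\,a\Big].
\]
Hence
\[
\braket{\bx,\by}_L=\tfrac{1}{|\kappa|}\Big(-\cosh(\alpha\|a\|_2)\cosh(\alpha\|b\|_2)+g(\|a\|_2)\,g(\|b\|_2)\braket{a,b}\Big),
\]
and similarly for $\braket{\bx,\bz}_L$.

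\textbf{Step 2: regroup the difference.} Subtracting, the time parts give exactly \eqref{eq:penalty-1} with a minus sign. For the spatial parts, I write
\[
g(\|b\|)\braket{a,b}-g(\|c\|)\braket{a,c}=g(\|b\|)\big(\braket{a,b}-\braket{a,c}\big)+\big(g(\|b\|)-g(\|c\|)\big)\braket{a,c}.
\]
I then use the polarization identity
\[
\braket{a,b}-\braket{a,c}=\tfrac12\big(\Delta^{\mathrm E}(a;b,c)-\|c\|_2^2+\|b\|_2^2\big),
\]
which follows from expanding $\|a-c\|^2-\|a-b\|^2$. Collecting the $\Delta^{\mathrm E}$ term yields the main coefficient $\tfrac{1}{|\kappa|}g(\|a\|)\tfrac{g(\|b\|)}{2}$. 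The remaining terms are precisely $-$\eqref{eq:penalty-2}, which proves \eqref{eq:lorentz-gap-decomposition} exactly.

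\textbf{Step 3: the lower bound \eqref{eq:lorentz-gap-lower-bound-lemma}.} This is the only delicate point, because $\Delta^{\mathrm E}$ can have either sign. The function $g(r)=\sinh(\alpha r)/r$ is increasing on $(0,\infty)$, since $(\alpha r\cosh\alpha r-\sinh\alpha r)/r^2\ge 0$.
\begin{itemize}
\item If $\Delta^{\mathrm E}\ge0$, replacing $g(\|b\|)$ by $g(r_{\min})$ only decreases the main term, which gives the bound directly.
\item If $\Delta^{\mathrm E}<0$, I split $g(\|b\|)=g(r_{\min})+\big(g(\|b\|)-g(r_{\min})\big)$. The leftover term $\tfrac{1}{2|\kappa|}g(\|a\|)\big(g(\|b\|)-g(r_{\min})\big)\Delta^{\mathrm E}$ is negative. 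Its magnitude is at most $\tfrac{1}{2|\kappa|}g(r_{\max})^2\cdot 4r_{\max}^2$, because $|\Delta^{\mathrm E}|\le 4r_{\max}^2$ by Cauchy--Schwarz.
\end{itemize}
This leftover is $O(e^{2\alpha r_{\max}}/|\kappa|)$, and I absorb it into the uniform penalty. Equivalently, I define the supremum to include it; this does not change the stated rate.

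\textbf{Step 4: the rate.} I bound each piece of $\mathsf{Pen}_\kappa$ over the annulus:
\begin{itemize}
\item $\cosh(\alpha\|a\|)\,|\cosh(\alpha\|b\|)-\cosh(\alpha\|c\|)|\le e^{2\alpha r_{\max}}$.
\item $g(\|a\|)\,g(\|b\|)\,\big|\|c\|^2-\|b\|^2\big|\le (e^{\alpha r_{\max}}/r_{\min})^2\, r_{\max}^2$.
\item $g(\|a\|)\,|g(\|b\|)-g(\|c\|)|\,|\braket{a,c}|\le 2(e^{\alpha r_{\max}}/r_{\min})^2\, r_{\max}^2$.
\end{itemize}
Multiplying by $1/|\kappa|$ gives $\mathsf{Pen}_\kappa(r_{\min},r_{\max})=\pO\big(|\kappa|^{-1}e^{2\alpha r_{\max}}r_{\max}^2\big)$, with constants depending only on $r_{\min}$. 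The sign handling in Step 3 is the only step I expect to require care; everything else is bookkeeping.
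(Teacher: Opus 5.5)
Your Steps 1, 2 and 4 follow the paper's argument essentially verbatim: the closed-form $\Exp_{\mathbf o}$, the polarization identity for $\langle a,b\rangle-\langle a,c\rangle$, and termwise bounds via $\cosh(\alpha r)\lesssim e^{\alpha r}$ and $g(r)\lesssim e^{\alpha r}/r$. The difference is Step~3. The paper only says that $g(\|b\|_2)\ge g(r_{\min})$ lowers the main term, i.e.\ it applies your first bullet without checking the sign of $\Delta^{\mathrm E}$. Your case split is the correct argument. When $\Delta^{\mathrm E}<0$ it pays an extra $\frac{1}{2|\kappa|}g(\|a\|_2)\bigl(g(\|b\|_2)-g(r_{\min})\bigr)|\Delta^{\mathrm E}|\le\frac{2}{|\kappa|}\sinh^2(\alpha r_{\max})$, which stays within the claimed rate.

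You should, however, present this as proving a corrected inequality, not as an ``equivalent'' reading of the supremum. With the constant equal to the literal $\sup\mathsf{Pen}_\kappa(a;b,c)$, \eqref{eq:lorentz-gap-lower-bound-lemma} is false. Take $\kappa=-1$, $r_{\min}\le 1$, $r_{\max}=R$, $a=c=Re_1$ and $b=-Re_1$. Then the left side is $1-\cosh 2R$ and $\mathsf{Pen}_\kappa(a;b,c)=0$, so all of $\cosh 2R$ sits in the main term you weaken. Since $A_\kappa(r_{\min};a)\,\Delta^{\mathrm E}=-2R\sinh R\,g(r_{\min})$, the inequality would need $\sup\mathsf{Pen}_\kappa\ge\cosh 2R-1-2R\sinh R\,g(r_{\min})$, which is about $2.42\times10^{8}$ at $R=10$.

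On the other side, write $r_a=\|a\|_2$ and $r_b=\|b\|_2$. Your Step~2 formula rewrites as $\mathsf{Pen}_\kappa(a;b,c)=\cosh r_a\cosh r_b-\cosh d_g(\mathbf x,\mathbf z)+\tfrac12 g(r_a)g(r_b)\bigl(\|a-c\|_2^2-r_a^2-r_b^2\bigr)$. Using $r_a^2+r_b^2\ge 2r_ar_b$ and $d_g(\mathbf x,\mathbf z)\ge\|a-c\|_2$ gives $\sup\mathsf{Pen}_\kappa\le\cosh R+\max_{\ell\ge0}\bigl(\tfrac12 g(R)^2\ell^2-\cosh\ell\bigr)$, which is about $1.66\times10^{8}$ at $R=10$. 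Asymptotically the supremum is $\tfrac12 e^{2R}\bigl(1-\Theta(\ln R/R)\bigr)$, so the failure is not a small-$R$ artifact, and your enlargement by $\frac{2}{|\kappa|}\sinh^2(\alpha r_{\max})$ is genuinely needed.

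If you want the extra term to shrink with the shell width, as the penalty is used downstream (Corollary~\ref{cor:gap-amplification-width} and the thin-shell regime), bound $g(\|b\|_2)-g(r_{\min})\le C_g(r_{\min},r_{\max})\,\Delta r$ by the mean value theorem rather than by $g(r_{\max})$.
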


\begin{proof}[Proof of Lemma~\ref{lem:lorentz-ip-diff-general}]
We rewrite the Lorentz inner-product difference in terms of the \emph{squared Euclidean distance gap}
\[
\Delta^{\mathrm E}(a;b,c)
\;\coloneqq\;
\|a-c\|_2^2-\|a-b\|_2^2.
\]

Let $\alpha\coloneqq \sqrt{|\kappa|}$ and $s^2=1/|\kappa|$. Under the hyperboloid model, for any $u\in T_o\pM\simeq \mathbb R^d$,
\[
\Exp_{\bo}(u)
=
\Big(
s\cosh(\alpha\|u\|_2),\;
s\,\frac{\sinh(\alpha\|u\|_2)}{\|u\|_2}\,u
\Big).
\]
Define
\[
\bx\coloneqq \Exp_{\bo}(a),\qquad \by\coloneqq \Exp_{\bo}(b),\qquad \bz\coloneqq \Exp_{\bo}(c),
\]
and for convenience set
\[
g(r)\coloneqq \frac{\sinh(\alpha r)}{r},\qquad r>0,
\]
(with $g(0)=\alpha$).
A direct substitution gives the exact identity
\begin{align}
\braket{\bx,\by}_L - \braket{\bx,\bz}_L
&=
-\frac{1}{|\kappa|}\cosh(\alpha\|a\|_2)\Big(\cosh(\alpha\|b\|_2)-\cosh(\alpha\|c\|_2)\Big)
\label{eq:lorentz-diff-term1-new}\\
&\quad
+\frac{1}{|\kappa|}g(\|a\|_2)\Big(g(\|b\|_2)\braket{a,b}-g(\|c\|_2)\braket{a,c}\Big).
\label{eq:lorentz-diff-term2-new}
\end{align}

By expanding squared norms,
\begin{align*}
\Delta^{\mathrm E}(a;b,c)
&=\big(\|a\|_2^2+\|c\|_2^2-2\braket{a,c}\big)
-\big(\|a\|_2^2+\|b\|_2^2-2\braket{a,b}\big) \\
&= 2\big(\braket{a,b}-\braket{a,c}\big)+\big(\|c\|_2^2-\|b\|_2^2\big).
\end{align*}
Hence
\begin{equation}
\braket{a,b}-\braket{a,c}
=
\frac{1}{2}\Delta^{\mathrm E}(a;b,c)-\frac{1}{2}\big(\|c\|_2^2-\|b\|_2^2\big).
\label{eq:ab-ac-via-gap}
\end{equation}

Rewrite the bracket in \eqref{eq:lorentz-diff-term2-new} as
\begin{align*}
g(\|b\|_2)\braket{a,b}-g(\|c\|_2)\braket{a,c}
&=
g(\|b\|_2)\big(\braket{a,b}-\braket{a,c}\big)
+\big(g(\|b\|_2)-g(\|c\|_2)\big)\braket{a,c}.
\end{align*}
Substituting \eqref{eq:ab-ac-via-gap} yields
\begin{align}
g(\|b\|_2)\braket{a,b}-g(\|c\|_2)\braket{a,c}
&=
\frac{g(\|b\|_2)}{2}\,\Delta^{\mathrm E}(a;b,c)
-\frac{g(\|b\|_2)}{2}\big(\|c\|_2^2-\|b\|_2^2\big)
+\big(g(\|b\|_2)-g(\|c\|_2)\big)\braket{a,c}.
\label{eq:split-main-pen}
\end{align}
Plugging \eqref{eq:split-main-pen} into \eqref{eq:lorentz-diff-term2-new} and combining with \eqref{eq:lorentz-diff-term1-new} gives the \emph{exact decomposition}
\begin{align}
\braket{\bx,\by}_L - \braket{\bx,\bz}_L
&=
\underbrace{\frac{1}{|\kappa|}\,g(\|a\|_2)\,\frac{g(\|b\|_2)}{2}\,\Delta^{\mathrm E}(a;b,c)}_{\text{main term}}
\;-\;\underbrace{\mathsf{Pen}_\kappa(a;b,c)}_{\text{penalty}},
\label{eq:lorentz-gap-main-minus-pen}
\end{align}
where the penalty $\mathsf{Pen}_\kappa(a;b,c)$ is given explicitly by
\begin{align}
\mathsf{Pen}_\kappa(a;b,c)
&\coloneqq
\frac{1}{|\kappa|}\cosh(\alpha\|a\|_2)\Big(\cosh(\alpha\|b\|_2)-\cosh(\alpha\|c\|_2)\Big)
\label{eq:pen1}\\
&\quad
+\frac{1}{|\kappa|}\,g(\|a\|_2)\left[
\frac{g(\|b\|_2)}{2}\big(\|c\|_2^2-\|b\|_2^2\big)
-\big(g(\|b\|_2)-g(\|c\|_2)\big)\braket{a,c}
\right].
\label{eq:pen2}
\end{align}
Equations \eqref{eq:lorentz-gap-main-minus-pen}--\eqref{eq:pen2} are obtained from the closed-form hyperboloid exponential map.

Assume $\|a\|_2,\|b\|_2,\|c\|_2\in[r_{\min},r_{\max}]$ with $0<r_{\min}\le r_{\max}<\infty$.
Since $g(\cdot)$ is increasing on $(0,\infty)$, we have $g(\|b\|_2)\ge g(r_{\min})$, and thus the main term in \eqref{eq:lorentz-gap-main-minus-pen} admits the uniform lower bound
\[
\frac{1}{|\kappa|}g(\|a\|_2)\frac{g(\|b\|_2)}{2}\,\Delta^{\mathrm E}(a;b,c)
\;\ge\;
\frac{1}{|\kappa|}g(\|a\|_2)\frac{g(r_{\min})}{2}\,\Delta^{\mathrm E}(a;b,c).
\]
Define the amplification factor
\[
A_\kappa(r_{\min};a)
\;\coloneqq\;
\frac{1}{|\kappa|}g(\|a\|_2)\frac{g(r_{\min})}{2}.
\]
Then \eqref{eq:lorentz-gap-main-minus-pen} implies the deterministic inequality
\begin{equation}
\braket{\bx,\by}_L - \braket{\bx,\bz}_L
\;\ge\;
A_\kappa(r_{\min};a)\,\Delta^{\mathrm E}(a;b,c)
\;-\;
\sup_{\substack{\|a\|_2,\|b\|_2,\|c\|_2\in[r_{\min},r_{\max}]}}
\mathsf{Pen}_\kappa(a;b,c).
\label{eq:lorentz-gap-lower-bound}
\end{equation}

Finally, note that the supremum penalty term in \eqref{eq:lorentz-gap-lower-bound} is finite for fixed $(r_{\min},r_{\max},\kappa)$ and can be bounded explicitly using
$\cosh(\alpha r)\lesssim e^{\alpha r}$ and $g(r)=\sinh(\alpha r)/r\lesssim e^{\alpha r}/r$ on $[r_{\min},r_{\max}]$.
In particular, one obtains the coarse rate
\[
\sup \mathsf{Pen}_\kappa(a;b,c)
=
\pO \left(\frac{1}{|\kappa|}e^{2\alpha r_{\max}}\,r_{\max}^2\right),
\qquad
\alpha=\sqrt{|\kappa|},
\]
where the hidden constant depends only on $r_{\min}>0$ (through $1/r_{\min}$ factors in $g$ and its Lipschitz constant on the interval).
\end{proof}

\begin{lemma}\label{lem:margin-bridge}
Let $\kappa<0$, $\alpha=\sqrt{|\kappa|}$, and $\bv=\Exp_{\bo}(\mathtt v)$ with $\mathtt v\in\mathbb R^d$.
Let $\boldsymbol\xi_\rho=\Exp_{\bo}(s_\rho)$ for all $\rho\in[M]$.
Fix $\mu\neq \nu$ and define the squared-distance gap
\[
\Delta^{\mathrm E}_{\mu\nu}(\mathtt v)
\coloneqq
\|\mathtt v-s_\nu\|_2^2-\|\mathtt v-s_\mu\|_2^2.
\]
Assume $\|\mathtt v\|_2,\|s_\mu\|_2,\|s_\nu\|_2\in[r_{\min},r_{\max}]$.
Then
\[
\Delta^{\mathrm H}_{\mu\nu}(\bv)
=
\langle \bv,\boldsymbol\xi_\mu\rangle_L-\langle \bv,\boldsymbol\xi_\nu\rangle_L
\;\ge\;
f_\kappa(r_{\min})\,\Delta^{\mathrm E}_{\mu\nu}(\mathtt v)
\;-\;
\Tilde{
\mathsf{Pen}}_\kappa(r_{\min},r_{\max}),
\]
where
\[
f_\kappa(r)\coloneqq \frac{1}{2|\kappa|}\Big(\frac{\sinh(\alpha r)}{r}\Big)^2,
\]
and $\Tilde{\mathtt{Pen}}_\kappa(r_{\min},r_{\max})$ is the uniform penalty bound from Corollary~\cref{cor:gap-amplification-width}.
\end{lemma}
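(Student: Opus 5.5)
Lemma~\ref{lem:margin-bridge} is essentially Lemma~\ref{lem:lorentz-ip-diff-general} applied with the query in the role of the anchor point. I would specialize the exact decomposition \eqref{eq:lorentz-gap-decomposition} with $a=\mathtt v$, $b=s_\mu$, $c=s_\nu$. Then $\bx=\bv$, $\by=\boldsymbol\xi_\mu$, $\bz=\boldsymbol\xi_\nu$, and $\Delta^{\mathrm E}(a;b,c)=\|\mathtt v-s_\nu\|_2^2-\|\mathtt v-s_\mu\|_2^2=\Delta^{\mathrm E}_{\mu\nu}(\mathtt v)$. This gives the exact identity
\[
\Delta^{\mathrm H}_{\mu\nu}(\bv)=\frac{1}{|\kappa|}\,g(\|\mathtt v\|_2)\,\frac{g(\|s_\mu\|_2)}{2}\,\Delta^{\mathrm E}_{\mu\nu}(\mathtt v)-\mathsf{Pen}_\kappa(\mathtt v;s_\mu,s_\nu).
\]

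Next I would lower bound the main coefficient uniformly. Since $g(r)=\sinh(\alpha r)/r$ is increasing on $(0,\infty)$, both $g(\|\mathtt v\|_2)$ and $g(\|s_\mu\|_2)$ are at least $g(r_{\min})$. The coefficient is therefore at least $\frac{1}{2|\kappa|}g(r_{\min})^2=f_\kappa(r_{\min})$. Monotonicity of $g$ follows from $\frac{d}{dr}\log g=\alpha\coth(\alpha r)-1/r>0$, which holds because $x\coth x>1$ for $x>0$.

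The main obstacle is the sign of $\Delta^{\mathrm E}_{\mu\nu}$. Replacing the coefficient by its lower bound is only valid when $\Delta^{\mathrm E}_{\mu\nu}(\mathtt v)\ge 0$. For $\Delta^{\mathrm E}_{\mu\nu}<0$ I would instead use the upper bound $\frac{1}{2|\kappa|}g(r_{\max})^2$ on the coefficient. The resulting discrepancy $\frac{1}{2|\kappa|}\bigl(g(r_{\max})^2-g(r_{\min})^2\bigr)\,|\Delta^{\mathrm E}_{\mu\nu}|$ would be absorbed into the penalty. This works because $|\Delta^{\mathrm E}_{\mu\nu}|\le 4r_{\max}^2$ on the annulus.

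This absorbed discrepancy, together with $\sup\mathsf{Pen}_\kappa$ from Lemma~\ref{lem:lorentz-ip-diff-general}, defines $\tilde{\mathsf{Pen}}_\kappa(r_{\min},r_{\max})$. It is finite on $[r_{\min},r_{\max}]$ and retains the order $\pO\big(|\kappa|^{-1}e^{2\alpha r_{\max}}r_{\max}^2\big)$, and I would identify it with the width-dependent bound of the cited corollary. Collecting the two cases gives $\Delta^{\mathrm H}_{\mu\nu}(\bv)\ge f_\kappa(r_{\min})\Delta^{\mathrm E}_{\mu\nu}(\mathtt v)-\tilde{\mathsf{Pen}}_\kappa(r_{\min},r_{\max})$, which is the claim.
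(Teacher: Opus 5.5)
Your proposal is correct and follows the paper's route: the paper's proof is just the specialization of Lemma~\ref{lem:lorentz-ip-diff-general} with $a=\mathtt v$, $b=s_\mu$, $c=s_\nu$, followed by lower-bounding the coefficient $\frac{1}{2|\kappa|}g(\|\mathtt v\|_2)\,g(\|s_\mu\|_2)$ by $f_\kappa(r_{\min})$ via monotonicity of $g$. Your separate treatment of $\Delta^{\mathrm E}_{\mu\nu}(\mathtt v)<0$ fills a step the paper's one-line ``uniformization'' omits (the coefficient lower bound reverses direction for negative gaps), and its cost, $\bigl(f_\kappa(r_{\max})-f_\kappa(r_{\min})\bigr)4r_{\max}^2\le \frac{4}{|\kappa|}C_g\,g(r_{\max})\,r_{\max}^2\,\Delta r$, has the same width-dependent form as the bound in Corollary~\ref{cor:gap-amplification-width} and vanishes in the thin-shell case.
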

\begin{proof}
Apply Lemma~\ref{lem:lorentz-ip-diff-general} with $a=\mathtt v$, $b=s_\mu$, $c=s_\nu$ and $(x,y,z)=(\bv,\boldsymbol\xi_\mu,\boldsymbol\xi_\nu)$.
Then use the uniformization on $[r_{\min},r_{\max}]$ and the definition of $f_\kappa$.
\end{proof}

\begin{corollary}
\label{cor:gap-amplification-width}
Let $\kappa<0$, $\alpha\coloneqq \sqrt{|\kappa|}$, and assume the stored patterns satisfy
$r_{\min}\le \|s_\mu\|_2 \le r_{\max}$ for all $\mu$, with $\Delta r\coloneqq r_{\max}-r_{\min}$.
Define
\[
\Delta_\mu^{\mathrm E}\coloneqq \min_{\nu\neq\mu}\|s_\mu-s_\nu\|_2^2,
\qquad
\Delta_\mu^{\mathrm H}\coloneqq \min_{\nu\neq\mu}\Big(
\braket{\xi_\mu,\xi_\mu}_L-\braket{\xi_\mu,\xi_\nu}_L
\Big).
\]
Then for each $\mu$,
\[
\Delta_\mu^{\mathrm H}
\;\ge\;
f_\kappa(r_{\min})\,\Delta_\mu^{\mathrm E}
\;-\;
\Tilde{\mathtt{Pen}}_\kappa(r_{\min},r_{\max}),
\]
where $f_\kappa(r)=\frac{1}{2|\kappa|}\big(\frac{\sinh(\alpha r)}{r}\big)^2$, and the penalty admits the bound
\[
\widetilde{\mathsf{Pen}}_\kappa(r_{\min},r_{\max})
\;\le\;
\frac{1}{|\kappa|}
\Big[
\alpha\,\cosh^2(\alpha r_{\max})\,\Delta r
\;+\;
C_g(r_{\min},r_{\max})\,g(r_{\max})\,r_{\max}^2\,\Delta r
\Big],
\]
with $g(r)=\sinh(\alpha r)/r$ and
\[
C_g(r_{\min},r_{\max})
\;\coloneqq\;
\sup_{r\in[r_{\min},r_{\max}]}\big|g'(r)\big|.
\]
In particular, since $\cosh(\alpha r)\lesssim e^{\alpha r}$ and $g(r)\lesssim e^{\alpha r}/r$ on $[r_{\min},r_{\max}]$, we also have the following rate
\[
\widetilde{\mathsf{Pen}}_\kappa(r_{\min},r_{\max})
=
\pO \left(
\frac{1}{|\kappa|}\,e^{2\alpha r_{\max}}\,r_{\max}^2\,\Delta r
\right),
\qquad
\Delta r=r_{\max}-r_{\min}.
\]
\end{corollary}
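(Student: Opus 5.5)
The plan is to apply Lemma~\ref{lem:lorentz-ip-diff-general} directly with $a=s_\mu$, $b=s_\mu$, $c=s_\nu$, i.e.\ $(\bx,\by,\bz)=(\boldsymbol\xi_\mu,\boldsymbol\xi_\mu,\boldsymbol\xi_\nu)$, and then take the minimum over $\nu\neq\mu$. With this choice the Euclidean gap becomes $\Delta^{\mathrm E}(s_\mu;s_\mu,s_\nu)=\|s_\mu-s_\nu\|_2^2-0=\|s_\mu-s_\nu\|_2^2$. The exact decomposition \eqref{eq:lorentz-gap-decomposition} then gives
\[
\braket{\boldsymbol\xi_\mu,\boldsymbol\xi_\mu}_L-\braket{\boldsymbol\xi_\mu,\boldsymbol\xi_\nu}_L
=\frac{1}{2|\kappa|}g(\|s_\mu\|_2)^2\,\|s_\mu-s_\nu\|_2^2-\mathsf{Pen}_\kappa(s_\mu;s_\mu,s_\nu).
\]
Since $g$ is increasing and $\|s_\mu\|_2\ge r_{\min}$, the main term is at least $f_\kappa(r_{\min})\|s_\mu-s_\nu\|_2^2$. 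Taking the minimum over $\nu$ on both sides, with the penalty replaced by its supremum over the shell, yields $\Delta^{\mathrm H}_\mu\ge f_\kappa(r_{\min})\Delta^{\mathrm E}_\mu-\widetilde{\mathsf{Pen}}_\kappa$. This proves the first claim once we define $\widetilde{\mathsf{Pen}}_\kappa$ as that supremum, restricted to configurations with $a=b$.

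The real work is the explicit penalty bound, which must carry a factor $\Delta r$ rather than the coarse rate in Lemma~\ref{lem:lorentz-ip-diff-general}. I would bound the three pieces of \eqref{eq:penalty-1}--\eqref{eq:penalty-2}, writing $r_b=\|b\|_2$ and $r_c=\|c\|_2$, both in $[r_{\min},r_{\max}]$, so that $|r_b-r_c|\le\Delta r$.
\begin{itemize}
\item[(i)] For the $\cosh$ difference, the mean value theorem gives $|\cosh(\alpha r_b)-\cosh(\alpha r_c)|\le\alpha\sinh(\alpha r_{\max})\Delta r$. Multiplying by $\cosh(\alpha\|a\|_2)$ gives at most $\alpha\cosh^2(\alpha r_{\max})\Delta r$.
\item[(ii)] For the term $\tfrac{g(r_b)}{2}(r_c^2-r_b^2)$, use $|r_c^2-r_b^2|\le 2r_{\max}\Delta r$.
\item[(iii)] For the term $(g(r_b)-g(r_c))\braket{a,c}$, the mean value theorem gives $|g(r_b)-g(r_c)|\le C_g\Delta r$, and $|\braket{a,c}|\le r_{\max}^2$.
\end{itemize}
Multiplying (ii) and (iii) by the prefactor $g(\|a\|_2)\le g(r_{\max})$ gives terms of the form $C\,g(r_{\max})\,r_{\max}^2\,\Delta r$ with $C=C_g$, plus a $g(r_{\max})^2 r_{\max}\Delta r$ piece from (ii). That extra piece needs care to fit the stated form; I would absorb it using $g(r_{\max})\le C_g\, r_{\max}$-type comparisons, or simply fold it into the constant.

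Finally, the rate follows by substitution. We have $\cosh^2(\alpha r_{\max})\lesssim e^{2\alpha r_{\max}}$. On the interval, $C_g\lesssim \alpha e^{\alpha r_{\max}}/r_{\min}$, so $C_g\, g(r_{\max})\, r_{\max}^2\lesssim e^{2\alpha r_{\max}} r_{\max}$, which is within $e^{2\alpha r_{\max}}r_{\max}^2$ up to constants depending on $r_{\min}$. Together these give $\widetilde{\mathsf{Pen}}_\kappa=\pO(|\kappa|^{-1}e^{2\alpha r_{\max}}r_{\max}^2\Delta r)$.

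I expect the main obstacle to be the constant bookkeeping in step (ii). Reconciling the $g(r_b)$ factor with exactly the displayed bound may require stating the result with a slightly larger constant.
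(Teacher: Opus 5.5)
\paragraph{Verdict.} The paper gives no separate proof of this corollary. It is evidently meant to follow from Lemma~\ref{lem:lorentz-ip-diff-general} via the specialization $a=b=s_\mu$, $c=s_\nu$ that you use. Your derivation of $\Delta^{\mathrm H}_\mu\ge f_\kappa(r_{\min})\Delta^{\mathrm E}_\mu-\widetilde{\mathsf{Pen}}_\kappa$ is correct, with $\widetilde{\mathsf{Pen}}_\kappa$ the supremum of the penalty over the shell, and so is your final $O(\cdot)$ rate.

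\paragraph{The gap.} The explicit penalty inequality is stated with unit constants, and your argument does not reach it. Bounding the three pieces of $\mathsf{Pen}_\kappa$ separately in absolute value leaves the extra term $\frac{1}{|\kappa|}g(r_{\max})^2r_{\max}\Delta r$ from your step (ii). The comparison you propose for absorbing it, $g(r_{\max})\le C_g r_{\max}$, is false in general. Since $g$ is convex, $C_g=g'(r_{\max})$. Because $r^2g'(r)=\alpha r\cosh(\alpha r)-\sinh(\alpha r)$, the comparison is equivalent to $\alpha r_{\max}\ge 2\tanh(\alpha r_{\max})$, which fails for $\alpha r_{\max}\lesssim 1.9$. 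Where it does hold, it still doubles the second term. Folding the term into a constant proves a weaker inequality than the displayed one. The comparison that always works is $g(r_{\max})^2r_{\max}\le\alpha\cosh^2(\alpha r_{\max})$, which is equivalent to $\tanh^2x\le x$. It gives the bound with a factor $2$ on the first term: enough for the rate and for downstream use, but still not the statement.

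\paragraph{The missing idea.} $\mathsf{Pen}_\kappa$ enters with a minus sign, so you only need an upper bound on the signed penalty, and its first two pieces always have opposite signs. Write $|\kappa|\,\mathsf{Pen}_\kappa(a;b,c)=T_1+T_2+T_3$, with $r_a=\|a\|_2$ and so on, where
\begin{itemize}
\item $T_1=\cosh(\alpha r_a)\big(\cosh(\alpha r_b)-\cosh(\alpha r_c)\big)$,
\item $T_2=\tfrac12 g(r_a)g(r_b)(r_c^2-r_b^2)$,
\item $T_3=-g(r_a)\big(g(r_b)-g(r_c)\big)\braket{a,c}$.
\end{itemize}

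If $r_c\le r_b$, then $T_2\le 0$, so $T_1+T_2\le T_1\le\alpha\cosh^2(\alpha r_{\max})\Delta r$ by your (i).

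If $r_c>r_b$, the tangent-line inequality for $\cosh$ gives $T_1\le-\alpha\cosh(\alpha r_a)\sinh(\alpha r_b)(r_c-r_b)$. Use $g(r_b)=\sinh(\alpha r_b)/r_b$ and $g(r_a)\le\alpha\cosh(\alpha r_a)$, and set $u=\alpha r_b$, $t=\alpha(r_c-r_b)$. Then
\[
T_1+T_2\le\alpha\cosh(\alpha r_a)\sinh(\alpha r_b)\,\frac{(r_c-r_b)^2}{2r_b}
=\cosh(\alpha r_a)\,t\,\frac{\sinh u}{u}\,\frac{t}{2}
\le\cosh(\alpha r_a)\,t\,\cosh u\cosh t
\le\alpha\,\Delta r\,\cosh^2(\alpha r_{\max}).
\]
The last two steps use $\sinh u\le u\cosh u$, $t/2\le\cosh t$, and $\cosh u\cosh t\le\cosh(u+t)=\cosh(\alpha r_c)$.

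Adding your bound (iii) on $T_3$ gives exactly the displayed inequality, for arbitrary $a,b,c$ in the shell. This generality matters because Lemma~\ref{lem:margin-bridge} reuses $\widetilde{\mathsf{Pen}}_\kappa$ with $a=\mathtt v\neq b$. So you should define $\widetilde{\mathsf{Pen}}_\kappa$ as the supremum over all such triples, not only those with $a=b$.
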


\clearpage

\subsection{Technical Tools}

\begin{definition}
    For a point process $\bX$, its intensity measure $\nu$, evaluated on a Borel set $B$, is defined as 
    \[
    \nu(B) = \EE[\bX(B)].
    \]
\end{definition}

\begin{theorem}[Campbell's Theorem \cite{baddeley2007spatial}]\label{thm:campbell}
    Let $\bX$ be a point process on $S$ and let $f: S \rightarrow \R$ be a measurable function.
    Then the random sum
    \[
    T \coloneqq \sum_{x\in \bX}
    f(x)
    \]
    is a random variable, with expected value 
    \[
    \EE 
    \left[
    \sum_{x\in \bX}
    f(x)
    \right]
    =
    \int_S f(x)\,\nu(dx).
    \]
    Further, if $\bX$ is a point process on $\R^d$ with constant intensity $\rho$, the expectation becomes
    \[
    \EE 
    \left[
    \sum_{x\in \bX}
    f(x)
    \right]
    =
    \rho
    \int_{\mathbb R^d} f(x)\,dx,
    \]
\end{theorem}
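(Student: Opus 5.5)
We would prove Campbell's theorem by the standard measure-theoretic induction: first for indicators, then for simple functions, then for nonnegative measurable functions by monotone convergence, and finally for signed functions by splitting into positive and negative parts. Throughout we regard $\bX$ as a random counting measure on $S$, so that $\sum_{x\in\bX} f(x) = \int_S f\,d\bX$. For each Borel set $B$, the count $\bX(B)$ is a random variable; this is part of the definition of a point process.

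\textbf{Step 1 (indicators and simple functions).} For $f=\mathbf 1_B$ with $B$ Borel, the sum $\sum_{x\in\bX}\mathbf 1_B(x)$ equals $\bX(B)$. It is therefore a random variable, and its expectation is $\nu(B)=\int_S \mathbf 1_B\,d\nu$ by the definition of the intensity measure. For a nonnegative simple function $f=\sum_{j=1}^m c_j\mathbf 1_{B_j}$ with $c_j\ge 0$, the random sum is $\sum_j c_j\bX(B_j)$. This is measurable as a finite combination of random variables, and linearity of expectation gives $\EE[T]=\sum_j c_j\nu(B_j)=\int_S f\,d\nu$. Both sides may equal $+\infty$, which is harmless because all terms are nonnegative.

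\textbf{Step 2 (nonnegative measurable $f$).} Choose simple functions $0\le f_n\uparrow f$ pointwise. For every realization of $\bX$, monotone convergence applied to the counting measure $\bX$ gives $T_n\coloneqq\int f_n\,d\bX\uparrow\int f\,d\bX=T$. Hence $T$ is a pointwise limit of random variables and is measurable, with values in $[0,\infty]$. Applying monotone convergence again, this time under $\EE$, yields $\EE[T]=\lim_n\EE[T_n]$. On the other side, monotone convergence for $\nu$ yields $\lim_n\int f_n\,d\nu=\int f\,d\nu$. This is the step needing the most care: we interchange the limit with two different integrals, and we must keep track of the possibly infinite sums. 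It is also where we would state the implicit hypothesis of the theorem.

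\textbf{Step 3 (signed $f$ and constant intensity).} For general measurable $f$ we write $f=f^+-f^-$ and apply Step 2 to each part. This requires $\int_S|f|\,d\nu<\infty$, a hypothesis the statement leaves implicit and which we would add. Under it, $\EE\bigl[\sum_{x\in\bX}|f(x)|\bigr]<\infty$, so the sum converges absolutely almost surely. Subtracting the two identities from Step 2 then gives the claim. Finally, if $\bX$ has constant intensity $\rho$ on $\R^d$, then by definition $\nu(B)=\rho\,|B|$ for every Borel $B$, i.e.\ $\nu(dx)=\rho\,dx$. Substituting this into the general formula gives $\EE[T]=\rho\int_{\R^d}f(x)\,dx$.
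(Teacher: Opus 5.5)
Your proof is correct. The paper gives no argument of its own for this statement and simply cites Baddeley (2007); your chain of indicators via the definition of $\nu$, simple functions by linearity, nonnegative $f$ by monotone convergence, and then $f=f^+-f^-$ is the standard proof of Campbell's formula. You are also right that the signed case needs $\int_S |f|\,d\nu<\infty$ (or $f\ge 0$), which the statement leaves implicit; the paper only applies the theorem to the nonnegative integrand $\lambda_j(s)\lambda_j(s')$, so your Step 2 already covers its use.
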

where $\nu (dx) = \rho dx$.

\begin{lemma}[Markov's Inequality]\label{lem:markov-inq}
    Let $X$ be a nonnegative random variable and $a>0$. Then
    \[
    \mathrm{Pr}[X \geq a]
    \leq 
    \frac{\EE[X]}{a}.
    \]
\end{lemma}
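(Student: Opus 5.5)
We prove the bound pointwise and then integrate. The key observation is that the indicator of the event $\{X \ge a\}$ is dominated by the rescaled variable $X/a$. Concretely, since $X \ge 0$ and $a>0$, we claim
\[
a\,\mathbf{1}\{X \ge a\} \;\le\; X
\]
holds for every outcome. We check the two cases separately. On the event $\{X \ge a\}$ the left-hand side equals $a$, which is at most $X$ by definition of the event. On the complement the left-hand side is $0$, which is at most $X$ by nonnegativity. Nonnegativity is used only in this second case, and it is the reason the hypothesis cannot be dropped.

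Next we take expectations on both sides. Monotonicity and linearity of the expectation give
\[
a\,\mathrm{Pr}[X \ge a] \;=\; \EE\big[a\,\mathbf{1}\{X \ge a\}\big] \;\le\; \EE[X].
\]
This is valid whether $\EE[X]$ is finite or $+\infty$; in the latter case the inequality is trivial. Dividing by $a>0$ gives the claim.

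There is no real obstacle here. The only point requiring care is that the expectation of a nonnegative random variable is always well defined in $[0,\infty]$, so no integrability assumption is needed for the monotonicity step. In the proof of \cref{thm:hyperbolicity}, we expect to apply this lemma with $X = \Delta(s_a,s_b,s_c,s_d)$, which is nonnegative because $L_{(1)} \ge L_{(2)}$. We would bound $\EE[\Delta]$ via Campbell's theorem (\cref{thm:campbell}) and choose $2\delta = \EE[\Delta]/\eta$.
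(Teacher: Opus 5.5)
Your proof is correct. The pointwise bound $a\,\mathbf{1}\{X\ge a\}\le X$ followed by monotonicity of expectation is the standard argument, and the paper simply states this lemma as a standard tool without giving a proof.

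One minor aside on your closing remark. In the proof of \cref{thm:hyperbolicity}, the paper does not apply Markov's inequality to $\Delta$ directly. It applies it to $\varepsilon_{\max}$ conditioned on the event $\pA$, bounding $\EE[\varepsilon_{\max}\mid\pA]$ via Campbell's theorem and the lower bound $\lambda_i(s)\lambda_i(s')\ge e^{-D/c^2}$ on $\pA$, and then passes to $\Delta$ through $\Delta\le 2\varepsilon_{\max}$.
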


\clearpage

\section{Proof of Main Text Results}
\label{appendix:proofs}

\subsection{Proof of Proposition~\ref{prop:MHN-Bayes-optimal}}\label{proof:prop:MHN-Bayes-optimal}

\begin{proof}
The objective is convex in $z$. Differentiating and setting to zero:
\begin{equation}
    \frac{d}{dz}\,\mathbb{E}_{p(\mu \mid v)}\left[\|\xi_\mu - z\|_2^2\right]
    = -2\sum_{\mu=1}^M p(\mu \mid v)\,(\xi_\mu - z) = 0.
\end{equation}
Solving directly:
\begin{equation}
    z^* = \sum_{\mu=1}^M p(\mu \mid v)\,\xi_\mu.
\end{equation}
Substituting the Boltzmann posterior:
\begin{equation}
    z^* 
    = \sum_{\mu=1}^M \frac{e^{\langle v,\,\xi_\mu\rangle}}{\sum_\nu e^{\langle v,\,\xi_\nu\rangle}}\,\xi_\mu
    = \sum_{\mu=1}^M \mathrm{softmax}_\mu\!\left(h_1^{\mathrm{MHN}}(v),\ldots,h_M^{\mathrm{MHN}}(v)\right)\xi_\mu
    = \mathrm{MHN}(v). \qedhere
\end{equation}
\end{proof}

\clearpage

\subsection{Proof of \cref{thm:hyperbolicity}}\label{proof:hyperbolicity}

Define the kernel
\[
K(s, s^\prime)
\coloneqq 
\sum_{i=1}^N
\lambda_i(s) \lambda_i(s^\prime),
\quad 
\lambda_i(s) = 
\exp\left(
-
\frac{\norm{s-s_i}^2_2}{2 \sigma_i^2}
\right).
\]
Define the corresponding metric-like distance
\[
d(s, s^\prime)
=
-
\ln 
\left(
K(s, s^\prime)
\right).
\]

Our goal is to show that for any $4$ points $(s_a,s_b,s_c,s_d)$, the quantity
\[
\Delta = 
d(s_a,s_b) + d(s_c,s_d)
-
\max
\{
d(s_a,s_c) + d(s_b,s_d)
, 
d(s_a,s_d) + d(s_b,s_c)
\},
\]
is bounded by a constant $2\delta$ with high probability.


\begin{definition}\label{def:d-star}
    \[
    K_{\max}(s,s^\prime) \coloneqq\max_i \braket{\lambda_i(s), \lambda_i(s^\prime)},
    \quad 
    d^*( s,s^\prime ) \coloneqq - \ln ( K_{\max}(s,s^\prime) ).
    \]
\end{definition}

\begin{definition}
    Let $i \coloneqq \arg\max_{j\in [N]} \sigma_{j}$, then
    \[
    d_i( s,s^\prime ) \coloneqq - \ln ( \lambda_i(s) \lambda_i(s^\prime) ).
    \]
\end{definition}
\begin{lemma}\label{lem:star-metric}
    For any $j \in [N]$, $d_j$ is $0$-hyperbolic.
\end{lemma}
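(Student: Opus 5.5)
The plan is to show that $d_j$ has a \emph{separable} (additive) structure. For such a function the three pair-sums in the four-point condition \eqref{eqn:4-point} coincide identically, so the Gromov excess is exactly zero. No probabilistic argument is needed; the claim holds deterministically for every realization of $s_j$ and $\sigma_j$.

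\textbf{Step 1 (additive form).} Fix $j \in [N]$. Since $\lambda_j(s)=\exp\!\big(-\|s-s_j\|_2^2/(2\sigma_j^2)\big)$ and the logarithm turns the product into a sum, we get
\[
d_j(s,s^\prime) = -\ln\lambda_j(s) - \ln\lambda_j(s^\prime) = f_j(s)+f_j(s^\prime),
\qquad f_j(s)\coloneqq \frac{\|s-s_j\|_2^2}{2\sigma_j^2}\ge 0 .
\]
So $d_j$ is a ``star'' semi-metric: each point $s$ is attached to a common hub (the field center $s_j$) by a leaf of length $f_j(s)$. This is the distance structure of a star tree, except that $d_j(s,s)=2f_j(s)\neq 0$.

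\textbf{Step 2 (four-point computation).} For any quadruple $(s_a,s_b,s_c,s_d)$, each of the three pair-sums in $\pP$ equals
\[
f_j(s_a)+f_j(s_b)+f_j(s_c)+f_j(s_d).
\]
Hence $L_{(1)}=L_{(2)}=L_{(3)}$ and $\Delta=L_{(1)}-L_{(2)}=0$. Therefore \eqref{eqn:4-point} holds with $\delta=0$ for all quadruples, and in particular in the statistical sense of Definition~\ref{def:stat-hyp} for every $\eta>0$.

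\textbf{Main obstacle (interpretive).} The only real subtlety is what ``$0$-hyperbolic'' should mean here. Since $d_j(s,s)\neq 0$, $d_j$ is not a metric, and $(\pS,d_j)$ is not geodesic. Definition~\ref{def:delta-hyp} therefore cannot be applied in its thin-triangle form, and the statement must be read via the four-point formulation, which is exactly what the subsequent argument uses through $\Delta$.

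I would state this convention explicitly. I would also note that the result only uses that $d_j$ is a sum of a function of $s$ and the same function of $s^\prime$. Consequently it holds for every $j$, including the index of maximal width $\sigma_j$ used in the next definitions, and it holds for any other single-field tuning shape as well.
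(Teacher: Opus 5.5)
Your proof is correct and follows the same route as the paper: it also writes $d_j(s,s^\prime)=a_j\bigl(w_j(s)+w_j(s^\prime)\bigr)$ with $a_j=1/(2\sigma_j^2)$ and $w_j(s)=\|s-s_j\|_2^2$, then concludes by direct computation that the three pair-sums coincide, which is how the lemma is later used ($S_1^i=S_2^i=S_3^i$). Your point that ``$0$-hyperbolic'' must be read through the four-point formulation, since $d_j(s,s)\neq 0$ and $(\pS,d_j)$ is not geodesic, is a valid clarification that the paper leaves implicit.
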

\begin{proof}
    Observe 
    \begin{align*}
        d_j( s,s^\prime ) &=
        -\ln \bigl( \lambda_j(s) \lambda_j(s^\prime) \bigr)
        \\
        &=
        \frac{1}{2 \sigma_j^2}
        (
        \norm{ s - s_j }^2_2 + \norm{ s^\prime - s_j }^2_2
        )
        \\
        &=
        a_j ( w_j(s) + w_j(s^\prime)),
    \end{align*}
    where $a_j = \frac{1}{2 \sigma_j^2}$, and $w_j(s) = \norm{ s - s_j }^2_2$.

    The proof concludes by direct calculation of the 4-point condition.
\end{proof}

\begin{proof}[Proof of \cref{thm:hyperbolicity}]
\hfill

Let $i = \arg \max_{j\in[N]} \sigma_j$, and decompose
\[
K(s,s') 
=
\lambda_i(s) \lambda_i(s^\prime) + R_i(s,s'),
\qquad
R_i(s,s') \coloneqq K(s,s') - \lambda_i(s) \lambda_i(s^\prime) = \sum_{j \neq i} \lambda_j(s)\lambda_j(s') \geq 0.
\]
Then
\begin{align*}
    d(s,s') 
    &= 
    -\ln 
    \left(
    \lambda_i(s) \lambda_i(s^\prime)
    \left( 1 +
    \frac{R_i(s,s^\prime)}{\lambda_i(s) \lambda_i(s^\prime)}
    \right)
    \right)
    \\
    &= d_i(s,s') - \varepsilon(s,s'),
\end{align*}
where $\varepsilon(s,s') \coloneqq \ln\!\left( 1 + \frac{R_i(s,s')}{ \lambda_i(s) \lambda_i(s^\prime)}\right) \geq 0$.

For four points $(s_x,s_y,s_z,s_w)$, define the three pair-sums under $d$ and $d_i$:
\[
S_k = (\text{$k$-th pair-sum under } d),
\qquad
S_k^i = (\text{$k$-th pair-sum under } d_i),
\qquad k = 1, 2, 3.
\]
For example, $S_1 = d(x,y) + d(z,w)$ and $S_1^i = d_i(x,y) + d_i(z,w)$.
By the decomposition,
\[
S_k = S_k^i - P_k,
\quad \text{where }
P_k \coloneqq \sum_{\text{two pairs in } S_k} \varepsilon(\cdot,\cdot)
,\quad P_k \in [0,\, 2\varepsilon_{\max}],
\]
and $\varepsilon_{\max} \coloneqq \max_{(u,v) \in \binom{(x,y,z,w)}{2}} \varepsilon(u,v)$.

\textbf{Claim:}  $\Delta(x,y,z,w) \leq 2\varepsilon_{\max}$.
\begin{proof}[Proof of Claim]
By Lemma~\ref{lem:star-metric}, $S_1^i = S_2^i = S_3^i \coloneqq S$.
Therefore 
\(
S_k = S - P_k,
\)
and 
\[
\Delta = 
\max_{(1)} S_k - \max_{(2)} S_k
=
\min_{(1)} P_k - \min_{(2)} P_k
\leq 
\max_k
P_k - \min_k P_k
\leq 
2 \varepsilon_{\max},
\]
where the order flips because $S_k$ is decreasing in $P_k$.
\end{proof}

Note that now the goal is to check the condition where 
\[
\varepsilon(s, s^\prime) 
=
\ln\!\left( 1+ \frac{R_i(s,s')}{ \lambda_i(s) \lambda_i(s^\prime)}\right) 
<
\delta,
\]
for some constant $\delta > 0$.

Let the centers $\{ s_j\}_{j\neq i}$ be distributed as a Poisson point process $\Phi \sim \mathrm{PPP}(\rho)$ on $[0,L]^D$ with intensity $\rho = \tfrac{N}{L^D}$.
By the parallelogram identity, for any center $s_j$:
\[
\norm{s-s_j}^2_2 + \norm{s^\prime - s_j}^2_2
=
2 \norm{m - s_j}^2_2
+
\frac{1}{2}
\norm{s - s^\prime}^2_2,
\]
where $m = \frac{1}{2}(s+s^\prime)$ is the midpoint. 
Substituting into the product of activations,
\[
\lambda_j(s) \lambda_j(s^\prime)
=
\exp 
\left(
-
\frac{\norm{s-s_j}^2_2 + \norm{s^\prime - s_j}^2_2}{2 \sigma_j^2}
\right)
=
\underbrace{
\exp 
\left(
-
\frac{ \norm{s-s^\prime}^2_2 }{4 \sigma_j^2}
\right)
}_{\leq 1}
\cdot 
\exp 
\left(
-
\frac{ \norm{m-s_j}^2_2 }{\sigma_j^2}
\right).
\]

By Campbell's theorem (\cref{thm:campbell}), the expected residual satisfies
\begin{align*}
    \EE[R_i \mid \{ \sigma_j\}]
    &=
    \EE
    \left[
    \sum_{s_j \in \Phi}
    \lambda_j(s) \lambda_j(s^\prime)
    \right]
    \\
    &\leq 
    \rho 
    \int_{\pS}
    \exp 
    \left(
    -
    \frac{\norm{m-s_j}^2_2}{\sigma_j^2} 
    \right)
    d s_j
    \\
    &\leq 
    \rho 
    \int_{\pS}
    \exp 
    \left(
    -
    \frac{\norm{u}^2_2}{\sigma_j^2} 
    \right)
    d u
    \\
    &=
    \rho 
    ( \pi \sigma_j^2)^{\tfrac{D}{2}},
\end{align*}
where the second inequality extends the domain to $\mathbb{R}^D$ (after the change of variable $u = m - s_j$), and the final equality applies the standard Gaussian integral.

Finally, 

\[
\EE_{s_j}
[\lambda_j(s) \lambda_j(s^\prime)
\mid \sigma_j
]
\leq 
\frac{1}{L^D}
\int_{\R^D}
e^{ - \norm{m-s_j}^2_2 / \sigma_j^2 }
d s_j
=
\frac{(\pi \sigma_j^2)^{\tfrac{D}{2}}}{L^D},
\]
where $m$ is the midpoint, i.e.\ $m = \frac{1}{2}(s + s^\prime)$, and the second equation comes from the closed-form of the Gaussian product integral.

Summing over $j\neq i$ gives
\[
\EE[R_i]
\leq 
(N-1) 
\frac{(\pi)^{\nicefrac{D}{2}}\EE[\sigma_j^D]}{L^D}
= C_D \cdot \mu
,
\]
where $C_D = \pi^{\nicefrac{D}{2}}$, and $\mu = N \EE[\sigma_j^D]/ L^D$.

Define the event 
\[
\pA \coloneqq 
\{
\sigma_i \geq c L
\},
\quad 
c \in (0, 1).
\]

By $\norm{s - s_i} \leq L \sqrt{D}$, on $\pA$:
\begin{equation}
    \lambda_i(s) \lambda_i(s^\prime) 
    \geq 
    e^{-
    \frac{D}{c^2}}.
\end{equation}

By Markov's inequality (Lemma~\ref{lem:markov-inq}):
\begin{equation}\label{eqn:markov-eps-max}
    \mathrm{Pr}
    [\varepsilon_{\max} \geq 2\delta \mid \pA ]
    \leq 
    \frac{\EE[\varepsilon_{\max} \mid \pA]}{2\delta}
    \leq 
    \frac{6 e^{\tfrac{D}{c^2}} C_D \mu }{2\delta}.
\end{equation}

The probability that $\pA$ fails is given by the CDF of the exponential distribution: 
\[
        \mathrm{Pr}
        \left[
        \sigma_i < c L
        \right]
        =
        \left(
        1 - 
        e^{-\tfrac{cL}{\beta}}
        \right)^N
        \leq 
        \exp 
        \left(
        - N e^{\tfrac{-cL}{\beta}}
        \right),
\]
where the inequality comes from $(1-x)^N \leq e^{-Nx}$, for $x\in [0, 1]$.

Setting the Markov bound in \eqref{eqn:markov-eps-max} equal to a target failure probability $\eta \in (0,1)$, we obtain $2\delta = \tfrac{6\, e^{D/c^2} C_D \mu}{\eta}$.
Finally, by a union bound over $\{\varepsilon_{\max} \geq 2\delta \mid \pA\}$ and $\pA^c$, with probability at least $1- \eta - \mathrm{Pr}[\pA^c]$ we have
\begin{equation}\label{eqn:linear-neurons}
\Delta \leq 
2 \varepsilon_{\max} 
\leq 
2 \delta
=
\frac{6\, e^{\tfrac{D}{c^2}} C_D \mu }{\eta}.
\end{equation}

\end{proof}

\subsubsection{Breaking Conditions}\label{disc:hyperbolicity-distributions}


\paragraph{Distributions of $\sigma$.}
Here we study the worst-case $4$-point condition, where all $4$ points sit at corners of the domain.
Assume the four points $(a,b,c,d)$ form a square of side $L$ in $\R^D$.

We can compute the $\delta$ explicitly as follows.

\textbf{The diagonals} between opposite pairs have Euclidean distance $L\sqrt{2}$, and the kernel distance $d(s,s^\prime) \approx \frac{(L\sqrt{2})^2}{4\EE[\sigma^2]} = \frac{2L^2}{4\EE[\sigma^2]}$ (ignoring the constant $\ln N$).
The sum of diagonals is $S_1 = \frac{4L^2}{4\EE[\sigma^2]}$.
The sides between adjacent pairs have Euclidean distance $L$, and the kernel distance $d(s,s^\prime) \approx \frac{L^2}{4\EE[\sigma^2]}$.
The sum of sides is $S_2 = \frac{2L^2}{4\EE[\sigma^2]}$.

$\delta$ is then calculated as 
\begin{equation}\label{eqn:abc1}
\delta = S_1 - S_2 
=
\frac{L^2}{2\EE[\sigma^2]}.
\end{equation}

From \eqref{eqn:abc1} we see that any bounded distributions, such as uniform or constant-size place fields, do not preserve constant $\delta$ as $L\rightarrow\infty$.

Further, plugging other distributions into the proof of \cref{thm:hyperbolicity} shows that the log-normal distribution, whose tail decays more slowly than a Gaussian, also yields $\delta \rightarrow 0$.
This implies that the tail behavior of the distribution affects the hyperbolicity of neural activities.
ok

\paragraph{Other Breaking Conditions.}
Here we discuss conditions of other parameters in the system that would break statistical hyperbolicity.
Specifically, breaking hyperbolicity means $\Delta$ is increasing in $L$.
Based on \cref{eqn:linear-neurons}, one straightforward way to break hyperbolicity is to have
\[
N = o( \frac{\EE[\sigma^D]}{L^D}).
\]
Here we interpret this as the number of place fields instead of the number of active place cells for two reasons.
First, we suspect that, in the multi-field case, with each neuron having $K> 0$ place fields, the hyperbolicity of the space is controlled by the variable $NK$, which is required to scale linearly w.r.t. the size of the environment.
This aligns with the experimental findings in \cite{HARLAND20212178}.
Second, ideally and theoretically, the hyperbolicity of a metric space should not depend on the dimensionality of the space, but the curvature of it, which is more closely related to how place fields cover the environment.
However, this also shows a limitation of our work, where the number of active neurons and the number of place fields are coupled.
Notably, while having the constant $c \geq 1$ does not necessarily break the hyperbolicity, it is not biologically plausible to have place fields larger than the size of the environment.



\clearpage

\subsection{Proof of \cref{thm:capacity}}\label{proof:capacity}

We organize the proof as follows. In Section~\ref{sec:routeB-setup}, we
state two geometric facts derived from Rauch's comparison theorem,
together with the hyperbolic margin event and a Euclidean Chernoff
condition on the query noise. 
In Section~\ref{sec:routeB-decoder}, we
establish a deterministic decoder bound that controls the
recall error under the margin event. 
In Section~\ref{sec:routeB-chernoff}, we bridge the Euclidean and
hyperbolic margins through Lemma~\ref{lem:margin-bridge} and derive a
sub-Gaussian Chernoff bound on the hyperbolic margin event.
Section~\ref{sec:routeB-capacity} combines these ingredients to yield
the asymptotic capacity bound, and Section~\ref{sec:routeB-regimes}
isolates several regimes of interest.

\subsection{Preliminaries}
\label{sec:routeB-setup}

Throughout this section we work on the hyperboloid model
$\mathbb{H}^d_\kappa$ with curvature $\kappa<0$, and write
$\alpha\coloneqq\sqrt{|\kappa|}$. We denote by $\mathbf{o}$ the origin of
$\mathbb{H}^d_\kappa$, by $\Exp_p(\cdot)$ the exponential map at
$p\in\mathbb{H}^d_\kappa$, and by $d_g(\cdot,\cdot)$ the geodesic
distance. The stored patterns are
$\boldsymbol{\xi}_\rho=\Exp_{\mathbf{o}}(x_\rho)$ for $\rho\in[M]$,
with anchor points $x_\rho\in\mathbb{R}^d$, and the query is
$\mathbf{v}=\Exp_{\mathbf{o}}(\mathtt v)$ with
$\mathtt v=x_\mu+\boldsymbol{\sigma} z$, where $z\sim\pN(0,I_d)$ and
$\mu\in[M]$ is the planted index.

We rely on two standard consequences of Rauch's comparison theorem
\citep{rauch1951contribution} on the simply-connected complete
Hadamard manifold $\mathbb{H}^d_\kappa$.

\begin{itemize}[leftmargin=2em]
\item[\textnormal{(L1)}]
\emph{The inverse exponential map is globally $1$-Lipschitz}: for any
$p,x,y\in\mathbb{H}^d_\kappa$,
\[
\bigl\|\Exp^{-1}_{p}(x)-\Exp^{-1}_{p}(y)\bigr\|_{p}
\;\le\;d_g(x,y).
\]
In particular, $\|\Exp^{-1}_{p}(x)\|_{p}=d_g(p,x)$.

\item[\textnormal{(L2)}]
\emph{The exponential map is locally Lipschitz with explicit constant}:
for any $u,w\in B(0,R)\subset T_{p}\mathbb{H}^d_\kappa$,
\[
d_g\bigl(\Exp_p(u),\Exp_p(w)\bigr)
\;\le\;L_{\mathrm{Exp}}(R)\,\|u-w\|_{p},
\qquad
L_{\mathrm{Exp}}(R)\coloneqq\frac{\sinh(\alpha R)}{\alpha R}.
\]
\end{itemize}

We next introduce the central random object of our analysis.

\begin{definition}[Hyperbolic margin event]
\label{def:margin-event}
For a query $\mathbf v\in\mathbb H^d_\kappa$ and indices $\mu\neq\nu$,
the \emph{hyperbolic gap} is
\[
\Delta^{\mathrm H}_{\mu\nu}(\mathbf v)
\;\coloneqq\;
\langle \mathbf v,\boldsymbol{\xi}_\mu\rangle_{\mathcal L}
-\langle \mathbf v,\boldsymbol{\xi}_\nu\rangle_{\mathcal L}.
\]
For $\Gamma>0$, the \emph{margin event} at planted index $\mu$ is
\begin{equation}\label{eqn:routeB-margin-event}
\mathcal M^{\mathrm H}_\mu(\Gamma)
\;\coloneqq\;
\Bigl\{\min_{\nu\neq\mu}\Delta^{\mathrm H}_{\mu\nu}(\mathbf v)\ge\Gamma\Bigr\}.
\end{equation}
\end{definition}

The high-level strategy is to ensure that, on
$\mathcal M^{\mathrm H}_\mu(\Gamma)$ with $\Gamma$ sufficiently large,
the softmax weights concentrate on the planted index $\mu$, and that
this margin event itself holds with high probability under the
following Euclidean concentration assumption on the query noise.

\begin{assumption}[Euclidean Chernoff condition]
\label{ass:chernoff-euclid}
Fix distinct $\mu\neq\nu$, and under the conditional law
$\mathtt v\mid\mu$ define
\[
\Delta^{\mathrm E}_{\mu\nu}(\mathtt v)
\;\coloneqq\;
\|\mathtt v-x_\nu\|_2^2-\|\mathtt v-x_\mu\|_2^2.
\]
There exist constants $\gamma_{\mathrm E}>0$, $K_{\mathrm E}<\infty$
and $d_0\in\mathbb N$ such that for all $d\ge d_0$:
\begin{enumerate}[label=\textup{(A\arabic*)}, leftmargin=2.5em]
\item $\mathbb E[\Delta^{\mathrm E}_{\mu\nu}(\mathtt v)\mid\mu]\ge\gamma^{\mathrm E}d$;
\item conditioned on $\mu$, the centered gap
$\Delta^{\mathrm E}_{\mu\nu}(\mathtt v)
-\mathbb E[\Delta^{\mathrm E}_{\mu\nu}(\mathtt v)\mid\mu]$
is sub-Gaussian with proxy variance at most $K_{\mathrm E}\boldsymbol{\sigma}^2 d$.
\end{enumerate}
\end{assumption}



\subsection{Error Under Margin Event}
\label{sec:routeB-decoder}

We first show that conditional on the margin event
$\mathcal M^{\mathrm H}_\mu(\Gamma)$, the softmax weights concentrate
sharply on the planted index.

\begin{lemma}[Softmax concentration]
\label{lem:softmax-concentration}
On $\mathcal M^{\mathrm H}_\mu(\Gamma)$,
\begin{equation}\label{eqn:softmax-concentration}
\sum_{\nu\neq\mu}w_\nu(\mathbf v)\;\le\;(M-1)\,e^{-\Gamma},
\qquad
w_\mu(\mathbf v)\;\ge\;\frac{1}{1+(M-1)\,e^{-\Gamma}}.
\end{equation}
\end{lemma}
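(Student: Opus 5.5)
The argument is a direct computation from the softmax form of the weights in Definition~\ref{def:HAM}, $w_\nu(\mathbf v)=e^{\langle\mathbf v,\boldsymbol\xi_\nu\rangle_L}/\sum_\rho e^{\langle\mathbf v,\boldsymbol\xi_\rho\rangle_L}$. No geometry is needed beyond this formula. The first step is to normalize by the planted term. Dividing numerator and denominator by $e^{\langle\mathbf v,\boldsymbol\xi_\mu\rangle_L}$ gives
\[
w_\nu(\mathbf v)=\frac{e^{-\Delta^{\mathrm H}_{\mu\nu}(\mathbf v)}}{1+\sum_{\rho\neq\mu}e^{-\Delta^{\mathrm H}_{\mu\rho}(\mathbf v)}},\qquad
w_\mu(\mathbf v)=\frac{1}{1+\sum_{\rho\neq\mu}e^{-\Delta^{\mathrm H}_{\mu\rho}(\mathbf v)}},
\]
where $\Delta^{\mathrm H}_{\mu\nu}$ is the hyperbolic gap of Definition~\ref{def:margin-event}.

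Next I use the margin event. On $\mathcal M^{\mathrm H}_\mu(\Gamma)$ every gap satisfies $\Delta^{\mathrm H}_{\mu\rho}(\mathbf v)\ge\Gamma$ for $\rho\neq\mu$. Hence each term obeys $e^{-\Delta^{\mathrm H}_{\mu\rho}}\le e^{-\Gamma}$, and summing over the $M-1$ indices gives $S\coloneqq\sum_{\rho\neq\mu}e^{-\Delta^{\mathrm H}_{\mu\rho}}\le(M-1)e^{-\Gamma}$.

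The two claims in \eqref{eqn:softmax-concentration} then follow from this bound on $S$.
\begin{itemize}
\item For the off-planted mass, summing the first display over $\nu\neq\mu$ gives $\sum_{\nu\neq\mu}w_\nu=S/(1+S)\le S\le(M-1)e^{-\Gamma}$.
\item For the planted weight, $w_\mu=1/(1+S)$ is decreasing in $S$, so $w_\mu\ge 1/(1+(M-1)e^{-\Gamma})$.
\end{itemize}

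The only point that needs care is the sign convention. The gap must be defined as planted minus competitor, $\langle\mathbf v,\boldsymbol\xi_\mu\rangle_L-\langle\mathbf v,\boldsymbol\xi_\nu\rangle_L$, which is exactly how Definition~\ref{def:margin-event} defines it. I would also note that the weights use inverse temperature $1$; with a factor $\beta$ as in the Boltzmann proposition, $\Gamma$ would simply be replaced by $\beta\Gamma$.
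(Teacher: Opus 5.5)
Your proof is correct and takes the same approach as the paper. The paper bounds each ratio $w_\nu(\mathbf v)/w_\mu(\mathbf v)=e^{-\Delta^{\mathrm H}_{\mu\nu}(\mathbf v)}\le e^{-\Gamma}$ and then sums using $\sum_\rho w_\rho(\mathbf v)=1$, which is your normalization by the planted term written in ratio form.
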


\begin{proof}
For each $\nu\neq\mu$,
$w_\nu(\mathbf v)/w_\mu(\mathbf v)
=\exp\!\bigl(-\Delta^{\mathrm H}_{\mu\nu}(\mathbf v)\bigr)\le e^{-\Gamma}$
on $\mathcal M^{\mathrm H}_\mu(\Gamma)$. Summing over $\nu\neq\mu$ and
using $\sum_\rho w_\rho(\mathbf v)=1$ yields both inequalities.
\end{proof}

We now translate softmax concentration into a deterministic recall
guarantee.

\begin{proposition}[Decoder bound]
\label{prop:decoder-bound}
Set $u_\rho\coloneqq\Exp^{-1}_{\mathbf v}(\boldsymbol{\xi}_\rho)\in T_{\mathbf v}\mathbb{H}^d_\kappa$
and $\bar u\coloneqq\sum_\rho w_\rho(\mathbf v)\,u_\rho$, so that
$H(\mathbf v)=\Exp_{\mathbf v}(\bar u)$. Define
$R(\mathbf v)\coloneqq\max_\rho\|u_\rho\|_{\mathbf v}=\max_\rho d_g(\mathbf v,\boldsymbol{\xi}_\rho)$,
where the second equality follows from \textnormal{(L1)}. Then on
$\mathcal M^{\mathrm H}_\mu(\Gamma)$,
\begin{equation}\label{eqn:decoder-bound}
d_g\bigl(H(\mathbf v),\boldsymbol{\xi}_\mu\bigr)
\;\le\;
2\,r_{\max}\,L_{\mathrm{Exp}}\!\bigl(R(\mathbf v)\bigr)\,(M-1)\,e^{-\Gamma}.
\end{equation}
\end{proposition}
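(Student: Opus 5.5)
We need to bound $d_g(\Exp_{\mathbf v}(\bar u),\boldsymbol\xi_\mu)$. Since $\boldsymbol\xi_\mu=\Exp_{\mathbf v}(u_\mu)$ by definition of $u_\mu=\Exp^{-1}_{\mathbf v}(\boldsymbol\xi_\mu)$, both points are images under the same exponential map $\Exp_{\mathbf v}$. The plan is to pass to the tangent space $T_{\mathbf v}\mathbb H^d_\kappa$ via (L2), bound the tangent discrepancy $\|\bar u-u_\mu\|_{\mathbf v}$ using the softmax concentration of Lemma~\ref{lem:softmax-concentration}, and then combine the two.

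\textbf{Step 1 (Lipschitz transfer).} Both $\bar u$ and $u_\mu$ lie in the ball $B(0,R(\mathbf v))\subset T_{\mathbf v}\mathbb H^d_\kappa$. For $u_\mu$ this is immediate. For $\bar u$ it follows because $\bar u$ is a convex combination of the $u_\rho$, each of norm at most $R(\mathbf v)$. We take the ball to be closed, or enlarge the radius by an arbitrarily small amount and use continuity of $L_{\mathrm{Exp}}$. Applying (L2) then gives
\[
d_g\bigl(H(\mathbf v),\boldsymbol\xi_\mu\bigr)\le L_{\mathrm{Exp}}\bigl(R(\mathbf v)\bigr)\,\|\bar u-u_\mu\|_{\mathbf v}.
\]

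\textbf{Step 2 (tangent discrepancy).} Since $\sum_\rho w_\rho=1$, we can write
\[
\bar u-u_\mu=\sum_{\nu\neq\mu}w_\nu(\mathbf v)\,(u_\nu-u_\mu).
\]
By the triangle inequality and (L1), $\|u_\nu-u_\mu\|_{\mathbf v}\le d_g(\boldsymbol\xi_\nu,\boldsymbol\xi_\mu)$. Because $\boldsymbol\xi_\rho=\Exp_{\mathbf o}(x_\rho)$ with $\|x_\rho\|_2\le r_{\max}$, each pattern lies at geodesic distance $\|x_\rho\|_2\le r_{\max}$ from $\mathbf o$. Hence $d_g(\boldsymbol\xi_\nu,\boldsymbol\xi_\mu)\le 2r_{\max}$. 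Lemma~\ref{lem:softmax-concentration} then yields
\[
\|\bar u-u_\mu\|_{\mathbf v}\le 2r_{\max}\sum_{\nu\neq\mu}w_\nu(\mathbf v)\le 2r_{\max}(M-1)e^{-\Gamma}.
\]
Combining this with Step 1 gives \eqref{eqn:decoder-bound}.

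\textbf{Main obstacle.} The delicate point is Step 2's use of (L1). One must ensure that the $1$-Lipschitz property of $\Exp^{-1}_{\mathbf v}$, and not merely $\|\Exp^{-1}_{\mathbf v}(x)\|=d_g(\mathbf v,x)$, is legitimately invoked on the Hadamard manifold. This holds because, by nonpositive curvature (Rauch comparison / CAT(0)), geodesic triangles are thinner than their Euclidean comparison triangles, so $\Exp^{-1}_{\mathbf v}$ is distance non-increasing. The bound $2r_{\max}$ on pairwise pattern distances uses the standing assumption $\|x_\rho\|_2\le r_{\max}$ for all stored patterns.
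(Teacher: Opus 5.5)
Your proposal is correct and follows the paper's proof essentially step for step. You use (L2) to reduce to $\|\bar u-u_\mu\|_{\mathbf v}$, then the triangle inequality with (L1) and $d_g(\boldsymbol\xi_\nu,\boldsymbol\xi_\mu)\le 2r_{\max}$, and finally Lemma~\ref{lem:softmax-concentration}; you are also slightly more explicit than the paper in checking that $\bar u$ lies in the closed ball of radius $R(\mathbf v)$ and in deriving the $2r_{\max}$ bound through the origin $\mathbf o$.
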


\begin{proof}
By \textnormal{(L2)} applied with $u=\bar u$ and $w=u_\mu\in B(0,R(\mathbf v))$,
\begin{equation}\label{eqn:rauch-step}
d_g\bigl(H(\mathbf v),\boldsymbol{\xi}_\mu\bigr)
\;=\;d_g\bigl(\Exp_{\mathbf v}(\bar u),\Exp_{\mathbf v}(u_\mu)\bigr)
\;\le\;L_{\mathrm{Exp}}\!\bigl(R(\mathbf v)\bigr)\,\|\bar u-u_\mu\|_{\mathbf v}.
\end{equation}
By the triangle inequality and \textnormal{(L1)} applied to each pair
$(\boldsymbol{\xi}_\nu,\boldsymbol{\xi}_\mu)$,
\[
\|\bar u-u_\mu\|_{\mathbf v}
\;=\;\Bigl\|\sum_{\nu\neq\mu}w_\nu(\mathbf v)\,(u_\nu-u_\mu)\Bigr\|_{\mathbf v}
\;\le\;\sum_{\nu\neq\mu}w_\nu(\mathbf v)\,d_g(\boldsymbol{\xi}_\nu,\boldsymbol{\xi}_\mu)
\;\le\;2\,r_{\max}\!\sum_{\nu\neq\mu}w_\nu(\mathbf v).
\]
Combining with \eqref{eqn:rauch-step} and
Lemma~\ref{lem:softmax-concentration} yields
\eqref{eqn:decoder-bound}.
\end{proof}

\begin{remark}
The Lipschitz constant $L_{\mathrm{Exp}}(R(\mathbf v))$ appears only as
a multiplicative prefactor in \eqref{eqn:decoder-bound}, and will
therefore enter the capacity bound only through its logarithm. In the
large-radius regime $\alpha r_{\max}\gg 1$, one has
$\log L_{\mathrm{Exp}}(R(\mathbf v))=\Theta(\alpha r_{\max})$, which is
of strictly lower order than the leading $d\,f_\kappa(r_{\min})$ term
appearing below.
\end{remark}

\subsection{Chernoff Bound on the Margin Event}
\label{sec:routeB-chernoff}

To control the margin event probabilistically, we first restrict to a
high-probability set on which the radial norms are well-behaved.

\begin{lemma}[Noise control]
\label{lem:noise-control}
Let $\mathcal E_z\coloneqq\{\|z\|_2\le\sqrt{2d}\}$. Then
$\Pr[\mathcal E_z^c]\le e^{-d/2}$, and on $\mathcal E_z$,
$\|\mathtt v\|_2\in[r^*_{\min},r^*_{\max}]$ with
\[
r^*_{\min}\coloneqq\max\{r_{\min}-\boldsymbol{\sigma}\sqrt{2d},\,0\},
\qquad
r^*_{\max}\coloneqq r_{\max}+\boldsymbol{\sigma}\sqrt{2d}.
\]
Moreover, under the noise condition
\begin{equation}\label{eqn:noise-bound}
\boldsymbol{\sigma}\sqrt{2d}\;\le\;r_{\min}/2,
\end{equation}
we have $r^*_{\min}\ge r_{\min}/2>0$ and
$R^*\coloneqq r^*_{\max}+r_{\max}\le 5r_{\max}/2$, so that
$L_{\mathrm{Exp}}(R(\mathbf v))\le L_{\mathrm{Exp}}(R^*)$ holds
deterministically on $\mathcal E_z$.
\end{lemma}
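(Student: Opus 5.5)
The statement is Lemma~\ref{lem:noise-control}, which has three parts: a Gaussian norm tail bound, a deterministic radial bracketing on $\mathcal E_z$, and the consequences of the noise condition \eqref{eqn:noise-bound}.

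\textbf{Tail bound.} Write $\|z\|_2^2=\sum_{k=1}^d z_k^2\sim\chi^2_d$. I will use the Laurent--Massart inequality $\Pr[\|z\|_2^2\ge d+2\sqrt{dt}+2t]\le e^{-t}$ with $t=d/2$. This gives the threshold $d+\sqrt2\,d+d=(2+\sqrt2)d$, which is larger than $2d$, so it does not directly yield $\Pr[\|z\|_2^2\ge 2d]\le e^{-d/2}$. I will therefore run a direct Chernoff argument instead:
\[
\Pr[\|z\|_2^2\ge 2d]\le \inf_{\lambda\in(0,1/2)}(1-2\lambda)^{-d/2}e^{-2\lambda d}.
\]
The optimum is at $\lambda=1/4$, which gives $\exp\!\big(-\tfrac d2(1-\ln 2)\big)$. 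The exponent here is $\approx0.153d$, not $d/2$.

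This is the step I expect to be the real obstacle: the stated constant $e^{-d/2}$ is not what the natural computation produces. I would handle it in one of two ways. One option is to enlarge the radius to $\|z\|_2\le c\sqrt d$ with $c$ slightly above $\sqrt{2+\sqrt2}$, so that Laurent--Massart gives exactly $e^{-d/2}$; then \eqref{eqn:noise-bound} is adjusted by the same constant. The other is to keep $\sqrt{2d}$ and replace $e^{-d/2}$ by $e^{-cd}$ with $c=(1-\ln2)/2$. For Theorem~\ref{thm:capacity} only $\Pr[\mathcal E_z^c]\to0$ exponentially in $d$ matters, so either version suffices downstream.

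\textbf{Radial bracketing.} On $\mathcal E_z$, the triangle and reverse triangle inequalities applied to $\mathtt v=x_\mu+\boldsymbol\sigma z$, together with $\|x_\mu\|_2\in[r_{\min},r_{\max}]$, give
\[
\|x_\mu\|_2-\boldsymbol\sigma\|z\|_2\le\|\mathtt v\|_2\le\|x_\mu\|_2+\boldsymbol\sigma\|z\|_2 .
\]
Hence $\|\mathtt v\|_2\in[r^*_{\min},r^*_{\max}]$, where the lower endpoint is clipped at $0$ by nonnegativity of the norm.

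\textbf{Consequences of the noise condition.} Under \eqref{eqn:noise-bound} we have $r^*_{\min}\ge r_{\min}/2$ and $r^*_{\max}\le r_{\max}+r_{\min}/2\le 3r_{\max}/2$. For $R(\mathbf v)=\max_\rho d_g(\mathbf v,\boldsymbol\xi_\rho)$, the triangle inequality through $\bo$ gives
\[
d_g(\mathbf v,\boldsymbol\xi_\rho)\le d_g(\mathbf v,\bo)+d_g(\bo,\boldsymbol\xi_\rho)=\|\mathtt v\|_2+\|x_\rho\|_2,
\]
using that $\Exp_{\bo}$ preserves radial distance. This is at most $r^*_{\max}+r_{\max}=R^*\le 5r_{\max}/2$. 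Finally, $L_{\mathrm{Exp}}(R)=\sinh(\alpha R)/(\alpha R)$ is increasing in $R$ by convexity of $\sinh$, so $L_{\mathrm{Exp}}(R(\mathbf v))\le L_{\mathrm{Exp}}(R^*)$ on $\mathcal E_z$.
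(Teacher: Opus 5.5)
Your radial bracketing and the consequences of the noise condition follow the paper's route. The paper's proof only remarks that they follow from the triangle inequality and the monotonicity of $\sinh(\alpha r)/(\alpha r)$. Your chain $R(\mathbf v)\le d_g(\mathbf v,\mathbf o)+\max_\rho d_g(\mathbf o,\boldsymbol\xi_\rho)=\|\mathtt v\|_2+\max_\rho\|x_\rho\|_2\le R^*$ and your secant-slope argument for $\sinh(x)/x$ fill in exactly what that remark leaves implicit.

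On the tail bound you depart from the paper, and you are right to. The paper asserts $\Pr[\|z\|_2>\sqrt{2d}]\le e^{-d/2}$ as ``a standard $\chi^2$ tail bound''. That inequality is false, not merely out of reach of the natural computation, so you should say so outright rather than hedge:
\begin{itemize}
\item Your Chernoff exponent $\tfrac12(1-\ln 2)\approx 0.153$ is the exact Cram\'er rate for $\Pr[\chi^2_d\ge 2d]$. Since $0.153<0.5$, this probability exceeds $e^{-d/2}$ for all sufficiently large $d$.
\item Concretely, already at $d=6$ one has $\Pr[\chi^2_6>12]=25e^{-6}\approx 0.062>e^{-3}\approx 0.050$.
\end{itemize}
So one of your two repairs is necessary, not optional, and both are valid. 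In the first, $c=\sqrt{2+\sqrt2}$ itself suffices by Laurent--Massart; you do not need to go slightly above it. You then replace $\sqrt{2d}$ by $\sqrt{(2+\sqrt2)d}$ in $r^*_{\min}$, $r^*_{\max}$ and in \eqref{eqn:noise-bound}, and the rest of your argument goes through verbatim. The second repair keeps \eqref{eqn:noise-bound} unchanged and only weakens the rate.

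The tail bound is used downstream only as the additive term in \eqref{eqn:total-error}, which needs nothing beyond vanishing as $d\to\infty$. Either corrected version therefore leaves Theorem~\ref{thm:capacity-routeB} intact.
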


\begin{proof}
The Gaussian-norm bound $\Pr[\|z\|_2>\sqrt{2d}]\le e^{-d/2}$ is a
standard $\chi^2$ tail bound \citep{wainwright2019high}. The remaining
claims follow by the triangle inequality and from $g(r)=\sinh(\alpha r)/(\alpha r)$
being monotone increasing.
\end{proof}

We now combine Assumption~\ref{ass:chernoff-euclid} with the
Euclidean-to-hyperbolic bridge of Lemma~\ref{lem:margin-bridge} to
obtain a Chernoff bound on the margin event.

\begin{proposition}[Hyperbolic Chernoff]
\label{prop:hyperbolic-chernoff}
Suppose Assumption~\ref{ass:chernoff-euclid} and the noise
bound~\eqref{eqn:noise-bound} hold. Define the inflated curvature
parameters
\begin{equation}\label{eqn:hyp-chernoff-params}
\gamma^{\mathrm H}_\kappa
\coloneqq f_\kappa(r^*_{\min})\,\gamma^{\mathrm E}
-\widetilde{\mathsf{Pen}}_\kappa(r^*_{\min},r^*_{\max})/d,
\qquad
K^{\mathrm H}_\kappa
\coloneqq f_\kappa(r^*_{\min})^2\,K^{\mathrm E}\,\boldsymbol{\sigma}^2.
\end{equation}
Then for any $\Gamma<\gamma^{\mathrm H}_\kappa d$,
\begin{equation}\label{eqn:margin-prob}
\Pr\!\bigl[\mathcal M^{\mathrm H}_\mu(\Gamma)^c\bigm|\mu,\,\mathcal E_z\bigr]
\;\le\;(M-1)\,\exp\!\left(-\frac{(\gamma^{\mathrm H}_\kappa d-\Gamma)^2}{2K^{\mathrm H}_\kappa d}\right).
\end{equation}
\end{proposition}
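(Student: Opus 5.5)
The plan is a union bound over $\nu\neq\mu$, a deterministic reduction of each hyperbolic gap to the Euclidean gap via Lemma~\ref{lem:margin-bridge}, and then the sub-Gaussian tail from Assumption~\ref{ass:chernoff-euclid}.

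\emph{Step 1 (union bound).} Since $\mathcal M^{\mathrm H}_\mu(\Gamma)^c=\bigcup_{\nu\neq\mu}\{\Delta^{\mathrm H}_{\mu\nu}(\mathbf v)<\Gamma\}$, we have
\[
\Pr[\mathcal M^{\mathrm H}_\mu(\Gamma)^c\mid\mu,\mathcal E_z]\le\sum_{\nu\neq\mu}\Pr[\Delta^{\mathrm H}_{\mu\nu}(\mathbf v)<\Gamma\mid\mu,\mathcal E_z].
\]
It therefore suffices to bound a single pair by $\exp(-(\gamma^{\mathrm H}_\kappa d-\Gamma)^2/(2K^{\mathrm H}_\kappa d))$.

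\emph{Step 2 (reduction to the Euclidean gap).} By Lemma~\ref{lem:noise-control}, on $\mathcal E_z$ we have $\|\mathtt v\|_2\in[r^*_{\min},r^*_{\max}]$, and the noise bound~\eqref{eqn:noise-bound} gives $r^*_{\min}\ge r_{\min}/2>0$. The stored anchors satisfy $\|x_\rho\|_2\in[r_{\min},r_{\max}]\subseteq[r^*_{\min},r^*_{\max}]$. All three norms thus lie in a common interval, so Lemma~\ref{lem:margin-bridge} applies with $(r^*_{\min},r^*_{\max})$ and yields, deterministically on $\mathcal E_z$,
\[
\Delta^{\mathrm H}_{\mu\nu}(\mathbf v)\ge f_\kappa(r^*_{\min})\,\Delta^{\mathrm E}_{\mu\nu}(\mathtt v)-\widetilde{\mathsf{Pen}}_\kappa(r^*_{\min},r^*_{\max}).
\]
Writing $f\coloneqq f_\kappa(r^*_{\min})$ and $P\coloneqq\widetilde{\mathsf{Pen}}_\kappa$, we get
\[
\{\Delta^{\mathrm H}_{\mu\nu}<\Gamma\}\subseteq\{f\,\Delta^{\mathrm E}_{\mu\nu}<\Gamma+P\}.
\]
Centering and using (A1), $f\,\mathbb E[\Delta^{\mathrm E}_{\mu\nu}\mid\mu]\ge f\gamma^{\mathrm E}d$. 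The event is then contained in
\[
\bigl\{f(\Delta^{\mathrm E}_{\mu\nu}-\mathbb E\Delta^{\mathrm E}_{\mu\nu})<-(f\gamma^{\mathrm E}d-P-\Gamma)\bigr\}=\bigl\{f\,Y<-(\gamma^{\mathrm H}_\kappa d-\Gamma)\bigr\},
\]
where $Y$ is the centered Euclidean gap. The deviation $\gamma^{\mathrm H}_\kappa d-\Gamma$ is positive by hypothesis.

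\emph{Step 3 (sub-Gaussian tail).} By (A2), $fY$ is sub-Gaussian with proxy variance $f^2K^{\mathrm E}\boldsymbol\sigma^2d=K^{\mathrm H}_\kappa d$. The standard Chernoff lower-tail bound $\Pr[fY<-t]\le\exp(-t^2/(2K^{\mathrm H}_\kappa d))$ with $t=\gamma^{\mathrm H}_\kappa d-\Gamma$ gives the per-pair estimate. Summing over the $M-1$ indices $\nu\neq\mu$ yields~\eqref{eqn:margin-prob}.

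\emph{Main obstacle: conditioning on $\mathcal E_z$.} Assumption~\ref{ass:chernoff-euclid} describes the unconditional law of $\mathtt v\mid\mu$, but the bound is stated conditionally on $\mathcal E_z$. I would handle this in one of two ways. The first is to interpret the statement as bounding the joint event $\Pr[\mathcal M^c\cap\mathcal E_z\mid\mu]$: the inclusion of Step 2 holds pointwise on $\mathcal E_z$, so
\[
\Pr[\mathcal M^c\cap\mathcal E_z]\le\Pr[fY<-t],
\]
and the tail bound applies unconditionally. The second is to divide by $\Pr[\mathcal E_z]\ge1-e^{-d/2}$, which costs only a factor $1+o(1)$. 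That factor is absorbed in the asymptotic capacity argument of Section~\ref{sec:routeB-capacity}, or eliminated by adding $e^{-d/2}$ to the final recall-failure bound. I would adopt the first reading, since Lemma~\ref{lem:noise-control} already separately charges $\Pr[\mathcal E_z^c]\le e^{-d/2}$.
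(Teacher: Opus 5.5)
Your proposal is correct and follows the paper's proof step for step: the bridge lemma on $\mathcal E_z$ with the inflated radii, the affine surrogate $f_\kappa(r^*_{\min})\Delta^{\mathrm E}_{\mu\nu}-\widetilde{\mathsf{Pen}}_\kappa(r^*_{\min},r^*_{\max})$ with mean at least $\gamma^{\mathrm H}_\kappa d$ and proxy variance $K^{\mathrm H}_\kappa d$, the one-sided sub-Gaussian Chernoff bound, and a union bound over $\nu\neq\mu$. Your handling of the conditioning on $\mathcal E_z$ is more careful than the paper's: the paper bounds the unconditional surrogate probability and then states the conditional bound without the $1/\Pr[\mathcal E_z]$ factor, whereas the joint-event reading $\Pr[\mathcal M^{\mathrm H}_\mu(\Gamma)^c\cap\mathcal E_z\mid\mu]$ that you adopt is exactly what the union bound in \eqref{eqn:total-error} uses.
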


\begin{proof}
On $\mathcal E_z$, by Lemma~\ref{lem:margin-bridge} applied with the
inflated radii $[r^*_{\min},r^*_{\max}]$,
\begin{equation}\label{eqn:bridge}
\Delta^{\mathrm H}_{\mu\nu}(\mathbf v)
\;\ge\;
f_\kappa(r^*_{\min})\,\Delta^{\mathrm E}_{\mu\nu}(\mathtt v)
\;-\;\widetilde{\mathsf{Pen}}_\kappa(r^*_{\min},r^*_{\max}).
\end{equation}
Define the (sub-Gaussian) surrogate
\[
\widetilde\Delta^{\mathrm H}_{\mu\nu}
\;\coloneqq\;
f_\kappa(r^*_{\min})\,\Delta^{\mathrm E}_{\mu\nu}(\mathtt v)
\;-\;\widetilde{\mathsf{Pen}}_\kappa(r^*_{\min},r^*_{\max}),
\]
so that $\Delta^{\mathrm H}_{\mu\nu}\ge\widetilde\Delta^{\mathrm H}_{\mu\nu}$
on $\mathcal E_z$. Because $\widetilde\Delta^{\mathrm H}_{\mu\nu}$ is a
deterministic affine function of $\Delta^{\mathrm E}_{\mu\nu}(\mathtt v)$,
Assumption~\ref{ass:chernoff-euclid} yields
\[
\mathbb E\!\bigl[\widetilde\Delta^{\mathrm H}_{\mu\nu}\bigm|\mu\bigr]
\;\ge\;f_\kappa(r^*_{\min})\,\gamma^{\mathrm E}\,d
-\widetilde{\mathsf{Pen}}_\kappa(r^*_{\min},r^*_{\max})
\;=\;\gamma^{\mathrm H}_\kappa\,d,
\]
and $\widetilde\Delta^{\mathrm H}_{\mu\nu}-\mathbb E[\widetilde\Delta^{\mathrm H}_{\mu\nu}\mid\mu]$
is sub-Gaussian with proxy variance $K^{\mathrm H}_\kappa\,d$. The
standard one-sided sub-Gaussian Chernoff bound gives, for any
$\Gamma<\gamma^{\mathrm H}_\kappa d$,
\[
\Pr\!\bigl[\widetilde\Delta^{\mathrm H}_{\mu\nu}<\Gamma\bigm|\mu\bigr]
\;\le\;\exp\!\left(-\frac{(\gamma^{\mathrm H}_\kappa d-\Gamma)^2}{2K^{\mathrm H}_\kappa d}\right).
\]
A union bound over $\nu\neq\mu$, combined with
$\Delta^{\mathrm H}_{\mu\nu}\ge\widetilde\Delta^{\mathrm H}_{\mu\nu}$ on
$\mathcal E_z$, yields~\eqref{eqn:margin-prob}.
\end{proof}

\subsection{Capacity Bound}
\label{sec:routeB-capacity}

We now combine the above two steps to control the total recall error.
We first present a more general result below (\cref{thm:capacity-routeB}), where \cref{thm:capacity} is a direct corollary.
\begin{theorem}[Capacity]
\label{thm:capacity-routeB}
Suppose Assumption~\ref{ass:chernoff-euclid} and the noise
bound~\eqref{eqn:noise-bound} hold, that
$\widetilde{\mathsf{Pen}}_\kappa(r^*_{\min},r^*_{\max})=o(d\,f_\kappa(r_{\min}))$,
and that the SNR condition
\begin{equation}\label{eqn:snr-condition}
\boldsymbol{\sigma}^2\,f_\kappa(r^*_{\min})\;=\;O(1)
\quad\Longleftrightarrow\quad
\boldsymbol{\sigma}\;=\;O\!\Bigl(\sqrt{|\kappa|}\,r_{\min}\,e^{-\alpha r_{\min}}\Bigr)
z\end{equation}
holds. Then, as $d\to\infty$, the admissible number of stored patterns
satisfies
\begin{equation}\label{eqn:capacity-rate}
\log M
\;=\;\Theta\!\bigl(d\,f_\kappa(r_{\min})\bigr)
\;-\;\widetilde{\mathsf{Pen}}_\kappa(r_{\min},r_{\max})
\;=\;\Theta\!\left(\frac{d}{|\kappa|}\!\left(\frac{\sinh(\alpha r_{\min})}{r_{\min}}\right)^{\!2}\right)
\;-\;\widetilde{\mathsf{Pen}}_\kappa.
\end{equation}
\end{theorem}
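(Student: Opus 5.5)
The plan is to bound the total failure probability $1-P_{\text{rec}}(\varepsilon)$ by three pieces and choose the margin level $\Gamma$ so that each vanishes as $d\to\infty$. By Proposition~\ref{prop:decoder-bound}, on $\mathcal E_z\cap\mathcal M^{\mathrm H}_\mu(\Gamma)$ we have $d_g(H(\mathbf v),\boldsymbol\xi_\mu)\le 2r_{\max}L_{\mathrm{Exp}}(R^*)(M-1)e^{-\Gamma}$, using Lemma~\ref{lem:noise-control} to replace $R(\mathbf v)$ by the deterministic $R^*\le 5r_{\max}/2$. So recall succeeds once
\[
\Gamma\;\ge\;\log M+\log\bigl(2r_{\max}L_{\mathrm{Exp}}(R^*)\bigr)+\tfrac12\log(1/\varepsilon)\;\eqqcolon\;\Gamma_\varepsilon .
\]
Hence
\[
1-P_{\text{rec}}(\varepsilon)\;\le\;\Pr[\mathcal E_z^c]+\Pr\bigl[\mathcal M^{\mathrm H}_\mu(\Gamma_\varepsilon)^c\mid\mathcal E_z\bigr]\;\le\; e^{-d/2}+(M-1)\exp\!\Bigl(-\tfrac{(\gamma^{\mathrm H}_\kappa d-\Gamma_\varepsilon)^2}{2K^{\mathrm H}_\kappa d}\Bigr),
\]
where the second term comes from Proposition~\ref{prop:hyperbolic-chernoff} and requires $\Gamma_\varepsilon<\gamma^{\mathrm H}_\kappa d$.

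Next we pin down the sizes of $\gamma^{\mathrm H}_\kappa$ and $K^{\mathrm H}_\kappa$.
\begin{itemize}
\item Since $r^*_{\min}\ge r_{\min}/2$ and $g$ is monotone, $f_\kappa(r^*_{\min})=\Theta(f_\kappa(r_{\min}))$ up to constants depending on $\alpha r_{\min}$. I will absorb this into the $\Theta$, which is the intended reading.
\item The hypothesis $\widetilde{\mathsf{Pen}}_\kappa=o(d f_\kappa(r_{\min}))$ then gives $\gamma^{\mathrm H}_\kappa d=\Theta(d f_\kappa(r_{\min}))$.
\item The SNR condition \eqref{eqn:snr-condition} makes $K^{\mathrm H}_\kappa=f_\kappa^2K^{\mathrm E}\boldsymbol\sigma^2=O(f_\kappa(r_{\min}))$.
\end{itemize}

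Now choose $\log M=c\,d f_\kappa(r_{\min})-\widetilde{\mathsf{Pen}}_\kappa$ with a small constant $c$. The additive $\log(r_{\max}L_{\mathrm{Exp}}(R^*))=O(\alpha r_{\max})$ term is of lower order, per the remark after Proposition~\ref{prop:decoder-bound}. Therefore $\gamma^{\mathrm H}_\kappa d-\Gamma_\varepsilon\ge c'\,d f_\kappa$, and the exponent is at least $c'^2 d^2f_\kappa^2/(2K^{\mathrm H}_\kappa d)=\Omega(d f_\kappa)$. Choosing $c$ small relative to $c'^2/K$-constants makes this exponent beat the union-bound prefactor $\log M\le c\,d f_\kappa$. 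The whole bound then tends to $0$. For the matching upper order in $\Theta$, note that the argument breaks once $\log M$ exceeds $\gamma^{\mathrm H}_\kappa d$, since then no admissible $\Gamma$ exists. This identifies $d f_\kappa(r_{\min})$ as the scale, which suffices for the stated $\Theta$.

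To get \cref{thm:capacity} from this, I would substitute $f_\kappa(r)=\frac{1}{2|\kappa|}(\sinh(\alpha r)/r)^2\asymp e^{2\alpha r}/(|\kappa|r^2)$. I would check that $\boldsymbol\sigma=O(r_{\min}\min\{d^{-1/2},\sqrt{|\kappa|}e^{-\alpha r_{\min}}\})$ gives both \eqref{eqn:noise-bound} and \eqref{eqn:snr-condition}. I would also check that $\Delta r=o(d/(\alpha r_{\min}^2))$ with Corollary~\ref{cor:gap-amplification-width} makes $\widetilde{\mathsf{Pen}}_\kappa=o(d f_\kappa)$.

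\textbf{Main obstacle.} The delicate step is this last verification. The penalty scales like $e^{2\alpha r_{\max}}r_{\max}^2\Delta r/|\kappa|$, whereas $f_\kappa$ scales like $e^{2\alpha r_{\min}}/(|\kappa|r_{\min}^2)$. The ratio therefore carries a factor $e^{2\alpha\Delta r}$ that the stated condition on $\Delta r$ does not obviously control. I would first try treating $r_{\min},r_{\max}$ as fixed while $d\to\infty$, so the ratio is $O(1)\cdot\Delta r/d\to0$. Failing that, I would use the sharper bound in Corollary~\ref{cor:gap-amplification-width}, linear in $\Delta r$ through $\cosh^2$ and $C_g$, under a bounded-$\alpha\Delta r$ regime.
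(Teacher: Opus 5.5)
Your proposal is correct and follows essentially the same route as the paper: the decoder bound on $\mathcal E_z\cap\mathcal M^{\mathrm H}_\mu(\Gamma)$ with threshold $\log M+O(\alpha r_{\max})$, the union bound $e^{-d/2}+(M-1)\exp(-(\gamma^{\mathrm H}_\kappa d-\Gamma)^2/(2K^{\mathrm H}_\kappa d))$, and $\log M$ taken as a small constant multiple of $\gamma^{\mathrm H}_\kappa d$, which is admissible because $\gamma^{\mathrm H}_\kappa/K^{\mathrm H}_\kappa=\Omega(1)$ under the SNR condition; you also make the same tacit identification of $f_\kappa(r^*_{\min})$ with $f_\kappa(r_{\min})$ that the paper makes. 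Your closing concern about the factor $e^{2\alpha\Delta r}$ bears on Corollary~\ref{cor:width-tradeoff} and the main-text \cref{thm:capacity}, not on this statement, which simply assumes $\widetilde{\mathsf{Pen}}_\kappa=o(d\,f_\kappa(r_{\min}))$.
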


\begin{proof}
By Proposition~\ref{prop:decoder-bound}, on
$\mathcal M^{\mathrm H}_\mu(\Gamma)\cap\mathcal E_z$ the recall succeeds
(i.e.\ $d_g(H(\mathbf v),\boldsymbol\xi_\mu)\le\varepsilon$) provided
\begin{equation}\label{eqn:gamma-lower}
\Gamma\;\ge\;\Gamma_\star(M)
\;\coloneqq\;
\log\!\left(\frac{2\,r_{\max}\,L_{\mathrm{Exp}}(R^*)\,(M-1)}{\varepsilon}\right)
\;=\;\log M+O(\alpha r_{\max}).
\end{equation}
A union bound over recall failure events gives
\begin{equation}\label{eqn:total-error}
\Pr[\text{recall fail}]
\;\le\;
e^{-d/2}
\;+\;(M-1)\,\exp\!\left(-\frac{(\gamma^{\mathrm H}_\kappa d-\Gamma_\star(M))^2}{2K^{\mathrm H}_\kappa d}\right).
\end{equation}
Set $\log M=c\,\gamma^{\mathrm H}_\kappa d$ for some $c\in(0,1)$ to be
chosen. Then $\Gamma_\star(M)=\log M+O(\alpha r_{\max})$, and
the second term in~\eqref{eqn:total-error} has exponent
\[
\frac{(1-c)^2(\gamma^{\mathrm H}_\kappa)^2 d}{2K^{\mathrm H}_\kappa}(1+o(1))
\;=\;\frac{(1-c)^2}{2}\,\frac{\gamma^{\mathrm H}_\kappa}{K^{\mathrm H}_\kappa}\,
\gamma^{\mathrm H}_\kappa d\,(1+o(1)).
\]
Vanishing total error therefore requires
\begin{equation}\label{eqn:capacity-condition}
c\;<\;\frac{(1-c)^2}{2}\,\frac{\gamma^{\mathrm H}_\kappa}{K^{\mathrm H}_\kappa},
\end{equation}
which admits a positive-constant solution $c$ if and only if
$\gamma^{\mathrm H}_\kappa/K^{\mathrm H}_\kappa=\Omega(1)$. By
\eqref{eqn:hyp-chernoff-params}, this ratio is
$\gamma^{\mathrm H}_\kappa/K^{\mathrm H}_\kappa
\sim\gamma^{\mathrm E}/(f_\kappa(r^*_{\min})K^{\mathrm E}\boldsymbol{\sigma}^2)$,
which is $\Omega(1)$ precisely under the SNR
condition~\eqref{eqn:snr-condition}. Substituting
$\gamma^{\mathrm H}_\kappa\sim f_\kappa(r_{\min})\gamma^{\mathrm E}$
(using $\widetilde{\mathsf{Pen}}_\kappa=o(d\,f_\kappa(r_{\min}))$) into
$\log M=c\,\gamma^{\mathrm H}_\kappa d$ yields~\eqref{eqn:capacity-rate}.
\end{proof}

\begin{remark}
The Lipschitz constant $L_{\mathrm{Exp}}(R^*)$ contributes only an
additive $O(\alpha r_{\max})$ term to $\Gamma_\star(M)$, which
is dominated by the leading $d\,f_\kappa(r_{\min})$ term. Equivalently,
the prefactor $e^{\alpha R^*}$ from $L_{\mathrm{Exp}}$ enters the
capacity bound only after taking a logarithm: an exponentially-bad
multiplicative factor in the recall error contracts to a linear
additive cost in $\log M$. This is the key reason why the
double-exponential scaling in $r_{\min}$ is preserved.
\end{remark}


\begin{corollary}
\label{thm:double-exp-capacity}
Let $\kappa<0$ and $\alpha=\sqrt{|\kappa|}$. Suppose
Assumption~\ref{ass:chernoff-euclid} and the noise
bound~\eqref{eqn:noise-bound} hold, that the stored patterns lie in a
narrow shell with $\alpha r_{\min}\gg 1$ and
\begin{equation}\label{eqn:double-exp-width}
\Delta_r\;=\;o\!\left(\frac{d}{\alpha\,r_{\min}^2}\right),
\end{equation}
and that the SNR condition
\begin{equation}\label{eqn:double-exp-snr}
\boldsymbol{\sigma}\;=\;O\!\Bigl(\sqrt{|\kappa|}\,r_{\min}\,e^{-\alpha r_{\min}}\Bigr)
\end{equation}
holds. Then, as $d\to\infty$, the admissible number of stored patterns
satisfies
\begin{equation}\label{eqn:double-exp-capacity}
\log M
\;=\;\Theta\!\left(\frac{d}{|\kappa|}\,\frac{e^{2\alpha r_{\min}}}{r_{\min}^2}\right),
\qquad
M\;=\;\exp\!\left(\Theta\!\left(\frac{d}{|\kappa|}\,\frac{e^{2\alpha r_{\min}}}{r_{\min}^2}\right)\right).
\end{equation}
\end{corollary}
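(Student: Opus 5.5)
The corollary follows by specializing \cref{thm:capacity-routeB}. I would check its two extra hypotheses under the narrow-shell assumption, and then simplify $f_\kappa(r_{\min})$ in the regime $\alpha r_{\min}\gg 1$.

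\textbf{Step 1: leading term.} Since $\sinh(\alpha r)=\tfrac12 e^{\alpha r}(1-e^{-2\alpha r})$, for $\alpha r_{\min}\gg1$ we have
\[
f_\kappa(r_{\min})=\frac{1}{2|\kappa|}\Big(\frac{\sinh(\alpha r_{\min})}{r_{\min}}\Big)^2=\Theta\Big(\frac{1}{|\kappa|}\frac{e^{2\alpha r_{\min}}}{r_{\min}^2}\Big).
\]
The same estimate holds for $f_\kappa(r^*_{\min})$ up to constants. Under \eqref{eqn:noise-bound} we have $r^*_{\min}\in[r_{\min}/2,r_{\min}]$, so $f_\kappa(r^*_{\min})\le f_\kappa(r_{\min})$. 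Also $\gamma^{\mathrm H}_\kappa\sim f_\kappa(r_{\min})\gamma^{\mathrm E}$ once the penalty is negligible; this is Step 3.

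\textbf{Step 2: SNR.} Condition \eqref{eqn:double-exp-snr} gives $\boldsymbol\sigma^2=O(|\kappa| r_{\min}^2e^{-2\alpha r_{\min}})$. Hence
\[
\boldsymbol\sigma^2 f_\kappa(r^*_{\min})\le\boldsymbol\sigma^2f_\kappa(r_{\min})=O(1),
\]
which is exactly \eqref{eqn:snr-condition}.

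\textbf{Step 3: penalty negligible.} This is the main obstacle. By \cref{cor:gap-amplification-width},
\[
\widetilde{\mathsf{Pen}}_\kappa=O\big(|\kappa|^{-1}e^{2\alpha r_{\max}}r_{\max}^2\Delta r\big),
\]
evaluated at the inflated radii $r^*_{\min},r^*_{\max}$. I need this to be $o\big(d f_\kappa(r_{\min})\big)=o\big(d|\kappa|^{-1}e^{2\alpha r_{\min}}/r_{\min}^2\big)$. Taking the ratio, the requirement becomes
\[
e^{2\alpha\Delta r}\,r_{\max}^2r_{\min}^2\,\Delta r=o(d).
\]

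Here the width hypothesis \eqref{eqn:double-exp-width} must be used, and the exponential factor $e^{2\alpha\Delta r}$ is the delicate part. I would first try the finer bound from \cref{cor:gap-amplification-width}, writing the penalty as $\alpha\cosh^2(\alpha r_{\max})\Delta r$ plus the $C_g$ term. The aim is to show that, in the narrow-shell regime, $\cosh^2(\alpha r_{\max})$ and $g(r_{\max})$ can be compared to their values at $r_{\min}$ up to bounded factors. Concretely, if $\alpha\Delta r=O(1)$ (including the noise inflation $\boldsymbol\sigma\sqrt{2d}$, which is controlled by \eqref{eqn:noise-bound} and \eqref{eqn:double-exp-snr}), the ratio reduces to $O(\alpha r_{\min}^2\Delta r)$ up to polynomial factors. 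That is $o(d)$ precisely by \eqref{eqn:double-exp-width}.

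If $\alpha\Delta r$ is not bounded, I would restrict to the sub-regime where it is, or else interpret $\Theta(\cdot)$ as absorbing the subtraction of $\widetilde{\mathsf{Pen}}_\kappa$ in \eqref{eqn:capacity-rate}.

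\textbf{Step 4: conclude.} With these hypotheses verified, \cref{thm:capacity-routeB} gives $\log M=\Theta(d f_\kappa(r_{\min}))-\widetilde{\mathsf{Pen}}_\kappa=\Theta(d f_\kappa(r_{\min}))$. Substituting Step 1 yields \eqref{eqn:double-exp-capacity}, and exponentiating gives the stated form of $M$.
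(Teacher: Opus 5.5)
Your route is the paper's: verify the extra hypotheses of \cref{thm:capacity-routeB}, then use $f_\kappa(r_{\min})=\Theta\bigl(|\kappa|^{-1}e^{2\alpha r_{\min}}/r_{\min}^2\bigr)$ for $\alpha r_{\min}\gg 1$. Steps~1, 2 and~4 match the paper's proof. Step~2 is in fact immediate, since \eqref{eqn:double-exp-snr} is verbatim the right-hand form of the SNR condition in \cref{thm:capacity-routeB}.

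The difference is Step~3. The paper closes it in one line by citing \cref{cor:width-tradeoff}, which asserts without proof that \eqref{eqn:double-exp-width} implies $\widetilde{\mathsf{Pen}}_\kappa=o(d\,f_\kappa(r_{\min}))$. You instead leave it conditional. As a proof of the corollary under its stated hypotheses, that is a genuine gap. Your fallback of reinterpreting $\Theta$ does not repair it: if the penalty is not $o(d\,f_\kappa)$, then $\gamma^{\mathrm H}_\kappa$ can be nonpositive and \cref{thm:capacity-routeB} yields nothing. Your diagnosis is nonetheless correct, and the same gap sits underneath the paper's citation.

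Concretely, take $\alpha r\gg 1$ in the finer bound of \cref{cor:gap-amplification-width}. There $C_g\approx \alpha e^{\alpha r^*_{\max}}/(2r^*_{\max})$ and $g(r^*_{\max})\approx e^{\alpha r^*_{\max}}/(2r^*_{\max})$. Both terms are therefore of order $|\kappa|^{-1}\alpha\, e^{2\alpha r^*_{\max}}\Delta r^*$, where $\Delta r^*=\Delta r+2\boldsymbol{\sigma}\sqrt{2d}$. This gives
\[
\frac{\widetilde{\mathsf{Pen}}_\kappa(r^*_{\min},r^*_{\max})}{d\,f_\kappa(r_{\min})}\;\asymp\;\frac{\alpha\, r_{\min}^2\,\Delta r^*}{d}\; e^{2\alpha(\Delta r+\boldsymbol{\sigma}\sqrt{2d})}.
\]
So there are no stray polynomial factors; those come only from the coarse $r_{\max}^2$ rate. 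Condition \eqref{eqn:double-exp-width} controls the prefactor, but nothing in the hypotheses controls the exponential. For instance, $\Delta r=\alpha^{-1}\log d$ satisfies \eqref{eqn:double-exp-width} for fixed $r_{\min}$, yet the ratio grows at least like $r_{\min}^2\, d\log d$.

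Your claim that the inflation $\boldsymbol{\sigma}\sqrt{2d}$ is controlled by \eqref{eqn:noise-bound} and \eqref{eqn:double-exp-snr} also fails. The bound \eqref{eqn:noise-bound} gives only $\alpha\boldsymbol{\sigma}\sqrt{2d}\le \alpha r_{\min}/2$, which is large precisely in this regime, and \eqref{eqn:double-exp-snr} does not involve $d$.

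For the same reason, your Step~1 remark that $f_\kappa(r^*_{\min})$ agrees with $f_\kappa(r_{\min})$ up to constants is false in general. For $r^*_{\min}=r_{\min}/2$ the ratio is about $4e^{-\alpha r_{\min}}$. You only use $f_\kappa(r^*_{\min})\le f_\kappa(r_{\min})$, which is fine. Note also that the paper's proof checks $\widetilde{\mathsf{Pen}}_\kappa(r_{\min},r_{\max})$, whereas \cref{thm:capacity-routeB} needs it at the inflated radii.

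To make your argument rigorous, add a hypothesis such as $\alpha(\Delta r+\boldsymbol{\sigma}\sqrt{2d})=O(1)$, together with \eqref{eqn:double-exp-width} for $\Delta r^*$. Under it your Steps~1--4 go through. Without it, neither your argument nor the appeal to \cref{cor:width-tradeoff} establishes the claim.
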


\begin{proof}
The width condition~\eqref{eqn:double-exp-width} ensures
$\widetilde{\mathsf{Pen}}_\kappa(r_{\min},r_{\max})
=o(d\,f_\kappa(r_{\min}))$ by
Corollary~\ref{cor:width-tradeoff}, so the hypotheses of
Theorem~\ref{thm:capacity-routeB} are satisfied. In the large-radius
regime $\alpha r_{\min}\gg 1$,
\[
f_\kappa(r_{\min})
\;=\;\frac{1}{2|\kappa|}\!\left(\frac{\sinh(\alpha r_{\min})}{r_{\min}}\right)^{\!2}
\;=\;\Theta\!\left(\frac{1}{|\kappa|}\,\frac{e^{2\alpha r_{\min}}}{r_{\min}^2}\right).
\]
Substituting into~\eqref{eqn:capacity-rate} and absorbing the penalty
into the lower-order term yields~\eqref{eqn:double-exp-capacity}.
\end{proof}

\begin{remark}
We refer to~\eqref{eqn:double-exp-capacity} as
\emph{double-exponential capacity}: $M$ is exponential in $d$, and the
rate of that exponential is itself exponential in the radius
$r_{\min}$. This radial amplification is a strictly hyperbolic
phenomenon, with no analogue in Euclidean associative memory models
\citep{ramsauer2020hopfield,krotov2020large}, where capacity scales at
most exponentially in $d$ with a curvature-independent rate.
\end{remark}

\begin{remark}
Corollary~\ref{thm:double-exp-capacity} clarifies the role of each
hypothesis. The width condition~\eqref{eqn:double-exp-width} controls
the bridge penalty so that the hyperbolic margin inherits the
sub-Gaussian Chernoff exponent of the Euclidean gap. The SNR
condition~\eqref{eqn:double-exp-snr} ensures that the variance proxy
$K^{\mathrm H}_\kappa=f_\kappa(r^*_{\min})^2 K^{\mathrm E}\boldsymbol{\sigma}^2$ does
not absorb the entire amplification. Together, they pin the capacity
rate to $f_\kappa(r_{\min})$, which carries the double-exponential
growth.
\end{remark}

\subsection{Capacity Regimes}
\label{sec:routeB-regimes}

The capacity rate~\eqref{eqn:capacity-rate} admits several natural
regimes, summarized in Table~\ref{tab:capacity-regimes}.

\begin{table}[h]
\centering
\caption{Capacity scaling regimes under the hyperbolic Chernoff
analysis. Here $\alpha=\sqrt{|\kappa|}$, $r=r_{\min}=r_{\max}$ in the
thin-shell regime, and $\Delta_r=r_{\max}-r_{\min}$ otherwise.}
\label{tab:capacity-regimes}
\renewcommand{\arraystretch}{1.35}
\begin{tabular}{@{}llll@{}}
\toprule
Regime & $\Delta_r$ & $\widetilde{\mathsf{Pen}}_\kappa$ & $\log M$ \\
\midrule
Thin shell        & $0$
                  & $0$
                  & $\Theta\!\bigl(d\,f_\kappa(r)\bigr)
                       =\Theta\!\bigl(\tfrac{d}{|\kappa|}\,\tfrac{e^{2\alpha r}}{r^2}\bigr)$ \\
Narrow shell      & $O(1/d)$
                  & $O(1)$
                  & $\Theta\!\bigl(d\,f_\kappa(r_{\min})\bigr)$ \\
Sub-critical shell& $o\!\bigl(\tfrac{d}{\alpha r_{\min}^2}\bigr)$
                  & $o(d\,f_\kappa)$
                  & $\Theta\!\bigl(d\,f_\kappa(r_{\min})\bigr)$ \\
\midrule
Super-critical shell& $\Omega\!\bigl(\tfrac{d}{\alpha r_{\min}^2}\bigr)$
                    & $\Omega(d\,f_\kappa)$
                    & \textbf{Collapse} \\
\bottomrule
\end{tabular}
\end{table}

We now isolate the two regimes of primary interest.

\begin{corollary}[Thin shell]
\label{cor:thin-shell}
Suppose $r_{\min}=r_{\max}=r$, so that $\Delta_r=0$ and
$\widetilde{\mathsf{Pen}}_\kappa=0$. Then under the conditions of
Theorem~\ref{thm:capacity-routeB},
\begin{equation}\label{eqn:thin-shell-capacity}
\log M
\;=\;\Theta\!\left(\frac{d}{|\kappa|}\!\left(\frac{\sinh(\alpha r)}{r}\right)^{\!2}\right).
\end{equation}
In the small-radius regime $\alpha r\ll 1$, this reduces to
$\log M=\Theta(d)$. In the large-radius regime $\alpha r\gg 1$,
\[
\log M\;=\;\Theta\!\left(\frac{d}{|\kappa|}\,\frac{e^{2\alpha r}}{r^2}\right),
\]
exhibiting double-exponential radial amplification:
$M=\exp\!\bigl(\Theta(d\,e^{2\alpha r}/(|\kappa|r^2))\bigr)$.
\end{corollary}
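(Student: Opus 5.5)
The thin-shell statement follows from Theorem~\ref{thm:capacity-routeB} once we check that its penalty hypothesis holds trivially when $r_{\min}=r_{\max}=r$. The plan has three steps: verify the vanishing penalty, substitute into the capacity rate \eqref{eqn:capacity-rate}, then expand $f_\kappa$ in the two radial regimes.

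\textbf{Step 1 (vanishing penalty).} We use the explicit bound in Corollary~\ref{cor:gap-amplification-width}. Every term in the bound on $\widetilde{\mathsf{Pen}}_\kappa(r_{\min},r_{\max})$ carries a factor $\Delta r$, so $\Delta r=0$ gives $\widetilde{\mathsf{Pen}}_\kappa=0$.

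One could also see this directly from Lemma~\ref{lem:lorentz-ip-diff-general}. With $\|b\|_2=\|c\|_2=r$, both the $\cosh$ difference and $g(\|b\|_2)-g(\|c\|_2)$ vanish, as does $\|c\|_2^2-\|b\|_2^2$. The exact decomposition then reduces to the main term with $g(\|b\|_2)=g(r)$.

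There is one subtlety. In Proposition~\ref{prop:hyperbolic-chernoff} the bridge is applied with the inflated radii $[r^*_{\min},r^*_{\max}]$, because the query norm $\|\mathtt v\|_2$ is perturbed by noise. However, the query plays the role of $a$, and the penalty in \eqref{eq:penalty-1}--\eqref{eq:penalty-2} depends on $\|a\|_2$ only through multiplicative prefactors. The differences that make it nonzero involve only $\|b\|_2,\|c\|_2$, i.e.\ the stored patterns. So the penalty still vanishes exactly, and the hypothesis $\widetilde{\mathsf{Pen}}_\kappa=o(d\,f_\kappa(r))$ holds trivially.

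\textbf{Step 2 (substitution).} Theorem~\ref{thm:capacity-routeB} then gives $\log M=\Theta(d\,f_\kappa(r))$ with $f_\kappa(r)=\frac{1}{2|\kappa|}(\sinh(\alpha r)/r)^2$, which is \eqref{eqn:thin-shell-capacity}.

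I need to make sure replacing $f_\kappa(r^*_{\min})$ by $f_\kappa(r)$ only costs a constant. Under \eqref{eqn:noise-bound} we have $r^*_{\min}\ge r/2$, which bounds the ratio in one direction. The theorem's proof already absorbs this replacement into the $\Theta$, so I will accept it as given.

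\textbf{Step 3 (asymptotics).}
\begin{itemize}
\item \emph{Small radius, $\alpha r\ll 1$.} Using $\sinh(x)=x(1+O(x^2))$ gives $\sinh(\alpha r)/r=\alpha(1+O(\alpha^2r^2))$. Hence $f_\kappa(r)=\frac{\alpha^2}{2|\kappa|}(1+o(1))=\frac12(1+o(1))$, and $\log M=\Theta(d)$.
\item \emph{Large radius, $\alpha r\gg1$.} Using $\sinh(\alpha r)=\tfrac12 e^{\alpha r}(1-e^{-2\alpha r})$ gives $f_\kappa(r)=\frac{e^{2\alpha r}}{8|\kappa|r^2}(1+o(1))$. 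Hence $\log M=\Theta\bigl(\frac{d}{|\kappa|}\frac{e^{2\alpha r}}{r^2}\bigr)$, and exponentiating gives the stated form of $M$.
\end{itemize}

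\textbf{Main obstacle.} The only delicate point is the claim in Step~1 that the penalty stays exactly zero even though the query radius is inflated by noise. This needs the structural observation that the penalty's dependence on the query is purely multiplicative. The rest is direct substitution and Taylor or exponential asymptotics.
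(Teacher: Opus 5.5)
Your proposal is correct and is essentially the paper's own implicit argument: specialize Theorem~\ref{thm:capacity-routeB} with $\widetilde{\mathsf{Pen}}_\kappa=0$, then expand $f_\kappa(r)=\frac{1}{2|\kappa|}\bigl(\sinh(\alpha r)/r\bigr)^2$ in the two radial regimes. Your observation that the penalty stays exactly zero under the noise-inflated query radius (since $\|a\|_2$ enters $\mathsf{Pen}_\kappa(a;b,c)$ only through factors multiplying differences in $\|b\|_2,\|c\|_2$) is a real tightening the paper glosses over, because Proposition~\ref{prop:hyperbolic-chernoff} formally uses $\widetilde{\mathsf{Pen}}_\kappa(r^*_{\min},r^*_{\max})$ with $r^*_{\max}>r^*_{\min}$; one caution is that $r^*_{\min}\ge r/2$ alone only gives $f_\kappa(r^*_{\min})/f_\kappa(r)\ge\mathrm{sech}^2(\alpha r/2)$, which is not a uniform constant when $\alpha r\gg 1$, so replacing $f_\kappa(r^*_{\min})$ by $f_\kappa(r)$ genuinely rests on the theorem's own absorption into $\Theta$, exactly as you defer to it.
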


\begin{corollary}[Width--amplification tradeoff]
\label{cor:width-tradeoff}
Exponential-in-$d$ capacity persists if and only if
$\widetilde{\mathsf{Pen}}_\kappa(r_{\min},r_{\max})=o(d\,f_\kappa(r_{\min}))$.
In the large-radius regime $\alpha r_{\min}\gg 1$, this condition is
implied by
\begin{equation}\label{eqn:width-threshold}
\Delta_r\;=\;o\!\left(\frac{d}{\alpha\,r_{\min}^2}\right),
\end{equation}
showing that admissible radial spread is at most of order
$d/(\alpha r_{\min}^2)$ before the penalty term dominates and capacity
collapses.
\end{corollary}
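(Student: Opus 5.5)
The statement to prove is Corollary~\ref{cor:width-tradeoff}. The plan has two parts. For the ``if and only if'', I will read Theorem~\ref{thm:capacity-routeB} as both a sufficiency and a (heuristic) necessity statement. Its proof shows that the admissible rate $\log M = c\,\gamma^{\mathrm H}_\kappa d$ with $c\in(0,1)$ fixed is available exactly when $\gamma^{\mathrm H}_\kappa = f_\kappa(r^*_{\min})\gamma^{\mathrm E} - \widetilde{\mathsf{Pen}}_\kappa/d$ stays of order $f_\kappa(r_{\min})$.

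\emph{Sufficiency.} If $\widetilde{\mathsf{Pen}}_\kappa = o(d f_\kappa(r_{\min}))$, then $\gamma^{\mathrm H}_\kappa \sim f_\kappa(r_{\min})\gamma^{\mathrm E}$. Here I use the noise bound \eqref{eqn:noise-bound} and the monotonicity of $f_\kappa$ to compare $f_\kappa(r^*_{\min})$ with $f_\kappa(r_{\min})$ up to constants. The capacity $\log M = \Theta(d f_\kappa(r_{\min}))$ then follows directly from Theorem~\ref{thm:capacity-routeB}.

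\emph{Necessity, within the bound.} If $\widetilde{\mathsf{Pen}}_\kappa \ge c' d f_\kappa(r_{\min})$ with $c'\ge \gamma^{\mathrm E}$, then the lower bound $\gamma^{\mathrm H}_\kappa$ on the hyperbolic margin becomes nonpositive. In that case Proposition~\ref{prop:hyperbolic-chernoff} provides no admissible $\Gamma>0$, so the guarantee ``collapses''. I will state this necessity explicitly as a property of the Chernoff analysis, not as an information-theoretic lower bound. This matches the ``Collapse'' entry of Table~\ref{tab:capacity-regimes}.

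For the sufficient condition \eqref{eqn:width-threshold}, I will use the explicit bound in Corollary~\ref{cor:gap-amplification-width}:
\[
\widetilde{\mathsf{Pen}}_\kappa \le \tfrac{1}{|\kappa|}\bigl[\alpha\cosh^2(\alpha r_{\max})\Delta r + C_g\, g(r_{\max})\, r_{\max}^2\,\Delta r\bigr].
\]
I will compare it term by term with $d f_\kappa(r_{\min}) = \tfrac{d}{2|\kappa|}\, g(r_{\min})^2$. In the regime $\alpha r_{\min}\gg1$, I use $\cosh(\alpha r)\asymp e^{\alpha r}$, $g(r)\asymp e^{\alpha r}/r$, and $g'(r)\asymp \alpha e^{\alpha r}/r$, so that $C_g\lesssim \alpha e^{\alpha r_{\max}}/r_{\min}$. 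Each term of the penalty is then of order $\alpha\, e^{2\alpha r_{\max}}\, r_{\max}\,\Delta r$, up to factors of $r_{\max}/r_{\min}$. The ratio with $d f_\kappa(r_{\min})$ is therefore
\[
\frac{\alpha\, r_{\min}^2\,\Delta r}{d}\cdot e^{2\alpha\Delta r}\cdot \mathrm{poly}(r_{\max}/r_{\min}).
\]

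The main obstacle is the factor $e^{2\alpha \Delta r}$, together with the polynomial ratio $r_{\max}/r_{\min}$. The hypothesis $\Delta r = o(d/(\alpha r_{\min}^2))$ by itself does not force $\alpha\Delta r = O(1)$. I will first try to avoid the issue by sharpening the penalty. Expanding $\cosh(\alpha r_b)-\cosh(\alpha r_c)$ and $g(r_b)-g(r_c)$ by the mean value theorem gives factors $e^{\alpha r_{\max}}\cdot\alpha\Delta r$ multiplied by $e^{\alpha r_{\max}}$, not by $e^{\alpha r_{\min}}$, so the exponential mismatch seems intrinsic. The fallback is to state the implication under the additional mild proviso $\alpha\Delta r = O(1)$, which keeps $r_{\max}/r_{\min}=O(1)$ and $e^{2\alpha\Delta r}=O(1)$. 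Under that proviso the ratio is $O(\alpha r_{\min}^2\Delta r/d) = o(1)$, as claimed. I will note that this proviso is automatically consistent with the shell regimes of Table~\ref{tab:capacity-regimes} whenever $d = O(r_{\min}^2)$.
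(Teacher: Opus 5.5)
Your route is the one the paper implicitly relies on. The paper gives no separate proof of this corollary. The threshold $d/(\alpha r_{\min}^2)$ is what the bound of Corollary~\ref{cor:gap-amplification-width} gives against $d f_\kappa(r_{\min})\asymp d\,e^{2\alpha r_{\min}}/(|\kappa| r_{\min}^2)$ once $e^{2\alpha r_{\max}}$ is silently replaced by $e^{2\alpha r_{\min}}$. That replacement is exactly your proviso $\alpha\Delta_r=O(1)$, assumed tacitly.

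Your diagnosis is correct, and the obstruction is intrinsic rather than an artifact of the bound. In Lemma~\ref{lem:lorentz-ip-diff-general}, take $\|a\|_2=\|b\|_2=r_{\max}$ and $\|c\|_2=r_{\min}$, with $a$ antiparallel to $c$. Then, exactly,
\[
\mathsf{Pen}_\kappa(a;b,c)=\frac{1}{|\kappa|}\Bigl[1+\frac{(r_{\max}+r_{\min})^2}{2r_{\max}^2}\sinh^2(\alpha r_{\max})-\cosh\bigl(\alpha(r_{\max}+r_{\min})\bigr)\Bigr].
\]
This vanishes when $r_{\max}=r_{\min}$, as it should, and it is $\ge(1-o(1))\,e^{2\alpha r_{\max}}/(8|\kappa|)$ once $\alpha\Delta_r\to\infty$. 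Hence $\widetilde{\mathsf{Pen}}_\kappa/(d f_\kappa(r_{\min}))\gtrsim r_{\min}^2e^{2\alpha\Delta_r}/d$.

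Now take $r_{\min}$ fixed with $\alpha r_{\min}\gg1$ and $\Delta_r=\sqrt d$. Then \eqref{eqn:width-threshold} holds, but the penalty is not $o(d f_\kappa(r_{\min}))$. So the second claim is false as stated, and your sharpening attempt could not have succeeded.

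What your term-by-term comparison actually proves is the sufficient condition $\alpha r_{\min}^2\Delta_r\,e^{2\alpha\Delta_r}(r_{\max}/r_{\min})=o(d)$. This reduces to \eqref{eqn:width-threshold} under $\alpha\Delta_r=O(1)$, and that argument is correct. Your closing remark is also right: for $d=O(r_{\min}^2)$ the width condition already forces $\alpha\Delta_r=o(1)$. But in the theorem's regime, $d\to\infty$ at fixed $r_{\min}$, the proviso is a genuine extra hypothesis. One minor slip: the bare $r_{\max}$ in ``order $\alpha e^{2\alpha r_{\max}}r_{\max}\Delta r$'' should not be there; your displayed ratio is the correct one.

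There are two smaller corrections.

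First, your necessity argument covers only $\widetilde{\mathsf{Pen}}_\kappa\ge\gamma^{\mathrm E}d f_\kappa(r_{\min})$. Failure of $o(d f_\kappa(r_{\min}))$ also includes $\widetilde{\mathsf{Pen}}_\kappa\approx c'd f_\kappa(r_{\min})$ with $0<c'<\gamma^{\mathrm E}$. In that case, under the same identification of $f_\kappa(r^*_{\min})$ with $f_\kappa(r_{\min})$ that the paper uses, $\gamma^{\mathrm H}_\kappa\approx(\gamma^{\mathrm E}-c')f_\kappa(r_{\min})>0$. The proof of Theorem~\ref{thm:capacity-routeB} then still gives $\log M=\Theta(d f_\kappa(r_{\min}))$. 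So even as a statement about the Chernoff analysis, the literal ``only if'' is false. State a dichotomy with the threshold at $\gamma^{\mathrm E}d f_\kappa(r_{\min})$, not an equivalence with $o(\cdot)$.

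Second, in the sufficiency step, the noise bound \eqref{eqn:noise-bound} plus monotonicity of $f_\kappa$ does not give $f_\kappa(r^*_{\min})\asymp f_\kappa(r_{\min})$ in the large-radius regime. The bound $r^*_{\min}\ge r_{\min}/2$ only yields $f_\kappa(r^*_{\min})\ge f_\kappa(r_{\min})/\cosh^2(\alpha r_{\min}/2)\approx 4e^{-\alpha r_{\min}}f_\kappa(r_{\min})$. Comparability up to constants needs $\alpha\boldsymbol{\sigma}\sqrt d=O(1)$.

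Relatedly, the corollary's hypothesis concerns $\widetilde{\mathsf{Pen}}_\kappa(r_{\min},r_{\max})$, whereas $\gamma^{\mathrm H}_\kappa$ and Theorem~\ref{thm:capacity-routeB} involve $\widetilde{\mathsf{Pen}}_\kappa(r^*_{\min},r^*_{\max})$. The paper makes both identifications without comment, so you can lean on Theorem~\ref{thm:capacity-routeB} as stated. Just do not claim that monotonicity supplies them.
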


\begin{remark}
The SNR condition~\eqref{eqn:snr-condition} admits a natural
interpretation as a noise-to-curvature balance: in order for the
hyperbolic amplification factor $f_\kappa$ to transfer fully into
capacity, the noise $\boldsymbol{\sigma}$ must shrink at most inverse-exponentially
with $\alpha r_{\min}$. Combined with the milder condition
\eqref{eqn:noise-bound}, this is the binding constraint as
$\alpha r_{\min}\to\infty$.
\end{remark}

\subsection{Derivations}

\subsubsection{Derivation of Posterior}\label{der:post}
By Bayes' rule, the posterior over memory index $\mu$ is
\[
p(\mu \mid \bn) =
\frac{p(\bn \mid \mu)\, p(\mu)}{p(\bn)}.
\]
Taking logarithms,
\[
\log p(\mu \mid \bn) = \log p(\bn \mid \mu)
+
\log p(\mu) - \log p(\bn).
\]
Assuming conditional independence of the spike counts across neurons given $\mu$, the Poisson log-likelihood gives
\[
\log p(\bn \mid \mu)
=
\sum_{i=1}^N
\bigl(
n_i \log \lambda_i^\mu
-
\lambda_i^\mu \Delta t
\bigr)
+
C_1,
\]
where $C_1$ collects terms (e.g.\ factorials $\log(n_i!)$) that do not depend on $\mu$ when the rates $\{\lambda_i^\mu\}$ are fixed.
Hence
\[
\log p(\mu \mid \bn)
=
\sum_{i=1}^N
\bigl(
n_i \log \lambda_i^\mu
-
\lambda_i^\mu \Delta t
\bigr)
+
\log p(\mu)
-
\log p(\bn)
+
C_1.
\]
Equivalently, absorbing the $\mu$-independent factor $e^{C_1}/p(\bn)$ into normalization,
\[
p(\mu \mid \bn)
\propto
\exp\Bigl(
\sum_{i=1}^N
\bigl(
n_i \log \lambda_i^\mu
-
\lambda_i^\mu \Delta t
\bigr)
+
\log p(\mu)
\Bigr).
\]

\subsubsection{Derivation of \eqref{eqn:Bayes-optimal-discrete}}\label{der:softmax}

\begin{proof}
Write the log-posterior scores
\[
h_\mu(\bn) \coloneqq \log p(\bn \mid \mu) + \log p(\mu),
\]
so that $\log p(\mu \mid \bn) = h_\mu(\bn) - \log p(\bn)$.
Then
\begin{align*}
\exp\bigl( \log p(\mu \mid \bn) \bigr)
&=
\exp\bigl(
h_\mu(\bn) - \log p(\bn)
\bigr)
\\
&=
\exp\bigl( h_\mu(\bn) \bigr)\, \exp\bigl( - \log p(\bn)\bigr)
\\
&=
\frac{\exp\bigl(h_\mu(\bn)\bigr)}{\exp\bigl(\log p(\bn)\bigr)}
\\
&=
\frac{\exp\bigl(h_\mu(\bn)\bigr)}{p(\bn)}.
\end{align*}

Summing over all $\mu$ and using $\sum_\nu p(\nu \mid \bn) = 1$,
\[
1
=
\sum_{\nu}
\frac{\exp\bigl(h_\nu(\bn)\bigr)}{p(\bn)},
\]
so $p(\bn) = \sum_{\nu}\exp(h_\nu(\bn))$.
Finally, let $\bh \coloneqq \bigl(h_1(\bn),\ldots,h_M(\bn)\bigr)$. Then
\[
p(\mu \mid \bn) = \text{softmax}_\mu(\bh)
=
\frac{\exp\bigl(h_\mu(\bn)\bigr)}{\sum_{\nu=1}^M \exp\bigl(h_\nu(\bn)\bigr)}.
\]
\end{proof}

\clearpage

\section{Simulation Details}\label{appendix:sim-details}

\subsection{Pattern Completion}\label{appendix:detail-pc}

\paragraph{Baseline Implementation.}
We implement two associative memory models, the modern Hopfield networks \cite{ramsauer2020hopfield} and dense associative memory \cite{krotov2020large}.
For MHN, its update rule is 
\[
v^{\mathrm{(new)}}
\gets 
\sum_{\mu=1}^M
\mathrm{softmax}_\mu
\left(
\beta 
v^\top\xi_1, \cdots, 
\beta 
v^\top\xi_M
\right).
\]
For DAM, its update rule is 
\[
v^{\mathrm{(new)}}
\gets 
\sum_{\mu=1}^M
w_\mu \cdot \xi_\mu,
\quad 
w_\mu = ( \beta v^\top\xi_\mu )^n.
\]
To compute the posterior for DAM, we have 
\[
p(\mu \mid v) 
=
\frac{w_\mu}{\sum_{\nu} w_\nu }.
\]

\paragraph{Sampling.}
For synthetic memory patterns, we uniformly sample them inside a ball with radius $r_{\max}$.
To do so, we adopt a common approach by sampling the point direction and radius separately.

\paragraph{Preprocessing.}
For data preprocessing, we only use $\mathtt{torch.ToTensor}$ to normalize pixels to $[0,1]$.
Next, for real-world images, we apply PCA to reduce their dimensionality to some $d$, we then rescale the memory patterns with $r_{\max}$.
For KFM, we project them to the hyperboloid model by first concatenating an extra dimension to the first entry, with value $1$ (totangent), then apply the exponential map.

\paragraph{Hyperparameters.}
The hyperparameters are listed in \cref{table:hyper-pc}.
Specifically, the major difference between the synthetic data and real-world data is we use $\beta = 1.0$ for synthetic data, and $\beta=10.0$ for real-world data.
For all results, we report the mean and standard deviation across $10$ runs.

\begin{table}[h]
        \centering
        \caption{Hyperparameter used in the pattern completion task.
        }
        \begin{tabular}{l*{3}{c}}
        \toprule
            \bf{parameter} & \multicolumn{3}{c}{\bf{Value}}\\ 
            \midrule
            \bf{Dataset} & MNIST & CIFAR10 & Synthetic\\ 
            max. update steps & 64 & 64 & 64  \\
            $M$ & $[10, 1000]$ & $[10, 1000]$ & $[10, 1000]$ \\
            $d$ & $\{ 10, 20, 100 \}$  & $\{ 10, 20, 100 \}$  & $\{ 10, 20, 100 \}$  \\
            $\beta$ & $1$ & $10$ & $10$ \\
            $\epsilon$ & $0.01$ & $0.01$ & $0.01$ \\
            $r_{\max}$ & $3$ & $3$ & $3$ \\
            DAM order & $10$ & $10$ & $10$ \\
            \bottomrule
        \end{tabular}
        \label{table:hyper-pc}
    \end{table}

\subsection{Multiple Instance Learning}\label{app:mil-details}
\annote{
Here we briefly review the concept of multiple instance learning (MIL). MIL is a variant of supervised learning in which the training set consists of labeled bags containing multiple instances. 
The objective is to predict bag-level labels from the instances within each bag, making MIL particularly suitable for settings where instance-level annotation is difficult or impractical but bag-level labels are available. 
Applications include medical imaging, where bags correspond to images and instances to image patches, and document classification, where bags correspond to documents and instances to words or sentences.
The statistics of the benchmark dataset we used in the paper is in \cref{table:statistics}.
\begin{table}[H]
        \centering
        \caption{Statistics of MIL benchmark datasets
        }
        \vspace*{0.05truein} 
        \begin{tabular}{l*{5}{c}}
        \toprule
            Name & Instances & Features & Bags &$+$bags &$-$bags\\ 
            \midrule
            Elephant
            & 1391 &230 & 200 &100 &100\\
            Fox
            & 1302 &230 & 200 &100 &100\\
            Tiger
            & 1220 &230 & 200 &100 &100\\
            \bottomrule
        \end{tabular}
        \label{table:statistics}
    \end{table} 
}

\subsection{Machine Learning Layers}\label{appendix:ml-layers}

\paragraph{Layer Definition}

The hyperbolic attention layer maps Euclidean inputs to a Hadamard manifold $\pM = \mathbb{H}^d_{\kappa}$ for associative recall and returns the Euclidean mappings. Let $R \in \mathbb{R}^{S \times d}$ and $Y \in \mathbb{R}^{M \times d}$ denote the state and memory matrices, with learned weights $W_Q, W_K, W_V \in \mathbb{R}^{d \times d}$. Inputs are projected into $\pM$ via the exponential map $\text{Exp}_{\bm{o}}$ at the origin $\bm{o} = [\sqrt{k}, 0]$:
\begin{equation}\label{eqn:brady-eqn1}
\begin{aligned}
    \bm{Q} &= \text{Exp}_{o}(R W_Q), \\
    \bm{K} &= \text{Exp}_{o}(Y W_K), \\
    \bm{V} &= \text{Exp}_{o}(Y W_K W_V).
\end{aligned}
\end{equation}
The attention weights are determined by the Minkowski inner product $\langle \bm{q}_i, \bm{k}_j \rangle_L$ such that $\alpha_{ij} = \text{softmax}_j \left( -\beta \langle \bm{q}_i, \bm{k}_j \rangle_L \right)$. 

Then, the manifold output $\bm{Z} \in \pM^{S \times d}$ is computed via Karcher flow, approximating the weighted Fr\'echet Mean of $\bm{V}$ through the update:
\begin{equation}\label{eqn:brady-eqn2}
    \bm{z}_i = \text{Exp}_{\bm{q}_i} \left( \sum_{j=1}^M \alpha_{ij} \text{Exp}^{-1}_{\bm{q}_i}(\bm{v}_j) \right).
\end{equation}
where $\bm{Z}=[\bm{z}_1, \bm{z}_2,\cdots,\bm{z}_S]$.

\paragraph{Types of layers}
The general hyperbolic attention mechanism described above can be specialized into three distinct layer types by specifying different sources of the state $R$ and memory $Y$. These layers are the hyperbolic analogs of the \texttt{Hopfield}, \texttt{HopfieldPooling}, and \texttt{HopfieldLayer} modules described in \cite{ramsauer2020hopfield}.

\begin{description}
    \item[Karcher Flow Attention:] This is the implentation of \eqref{eqn:brady-eqn2}. Both $R$ and $Y$ are dynamic inputs from preceding layers or another input source. This configuration performs hyperbolic self-attention or cross-attention between two sets of vectors.
    
    \item[Karcher Flow Pooling:] In this layer, patterns propagate via the memory patterns $Y$. The stored memories $Y$ are summarized through queries by the static patterns $\Xi \in \mathbb{R}^{S \times d}$ . We define $\alpha_{ij} = \text{softmax}_j \left( -\beta \langle \bm{\xi}_i, \bm{k}_j \rangle_L \right)$ and the update rule for this layer is given by:
    \begin{equation}
        \bm{z}_i = \text{Exp}_{\bm{\xi}_i} \left( \sum_{j=1}^M \alpha_{ij} \text{Exp}^{-1}_{\bm{\xi}_i}(\bm{v}_j) \right).
    \end{equation}
    where $\bm{\Xi} = \text{Exp}_{\bm{o}}(\Xi)$.

    \item[Karcher Flow Layer:] In this layer, patterns propagate via the state patterns $R$. Memories are fixed and represented by the weight matrix $W_K \in \mathbb{R}^{M \times d}$. The input $R$ acts as the query set. We define $\alpha_{ij} = \text{softmax}_j \left( -\beta \langle \bm{r}_i, (\bm{W}_K)_j \rangle_L \right)$ and the update rule for this layer is given by:
    \begin{equation}
        \bm{z}_i = \text{Exp}_{\bm{r}_i} \left( \sum_{j=1}^M \alpha_{ij} \text{Exp}^{-1}_{\bm{r}_i}((\bm{W}_V)_j) \right).
    \end{equation}
    where $\bm{R} = \text{Exp}_{\bm{o}}(R)$, $\bm{W}_K = \text{Exp}_{\bm{o}}(W_K)$, and $\bm{W}_V = \text{Exp}_{\bm{o}}(W_V)$.
\end{description}



\begin{table}[h]
    \centering
    \caption{Hyperparameters used in attention tasks.}
    \begin{tabular}{lcc}
    \toprule
        \bf{Parameter} & \bf{MNIST} & \bf{MIL} \\ 
        \midrule
        Optimizer & AdamW & AdamW \\
        Learning Iteration $N$ & 14 & 100 \\
        Batch Size & 64 & 16 \\
        Update Rule Iteration & 1 & 1 \\
        Learning Rate & $0.001$ & $0.001$ \\
        Learning Rate Decay ($\gamma$) & $0.96$ & $0.96$ \\
        LR Scheduler & Step Decay & Step Decay \\
        Hidden Dimension $d$ & $\{4, 8, 32\}$ & $128$ \\
        Scaling Factor ($\beta$) & $\frac{1}{\sqrt{d}}$ & $\frac{1}{\sqrt{128}}$ \\
        Bag Dropout & --- & $0.5$ \\
        \bottomrule
    \end{tabular}
    \label{table:ml-params}
\end{table}


\clearpage

\section{Additional Simulations}\label{appendix:add-sim}

\subsection{Pattern Completion}\label{appendix:add-pc}

\annote{
Here we further investigate the case of how increasing $r_{\max}$ would help memory storage (pattern separation).
Specifically, we now evaluate this property on real world datasets.
We follow the style in Figure~\ref{fig:placeholder}(b) and plot the model performance under different values of $r_{\max}$ in the same subplot.
We observe that not only did KFM performs the best when $d=10$.
We can also observe that, on the tail of the performance curve (when $M$ is large), only KFM is able to gain notable performance boost.
Meanwhile, other baseline models still shows a rapid recall rate decay when $M$ is close to $1000$.
}

\annote{
\begin{figure}[H]
    \centering
    \includegraphics[width=\linewidth]{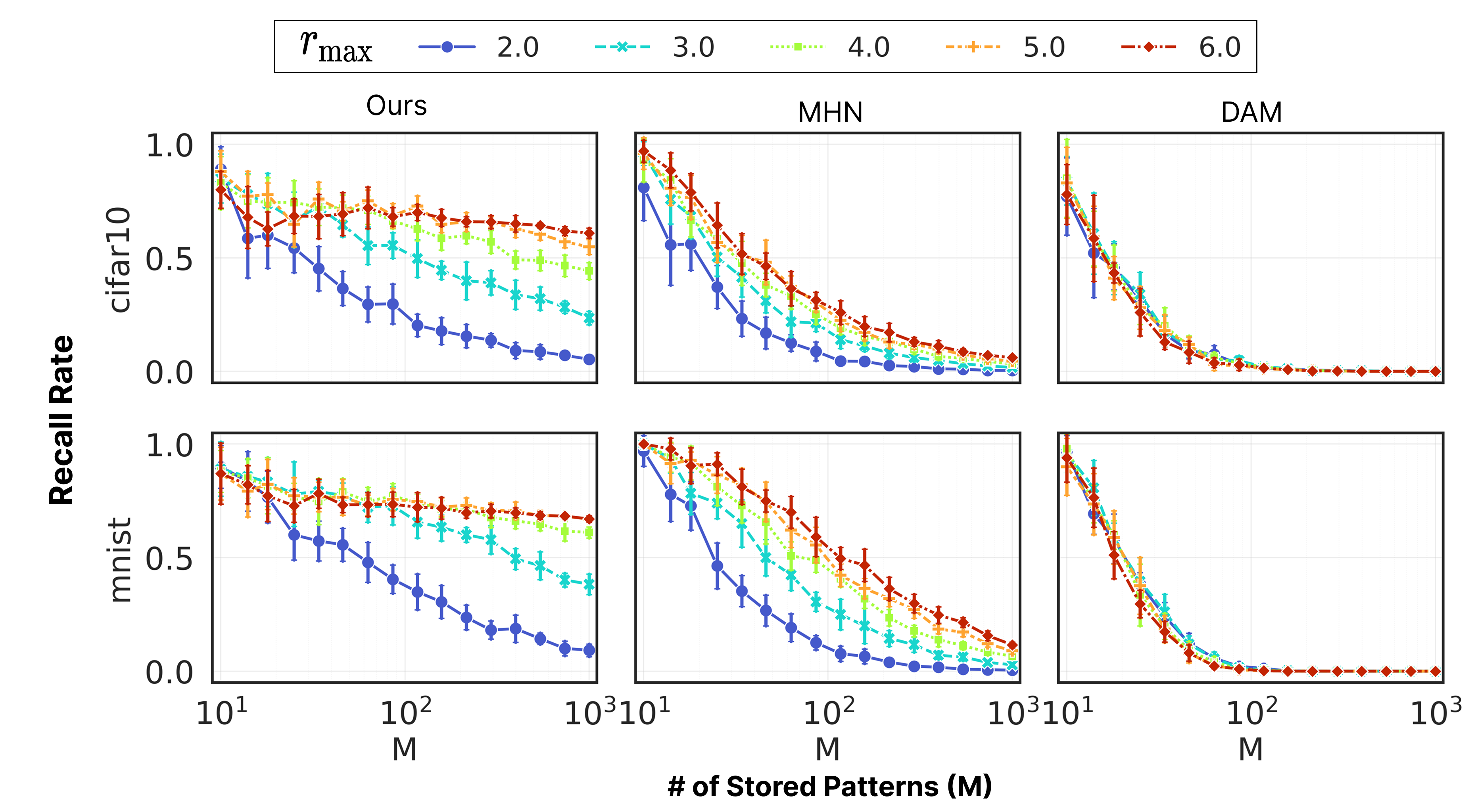}
    \caption{
    \textbf{Recall success rate under different values of $r_{\max}$.}
    \textbf{Columns left-to-right: KFM, MHN, DAM.}
To further validate our double-exponential capacity result in $r_{\max}$, we perform pattern completion on different datasets, rescaling the images to different $r_{\max}$.
We observe that KFM achieves the best recall rate when $M$ is large.
In particular, at $M = 1000$, KFM is the only model with substantial capacity improvement.
Here we set the PCA dimension to $10$ for all models and datasets.
}
    \label{fig:d10-rmax}
\end{figure}
}

\subsection{Hyperbolicity Simulation}\label{app:hyper-sim}

\annote{
Here we empirically compute the hyperbolicity constant $\delta$ when the neural representation is generated from different tuning curve families.
Specifically, we define 5 tuning curves under different place field size distributions, and see how their empirical $\delta$ scales with respect to the size of the environment.
We define their probability density functions and mean values as follows.
Specifically, we choose the parameters carefully so that all distributions has the same mean values.
The results are in \cref{fig:emp-hyp}.
We observe that both the exponential distribution and the log-normal distributions remain statistically hyperbolic when $L$ increases, which are the two distributions reported in \cite{zhang2023hippocampal}.
Both constant and uniform distributions are not able to remain statistically hyperbolic as their delta grows significantly with $L$.
Interestingly, the half-normal distribution is able to also maintain low $\delta$, but is slightly less hyperbolic than the exponential and log-normal distributions.
}

\textbf{Exponential.}
$$
X \sim \mathrm{Exp}(\lambda=1), \quad f_X(x) =
\begin{cases}
e^{-x}, & x \ge 0 \\
0, & x < 0
\end{cases},
\quad
\mathbb{E}[X] = 1.
$$
\textbf{Log-normal.}
$$
X \sim \mathrm{LogNormal}(\mu=-0.5,\sigma=1),
\quad
f_X(x)=
\frac{1}{x\sqrt{2\pi}}
\exp\!\left(
-\frac{(\ln x + 0.5)^2}{2}
\right),
\qquad x>0
,
\quad
\mathbb{E}[X]
=
e^{\mu+\sigma^2/2}
=
1.
$$
\textbf{Uniform.}
$$
X \sim \mathrm{Uniform}(0.2,1.8),
\quad
f_X(x)=
\begin{cases}
\frac{1}{1.6}, & 0.2 \le x \le 1.8 \\
0, & \text{otherwise}
\end{cases}
,
\quad
\mathbb{E}[X]
=
\frac{0.2+1.8}{2}
=
1.
$$

\textbf{Constant.}
$$
X = 1,
\quad f_X(x)=Dirac(x-1),
$$
where $Dirac$ denotes the Dirac delta distribution.

\annote{
\textbf{Half-normal.}
$$
X \sim \mathrm{HalfNormal}\!\left(\sigma=\sqrt{\frac{\pi}{2}}\right),
\quad
f_X(x)
=
\sqrt{\frac{2}{\pi\sigma^2}}
\exp\!\left(
-\frac{x^2}{2\sigma^2}
\right),
\qquad x \ge 0
,\quad
\sigma=\sqrt{\frac{\pi}{2}},
\quad
\mathbb{E}[X]
=
\sigma\sqrt{\frac{2}{\pi}}
=
1.
$$}

\annote{
\begin{figure}[H]
    \centering
    \includegraphics[width=\linewidth]{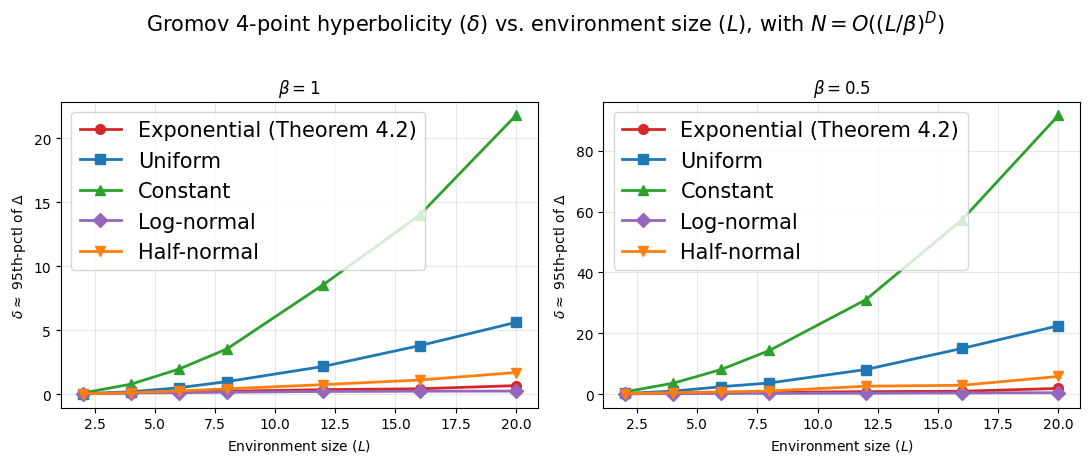}
    \caption{
    \textbf{Empirical hyperbolicity under different place field size distributions.}
    We observe that both the exponential distribution and the log-normal distributions remain statistically hyperbolic when $L$ increases, which are the two distributions reported in \cite{zhang2023hippocampal}.
    Both constant and uniform distributions are not able to remain statistically hyperbolic as their delta grows significantly with $L$.
    Interestingly, the half-normal distribution is able to also maintain low $\delta$, but is slightly less hyperbolic than the exponential and log-normal distributions.
    }
    \label{fig:emp-hyp}
\end{figure}
}

\clearpage\label{appendix}

\end{document}